\documentclass[12pt,a4paper]{article}

\usepackage[T1]{fontenc}
\usepackage{indentfirst,mathrsfs}
\usepackage{amsfonts,amsmath,amssymb,amsthm}
\usepackage{latexsym,amscd,multirow}
\usepackage{algorithm}
\usepackage{float}
\usepackage{algpseudocode}
\usepackage{amsbsy}
\usepackage{xcolor}
\usepackage{diagbox}
\usepackage{colortbl}
\usepackage{multirow}
\usepackage{makecell}
\usepackage{float}
\usepackage{cases}
\usepackage{dsfont}
\usepackage{enumerate}
\usepackage{cite}
\usepackage{pifont}
\usepackage{graphicx}
\usepackage{mathtools}
\usepackage{graphicx}

\usepackage{tikz}
\usetikzlibrary{decorations,arrows}
\usetikzlibrary{decorations.markings}
\usetikzlibrary{decorations.pathreplacing}
\usetikzlibrary {datavisualization.formats.functions}

\usepackage{hyperref}
\hypersetup{hidelinks,
	backref=true,
	pagebackref=true,
	hyperindex=true,
	breaklinks=true,
	colorlinks=true,
	linkcolor=black!60,
	citecolor=black!50,
	urlcolor=blue,
	bookmarks=true,
	bookmarksopen=true
}
\usepackage{url}
\newtheorem{theorem}{Theorem}
\newtheorem{lemma}{Lemma}
\newtheorem{corollary}{Corollary}
\newtheorem{Open problem}{Open Problem}

\newtheorem{prop}{Proposition}

\newtheorem{definition}{Definition}
\newtheorem{example}{Example}
\newtheorem{remark}{Remark}

\newtheorem{conjecture}{Conjecture}

\newcounter{alg}
\newlength{\lefttab}
\newlength{\numberoffset}
\newcommand{\cn}{\color{black}}

\newcommand{\s}{\textbf{s}}
\newcommand{\sn}{\textbf{s}_n}

\newcommand{\sni}{\textbf{s}_n^{\infty}}
\newcommand{\n}{\lfloor \frac{n}{2} \rfloor}
\newcommand{\NN}{\lfloor\frac{3n}{4}\rfloor}
\newcommand{\nup}{\lceil \frac{n}{2} \rceil}

\newcommand{\obr}[2]{\overbrace{#1 \dots #1}^{#2}}

\begin{document}
	\title{
		The structure and enumeration of periodic binary sequences with high nonlinear complexity
	}

		\author{
		Qin Yuan
		\thanks{Q. Yuan and X. Zeng are with
			Key Laboratory of Intelligent Sensing System and Security (Hubei University),
			Ministry of Education, Hubei Key Laboratory of Applied Mathematics, Faculty of
			Mathematics and Statistics, Hubei University, Wuhan, 430062, China. Email:
			\href{mailto:yuanqin2020@aliyun.com}{yuanqin2020@aliyun.com}, \href{mailto:xiangyongzeng@aliyun.com}{xiangyongzeng@aliyun.com} },
		Chunlei Li
		\thanks{C. Li is with the Department of Informatics, University of Bergen, Bergen, N-5020, Norway.
			Email: \href{mailto:chunlei.li@uib.no}{chunlei.li@uib.no}},
		Xiangyong Zeng$^{\,*}$
	}

	\date{}
	\maketitle

\begin{abstract}

Nonlinear complexity, as an important measure for assessing the randomness of sequences, is defined as the length of the shortest feedback shift registers that can
generate a given sequence.
In this paper,
 the structure of $n$-periodic binary sequences with nonlinear complexity larger than or equal to $\NN$ is characterized.
Based on their structure,  an exact enumeration formula for the number of such periodic sequences
is determined.

{\small {\bf Keywords:}} Periodic sequence, nonlinear complexity, enumeration, randomness.
\end{abstract}


\section{Introduction}

Pseudorandom sequences have wide applications in cryptography, communication, and ranging \cite{Golomb2017,Rueppel1986}.
The quality of randomness or unpredictability plays a crucial role in pseudorandom sequences.
 Various complexity measures have been proposed to assess the randomness of sequences \cite{Jansen, Niederreiter1999, Niederreiter2003, Rizo, YanKe, Sunex}.
Among them, linear complexity has been extensively studied \cite{ DingLegendre, Ding, Nian, Tang2005, PKe2013, PKe2018}.
The linear complexity of a sequence is defined as the length of the shortest linear feedback shift registers (LFSRs) that can generate it \cite{MasseyShirlei1996_Crypto}.
By the Berlekamp-Massey algorithm \cite{Massey1969}, a periodic sequence with linear complexity
$L$ can be uniquely reconstructed from any of its $2L$-length subsequences.
To resist this attack,
sequences in cipher systems should exhibit high linear complexity.
However, a sequence with high linear complexity may also be generated by a much shorter FSR with a nonlinear function, known as a nonlinear feedback shift register (NFSR).
Similarly to the definition of linear complexity,
allowing feedback functions of FSRs to be taken arbitrarily, i.e. removing the restriction of linear feedback functions,  gives the  notion  of nonlinear complexity  \cite{JansenPhD}.
Notably, NFSRs have been the primary building blocks in the design of stream ciphers, such as Trivium \cite{Trivium}, Grain \cite{Hell2008} and its variant Grain128-AEAD that advanced to the final round of the NIST light-weight cryptography standardisation process \cite{LWC}.
On the other hand, the understanding of NFSRs from theoretical perspective remains largely under-developed
	\cite{Helleseth}.

Existing researches on nonlinear complexity of sequences include efficient calculations of nonlinear complexity of sequences, construction of sequences with maximum/large nonlinear complexity, enumeration of sequences with a given nonlinear complexity~\cite{Jansen,Erdmann,Rizomiliotis2005,LKK2, liang, HX, SZLH, yuan},
and the relation of nonlinear complexity and correlation measures \cite{LWin, ChenZChen, Chenzhix}.
In the calculation of nonlinear complexity,
Jansen and Boekee \cite{Jansen}  initially related nonlinear complexity to the maximum depth of a directed
acyclic word graph, which can be employed to determine the nonlinear complexity (profile) of a given  binary  sequence.
By exploiting the special structure of  the  associated linear equations,
Rizomiliotis et al.
first investigated quadratic complexity \cite{Rizo} and subsequently extended their approach to general nonlinear complexity \cite{Rizomiliotis2005}.
 Limniotis, Kolokotronis, and Kalouptsidis
   later explored the relationship  between nonlinear complexity and Lempel-Ziv complexity, thereby presenting a recursive algorithm that produces a
minimal nonlinear feedback shift register (NFSR) of a given sequence \cite{LKK2}.

The construction of sequences with high nonlinear complexity has been investigated~\cite{Yilin, liang, SZLH, Xiao2018, PR, yuan}.
Existing results are largely concerned with $n$-length binary sequences and $n$-periodic binary sequences for an arbitrary positive integer $n\geq 3$.
For $n$-length sequences,
Liang et al.  constructed binary sequences with nonlinear complexity not less than $n/2$  and  determined the  exact number  of such finite-length sequences based on
their structure  \cite{liang}.
 In addition, function field techniques \cite{JinL, XingC2003, XingLam}
  have been employed to construct finite-length sequences exhibiting high nonlinear complexity \cite{Castellanos2022, LXY, HX}.
For $n$-periodic sequences with maximum nonlinear complexity $n-1$ or near-maximum nonlinear complexity $n-2$, their structures were derived using recursive approaches, and their exact enumerations were determined based on the general structural results \cite{SZLH, Xiao2018}.
Very recently, by revealing a more explicit relation between finite-length sequences
and the corresponding periodic sequences,
all periodic sequences with any prescribed nonlinear complexity were  intensively studied \cite{yuan}.
However, the understanding of $n$-periodic sequences with nonlinear complexity less than $ n-2$ is still limited.

The enumeration and distribution of nonlinear complexity of binary sequences provide an important statistical view in the behaviour of nonlinear complexity.
Erdmann and Murphy calculated an approximate number of sequences in each nonlinear complexity class
and proposed an approximate probability distribution for the nonlinear complexity
 by a statistical approach \cite{Erdmann}.
For periodic sequences,
Petrides and Mykkeltveit \cite{PM2006,Petrides2008} introduced recursive structures related to nonlinear complexity and  a composition operation on recursions, offering  an interesting  approach to classifying periodic binary sequences  with respect to their nonlinear complexity.
Nevertheless, 
 several cases are not resolved, and  the result cannot  be used to obtain the exact number of periodic sequences.
The exact distribution of nonlinear complexity for periodic sequences remains open.

 For $n$-periodic binary sequences, the structure and exact enumeration have been only studied for those sequences with nonlinear complexity $n-1$ and $n-2$ \cite{SZLH, Xiao2018}.
According to our recent work in \cite{yuan}, a set of  finite-length representative  sequences can be used to generate all such periodic sequences; however, the explicit structure of periodic sequences was not investigated in \cite{yuan}.
We are therefore motivated to further investigate the structure and enumeration of $n$-periodic binary sequences with high nonlinear complexity.
In this paper, we further
 prove that finite-length sequences used to generate shift inequivalent $n$-periodic sequences with nonlinear complexity larger than or equal to $ \NN$ must be  the unique representative sequences.
Building upon  these unique representative sequences,
the structure of binary $n$-periodic sequences with nonlinear complexity larger than or equal to $ \NN$ is derived,
and an exact enumeration formula for such sequences is  determined.
These results provide both the structure and the exact distribution of nonlinear complexity for such periodic sequences,
 which validates the approximated statistical results in \cite{Erdmann}.

The remainder of this paper is organized as follows.  Section \ref{Sec2} introduces basic notations, definitions and lemmas related to nonlinear complexity.
Section \ref{Sec3} is dedicated to determining the  representative sequences
from two distinct perspectives.
In Section \ref{Sec5_1}, the structure of $n$-periodic sequences with nonlinear complexity larger than or equal to $ \NN$ is determined.
In Section \ref{Sec5_2}, an exact enumeration formula for such periodic sequences  is derived.
Finally, Section \ref{Sec6} concludes the work of this paper.
In addition, Appendices A-C give technical and lengthy proofs of some results in the work for better readability.

\section{Preliminaries}\label{Sec2}
In this section, we shall recall some basics of the nonlinear complexity of sequences introduced by Jansen and Boekee \cite{Jansen, JansenPhD},
 and auxiliary results
on binary periodic sequences with high  nonlinear complexity  recently obtained in \cite{yuan}.

\begin{definition} (\!\!\cite{Jansen,JansenPhD}) \label{nlc}
	The nonlinear complexity of a sequence $\textbf{s}$ over an alphabet $\mathcal{A}$, denoted by $nlc(\textbf{s})$, is the length of the shortest feedback shift registers that can generate the sequence $\textbf{s}$.
\end{definition}

For a sequence $\s=(s_0,s_1,\dots)$, the term $s_{i+k}$ is deemed as
the \textit{successor} of the subsequence $\s_{[i:i+k]}=(s_i, \dots, s_{i+k-1})$ for certain
positive integers $i$ and $k$.
Some properties of the nonlinear complexity of sequences are recalled below.

\begin{lemma}(\!\!\cite{Jansen,JansenPhD})\label{lem}
	The nonlinear complexity of a sequence $\textbf{s}$ equals one plus the length of its longest identical subsequences that occur at least twice with different successors.
\end{lemma}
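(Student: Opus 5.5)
We argue both inequalities directly from Definition~\ref{nlc}. A feedback shift register of length $k$ is a map $f:\mathcal{A}^k\to\mathcal{A}$. It generates $\s$ precisely when $s_{i+k}=f(s_i,\dots,s_{i+k-1})$ for every admissible $i$, i.e., when every length-$k$ window $\s_{[i:i+k]}$ is followed by the prescribed value $f(\s_{[i:i+k]})$. Let $\ell$ denote the length of the longest subsequence that occurs at least twice in $\s$ with different successors; if there is no such subsequence, set $\ell=-1$, which covers the degenerate case. The goal is $nlc(\s)=\ell+1$.

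\emph{Lower bound.} Suppose an FSR of length $k\le \ell$ generates $\s$. Take the witness windows $\s_{[i:i+\ell]}=\s_{[j:j+\ell]}$ with $s_{i+\ell}\neq s_{j+\ell}$. Their last $k$ symbols agree, namely $\s_{[i+\ell-k:i+\ell]}=\s_{[j+\ell-k:j+\ell]}$. These two shorter windows also have the different successors $s_{i+\ell}\neq s_{j+\ell}$. This contradicts $f$ being a function. Hence $nlc(\s)\ge \ell+1$.

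\emph{Upper bound.} Put $k=\ell+1$. By maximality of $\ell$, any two equal windows of length $k$ have equal successors. Suppose instead two equal length-$k$ windows had different successors. Then there would be a subsequence of length $k>\ell$ occurring twice with different successors, contradicting the choice of $\ell$. So the rule $f(\s_{[i:i+k]})=s_{i+k}$ is well defined on the windows that occur in $\s$. We extend $f$ arbitrarily to the rest of $\mathcal{A}^k$. Loading the register with the initial state $\s_{[0:k]}$, it then produces $\s$, so $nlc(\s)\le\ell+1$.

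The only delicate step is the lower bound. It needs the observation that a repeated window with distinct successors passes to its suffixes of every shorter length while keeping the same successors. The remaining care is bookkeeping: for periodic sequences, windows should be read cyclically (indices mod $n$); for finite sequences, only windows that actually have a successor count. Neither convention changes the argument.
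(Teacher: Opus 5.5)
Your proof is correct. Passing from a length-$\ell$ witness pair to its length-$k$ suffixes rules out every register of length $k\le\ell$. Maximality of $\ell$ makes the successor map well defined on the length-$(\ell+1)$ windows that actually occur, which gives the matching register.

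The paper gives no proof of its own here; it quotes the lemma from Jansen and Boekee, and your two-inequality argument is the standard proof of that characterization. Two small points could be stated explicitly. First, $\ell$ is finite in the cases you treat: for $\textbf{s}_n^{\infty}$, two equal windows of length at least $n$ contain a full period, so their successors agree, and hence $\ell<n$. Second, the constant-sequence case rests on your convention that a length-$0$ register exists.
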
	

A sequence $(s_{0},s_{1},\cdots, s_{n-1})$ is called a \textit{periodic} finite-length sequence if it is formed by multiple
repetitions of a short sequence of length $e$, where $e$ is a proper divisor of $n$; otherwise, it is termed an \textit{aperiodic} finite-length sequence, or simply said to be \textit{aperiodic}.
Throughout what follows, $\sn$  denotes  an aperiodic finite-length sequence, and $\sni$ denotes the infinite-length periodic sequence $\underbrace{\sn\dots\sn}_{\text{infinite}}$.

We will restrict our discussion to binary sequences over the integer ring  $\mathbb{Z}_2$.

\begin{lemma}(\!\!\cite{liang})\label{lem unique}
	For	a binary finite-length sequence
	$(s_0,s_1, \dots, s_{n-1})$,
	if it has nonlinear complexity  $c \geq \frac{n}{2}$, then
%
%
		there exists exactly one pair of identical subsequences of length $c -1$ with different successors in $(s_{0}, s_1, \dots, s_{n-1})$.
		
\end{lemma}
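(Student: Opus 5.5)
We argue by contradiction, using Lemma~\ref{lem} together with a periodicity argument of Fine--Wilf type. By Lemma~\ref{lem}, $nlc = c$ means that some word $u$ of length $c-1$ occurs at two positions $i<j$ with different successors, and that no longer word has this property. We may take $i+c-1 \le n-1$ and $j+c-1 \le n-1$, so that both successors exist. Suppose there is a second, different pair of occurrences $(i',j')$ of some word $v$ of length $c-1$ with different successors. We aim to show that this forces a longer pair with different successors, contradicting maximality.

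The first step handles the case where the two pairs share the same word, $v=u$. Since the sequence is binary, $u$ can have only two possible successors, $0$ and $1$. If $u$ occurs at three or more positions, two of them have the same successor. Consider two occurrences of $u$ at positions $p<q$. Because $2(c-1)\ge n-2$, which is close to $n$, the occurrences overlap or nearly abut. When $q-p< c-1$, the overlap shows that $s$ has period $q-p$ on the window $[p, q+c-1)$. We then extend backward or forward using the occurrence with the differing successor. The goal is to find a length-$c$ word, namely $u$ together with its predecessor or successor, that occurs twice with different successors. This would give complexity at least $c+1$, a contradiction.

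The second step handles the general case. Two distinct pairs $(i,j)$ and $(i',j')$ of length-$(c-1)$ repeats each impose a period: $p=j-i$ holds on an interval of length $c-1+p$, and the period breaks right after it. Similarly $p'=j'-i'$ holds on its own interval. Both intervals have length at least $c-1\ge n/2-1$, and both lie inside $[0,n)$, so they intersect in a segment of length roughly $\ge c-1-(n-c) \ge$ about $2c-n-1$. On the overlap, apply Fine--Wilf: if the overlap has length at least $p+p'-\gcd(p,p')$, then both periods reduce to $d=\gcd(p,p')$. This contradicts the fact that each period breaks at a specific position unless the breaks coincide, and coincident breaks with distinct pairs produce a longer repeated word with differing successors. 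When the overlap is shorter than $p+p'-\gcd$, one of $p$ or $p'$ must be large. A large period means the two occurrences of the corresponding word are far apart. Since each interval has length at least $c-1+p\le n$, we get $p\le n-c+1$, and we argue by positions directly.

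The main obstacle is the boundary bookkeeping in this last step. The bound $c\ge n/2$ only barely gives enough overlap, and the edge cases ($c=n/2$ with $n$ even, or the successor positions being the final index $n-1$) must be checked by hand. I would first try to reduce to the single inequality $(c-1)+(c-1)+\ldots \ge n$ to guarantee the Fine--Wilf hypothesis. If that fails by one, I would handle the equality case separately by exhibiting the explicit extended repeat.
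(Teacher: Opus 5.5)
The paper gives no proof of this lemma; it imports it from Liang et al.~\cite{liang}, so your argument has to stand on its own. Your strategy is the right one: treat each pair as a periodic window and apply Fine--Wilf to the overlap of two such windows. However, the decisive inequality is never established. Your overlap estimate ``$c-1-(n-c)\approx 2c-n-1$'' uses only the word length $c-1$ and the range $[0,n)$. That is why you are left with an unargued ``short overlap'' case and a worry that things ``fail by one'' at $c=n/2$.

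The missing observation is the following. For a pair at positions $i<j=i+p$, the $p$-periodic window is $I=[i,\,j+c-2]$, of length $p+c-1$. Because the successor $s_{j+c-1}$ must exist, $I\subseteq[0,n-2]$. Likewise $I'=[i',\,j'+c-2]\subseteq[0,n-2]$ has length $p'+c-1$. Hence
\[
|I\cap I'|\ \ge\ (p+c-1)+(p'+c-1)-(n-1)\ =\ p+p'-1+(2c-n)\ \ge\ p+p'-1\ \ge\ p+p'-\gcd(p,p').
\]
So Fine--Wilf always applies. The short-overlap case is vacuous, and $c=n/2$ needs no special treatment.

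The endgame also has to be made precise rather than gestured at.
\begin{itemize}
\item Let $g=\gcd(p,p')$. The overlap has period $g$ and length at least $\max(p,p')$. Since $g$ divides both $p$ and $p'$, the period $g$ propagates to all of $I$ and $I'$, hence to the interval $I\cup I'$.
\item If $I'$ ended after $I$, then $i+c-1$ and $j+c-1$ would both lie in $I\cup I'$. This gives $s_{i+c-1}=s_{j+c-1}$, a contradiction. By symmetry, $j=j'$.
\item If the pairs still differ, say $i<i'$, then $i'-1\in I$, so $s_{i'-1}=s_{j'-1}$. The length-$c$ word starting at $i'-1$ then recurs at $j'-1$ with distinct successors, so $nlc\ge c+1$ by Lemma~\ref{lem}.
\end{itemize}
This argument also covers the case $p=p'$.

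Your separate first step for a common word $u$ is therefore unnecessary. Two different pairs of occurrences of $u$ are just two different pairs $(i,p)\ne(i',p')$. As sketched, that step would not work anyway. Take three occurrences $p_1<p_2<p_3$ of $u$ with successors $a,a,\overline{a}$. The local one-symbol extensions you describe do not force a length-$c$ repeat with different successors. The predecessor argument only yields $s_{p_1-1}=s_{p_2-1}$, which is a length-$c$ repeat whose successors are both $a$ and contradicts nothing. The contradiction genuinely requires the global overlap argument above.
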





 In the following,  we will recall several definitions and lemmas from \cite{yuan}, which laid a foundation for discussions in the subsequent sections.


\begin{definition}
		\label{BB}
	For $c \geq \n$ and $1\leq d\leq \min\{n-c,\n\}$, we denote by $\mathcal{B}(n,c,d)$ the set of aperiodic binary sequences $\sn$ starting with $\s_{c+d}$,  defined  as follows:
	\begin{equation}\label{structure of finite}
		\mathcal{B}(n,c,d)=\{\,\textbf{s}_{n}=\textbf{s}_{c+d}\,\textbf{s}_{[c+d:n]}
		=\underbrace{((s_0, \dots, s_{d-1})^{q} \, (s_{0},\dots, s_{r-1}, \overline{s}_{r})}_{\text{length}=c+d}\, \textbf{s}_{[c+d:n]})\, \}
	\end{equation}
	where
	$q = \lfloor \frac{c+d-1}{d} \rfloor$,  $0 \leq r=(c+d-1)-qd<d$, $\s_d=(s_0, \dots, s_{d-1})$ is aperiodic, $(s_0, \dots, s_{d-1})^q=\obr{\s_d}{q \text{ repetitions}}$,
	$\overline{s}_{r}=s_{r}\oplus 1 $, and the subsequence $\textbf{s}_{[c+d:n]}$
	is chosen from $\mathbb{Z}_2^{n-c-d}$ arbitrarily.
	 The parameter $d$ is called the spacing of $\sn$. We define the set
	\begin{equation*}
		\mathcal{B}(n,c)=\bigcup\limits^{\min\{n-c,\n\}}_{d=1}\mathcal{B}(n,c,d).
	\end{equation*}
\end{definition}

Note that since $c \geq \n$, the structure  in \eqref{structure of finite}
ensures that the sequence $\sn$ is aperiodic, regardless of the choice of its subsequence $\textbf{s}_{[c+d:n]}\in \mathbb{Z}_2^{n-c-d}$.
To determine the nonlinear complexity of periodic sequences $\sni$ with $\sn\in  \mathcal{B}(n,c)$,
the added terms of $\textbf{s}_{n}$ was introduced in \cite{yuan}.

\begin{definition}
	\label{def-t}
	Given a certain positive integer $t$,
	if a sequence $\textbf{s}_{n}$ in $\mathcal{B}(n,c, d)$ with $c \geq \n$
	satisfies
	\begin{equation*}
		s_{n-1-i}=s_{(d-1-i)\,\text{mod}\,\,d}\,\, \text{ for } 0\leq i< t, \text{ and } s_{n-1-t}\neq  s_{(d-1-t)\,\text{mod}\,\,d},
	\end{equation*}
	then we call $s_{n-t}, \dots, s_{n-1}$ the added terms of $\textbf{s}_{n}$
	and denote by $add(\textbf{s}_{n})$ the number $t$ of the added terms of $\sn$.
\end{definition}

Define the left circular shift operators $L^i(\sn)=(s_i,s_{i+1},\dots,s_{n-1},$
$s_0,\dots,s_{i-1})$
and the right circular shift operators $R^i(\sn)=(s_{n-i},\dots,$
$s_{n-1},s_0,\dots,s_{n-i-1})$ for $i\geq 1$.
Lemma \ref{c-c+1} shows
the varying behavior of $nlc(\sn)$ and $add(\sn)$  under circular shift operators.

\begin{lemma}(\!\!\cite{yuan})\label{c-c+1}
	\noindent{\rm (i)} For $\sn\in\mathcal{B}(n,c, d)$ and a positive integer $k< c$, its left shift sequence
	$L^k(\sn)$ belongs to $\mathcal{B}(n,c-t,d)$.
	In particular, when $c-k \geq \n$, we have
	$$nlc(L^k(\sn))=nlc(\sn)-k=c-k.$$
		
		\noindent{\rm (ii)}	For $\textbf{s}_{n}\in  \mathcal{B}(n,c,d)$ with $c \geq \n$
	and $add(\textbf{s}_{n})=t$,
	we have 	$t < n-c$;  and
	for $1\leq k\leq \min\{t, n-c-d\}$, we have
		$$R^{k}(\textbf{s}_{n}) \in \mathcal{B}(n,c+k,d),  \,\,\, nlc(R^{k}(\textbf{s}_{n}))=c+k, \,\,\, \text{and \, }add(R^{k}(\textbf{s}_{n}))=t-k.$$	
\end{lemma}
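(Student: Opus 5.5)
The plan is to treat both parts as statements about where the first break of the period-$d$ pattern sits in a sequence of $\mathcal{B}(n,c,d)$. Circular shifts only slide this break, and Lemma \ref{lem unique} together with a Fine--Wilf type periodicity argument pins down the nonlinear complexity.

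\emph{Part (i).} I would write $\sn\in\mathcal{B}(n,c,d)$ as a prefix of length $c+d-1$ with period $d$, followed by the break $\overline{s}_r$ at position $c+d-1$. Shifting left by $k<c$ removes the first $k$ symbols. The new sequence then starts with a run of length $c+d-k-1$ of period $d$, with the break at position $c+d-k-1$. Its first $d$ symbols are $(s_{k \bmod d},\dots)$, a cyclic rotation of $\s_d$, hence aperiodic. This gives $L^k(\sn)\in\mathcal{B}(n,c-k,d)$ (the ``$c-t$'' in the statement should read $c-k$).

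For the complexity claim it suffices to show that every sequence in $\mathcal{B}(n,c',d)$ with $c'\ge\n$ has nonlinear complexity exactly $c'$.
\begin{itemize}
\item \emph{Lower bound.} The windows of length $c'-1$ at positions $0$ and $d$ coincide by $d$-periodicity and have different successors $s_{c'-1}$ versus $\overline{s}_{r}$. Lemma \ref{lem} then gives $nlc\ge c'$.
\item \emph{Upper bound, the main obstacle.} Suppose windows at positions $i<j$ of length $m\ge c'$ coincide and have different successors. Put $p=j-i$. Then the segment $[i,j+m)$ has period $p$, and position $j+m$ is its first break. I would intersect this segment with the $d$-periodic run $[0,c'+d-1)$. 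Since $m\ge c'\ge\n$, the overlap has length at least $p+d-\gcd(p,d)$. Fine--Wilf then gives a common period $\gcd(p,d)$ on the overlap. Aperiodicity of $\s_d$ forces $d\mid p$, and then the first break of period $p$ must coincide with that of period $d$. So the pair is essentially the one at $(0,d)$ extended, and its length is at most $c'-1$, a contradiction.
\end{itemize}
I expect the careful length bookkeeping in the upper bound (ensuring the overlap is long enough in every position configuration) to be the hardest step. Lemma \ref{lem unique} supplies the uniqueness of the critical pair as a fallback.

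\emph{Part (ii), the bound $t<n-c$.} The added terms are aligned so that $s_{n-1}$ corresponds to $s_{d-1}$. Consequently, in the cyclic sequence, the last $t$ symbols followed by the prefix $s_0,\dots,s_{c+d-2}$ form one contiguous run of period $d$. If $t\ge n-c$, this run has length at least $n-c+c+d-1=n+d-1$. A cyclic sequence of period $n$ containing a $d$-periodic window of length $n+d-1$ has period $\gcd(n,d)<n$ (because $d\le\n<n$). This contradicts the aperiodicity of $\sn$.

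\emph{Part (ii), the shifts $R^k$.} For $1\le k\le\min\{t,n-c-d\}$, the right shift $R^k(\sn)$ moves $k$ added terms to the front. By the alignment above, the result starts with a $d$-periodic run of length $c+k+d-1$, broken at position $c+k+d-1$. Its first $d$ symbols form a rotation of $\s_d$, hence are aperiodic. The condition $k\le n-c-d$ guarantees $c+k+d\le n$, so the break still lies inside the sequence. Hence $R^k(\sn)\in\mathcal{B}(n,c+k,d)$, and the claim from (i) gives $nlc=c+k$. Finally, the new last $t-k$ symbols are the remaining added terms. They still match the backward periodic pattern of the rotated $\s_d$, and the next symbol still mismatches, so $add(R^k(\sn))=t-k$ directly from Definition \ref{def-t}.
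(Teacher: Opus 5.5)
The paper gives no proof of this lemma. It imports it from \cite{yuan}, together with the typo $c-t$ for $c-k$, which you correctly flag. So there is nothing in the text to compare against, and on its own merits your proposal is correct.

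Its main strength is that it is self-contained. The key fact, that every sequence in $\mathcal{B}(n,c',d)$ with $c'\ge\lfloor n/2\rfloor$ has nonlinear complexity exactly $c'$, is proved directly by Fine--Wilf. This covers odd $n$ with $c'=\lfloor n/2\rfloor<n/2$, a case Lemma~\ref{lem unique} does not address. The fallback to that lemma is in any case unnecessary, since uniqueness of a critical pair would not by itself bound its length. Two other steps are complete as written:
\begin{itemize}
\item the bound $t<n-c$, obtained from a $d$-periodic window of length $n+d-1$ forcing global period $\gcd(n,d)<n$;
\item the claims $R^k(\mathbf{s}_n)\in\mathcal{B}(n,c+k,d)$ and $add(R^k(\mathbf{s}_n))=t-k$, which follow from the alignment of the added terms with the tail of $\mathbf{s}_d$.
\end{itemize}

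The upper-bound step, which you flagged, needs its bookkeeping made explicit. Let the coinciding windows start at $i<j$, have length $m$, and put $p=j-i$.
\begin{itemize}
\item Since the later successor lies at $j+m\le n-1$, we get $j\le n-1-c'\le c'$. This is exactly where $c'\ge\lfloor n/2\rfloor$ enters.
\item The overlap $[i,\min(j+m,c'+d-1))$ therefore has length at least $\min(p+m,\,p+c'+d-1-j)\ge p+d-1$, enough for Fine--Wilf.
\item Fine--Wilf and the aperiodicity of $\mathbf{s}_d$ give $d\mid p$. Now rule out both possible misalignments of the breaks:
  \begin{itemize}
  \item If $j+m<c'+d-1$, both successors lie in the $d$-periodic run at distance $p\equiv 0 \pmod d$, so they agree, a contradiction.
  \item If $j+m>c'+d-1$, then $c'+d-1-p\ge i$ because $j\le c'$. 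Hence $s_{c'+d-1}=s_{c'+d-1-p}=s_{c'-1}$, contradicting the break.
  \end{itemize}
\item So $j+m=c'+d-1$, and $m=c'+d-1-j\le c'-1$ because $j\ge p\ge d$.
\end{itemize}
Your phrase that the pair is ``the one at $(0,d)$ extended'' is therefore inaccurate. It is a pair ending at the same break but starting at or after position $d$. That is precisely why it is too short, so the conclusion you draw is right.
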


%

\begin{definition}
	\label{ES_Def}
	Let $\sn$ be a sequence in $\mathcal{B}(n,c)$ and $E(\sn)=\{R^k(\sn): 0\leq k<n\} \cap \mathcal{B}(n,c)$.
	A sequence $\widetilde{\s}_n \in E(\sn)$ satisfying
	\begin{equation*}
		add(\widetilde{\s}_n) \geq add(\textbf{a}_n),\, \forall\, \textbf{a}_n\in E(\sn)
	\end{equation*} is said to be a representative sequence of $\sn$. Furthermore, we denote by $\mathcal{R}(n,c)$ the set of all sequence representatives in $\mathcal{B}(n,c)$,
	i.e.,
	\begin{equation}\label{Eq-Rnc}
		\mathcal{R}(n,c)= \bigcup_{\sn\in \mathcal{B}(n,c)} \big\{\widetilde{\s}_n\in E(\sn):add(\widetilde{\s}_n) \geq add(\textbf{a}_n),   \forall\, \textbf{a}_n \in E(\sn) \big\}.
	\end{equation}
\end{definition}

Readers may refer to \cite[Example 1]{yuan} for a better understanding of the notations in Definitions \ref{BB}-\ref{ES_Def}.

\begin{lemma}(\!\!\cite{yuan})\label{c+t}
	For $\textbf{s}_{n}$ in $\mathcal{R}(n,c)$ with $c\geq \n$,  one has $nlc(\textbf{s}_n^{\infty})=nlc(\sn)+add(\sn)$.
\end{lemma}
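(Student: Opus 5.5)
The plan is to prove the two inequalities $nlc(\sni)\geq c+t$ and $nlc(\sni)\leq c+t$ separately, where $t=add(\sn)$ and $c=nlc(\sn)$. Both use the characterization of Lemma \ref{lem}: $nlc$ equals one plus the length of the longest pair of identical subsequences with different successors.

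For the lower bound, let $\textbf{x}$ denote the $d$-periodic extension of $\s_d=(s_0,\dots,s_{d-1})$ to all integer indices. By Definition \ref{BB}, $s_j=x_j$ for $0\le j<c+d-1$, while $s_{c+d-1}=\overline{x}_{c+d-1}$. By Definition \ref{def-t}, the added terms satisfy $s_{n-1-i}=x_{-1-i}$ for $0\le i<t$. Read cyclically in $\sni$, this gives two windows:
\begin{itemize}
\item the window of length $c+t-1$ starting at position $n-t$ (the added terms followed by $s_0,\dots,s_{c-2}$), which coincides with $(x_{-t},\dots,x_{c-2})$;
\item the window of the same length starting at position $d-t$, taken modulo $n$, which wraps into the added terms when $d<t$ and coincides with $(x_{d-t},\dots,x_{c+d-2})$.
\end{itemize}
Since $\textbf{x}$ has period $d$, these two windows are identical. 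Their successors are $s_{c-1}=x_{c-1}$ and $s_{c+d-1}=\overline{x}_{c+d-1}=\overline{x}_{c-1}$, which differ. A little care is needed to check that the windows really lie inside one period and do not overlap in a way that breaks the identification. This should follow from $t<n-c$ in Lemma \ref{c-c+1}(ii) and $d\le n-c$. Lemma \ref{lem} then gives $nlc(\sni)\ge c+t$.

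For the upper bound, I would first observe that this particular pair cannot be extended further to the left. The term preceding the first window is $s_{n-1-t}$, and Definition \ref{def-t} says $s_{n-1-t}\neq x_{-1-t}$. The term preceding the second window is $x_{d-1-t}=x_{-1-t}$, at least when $d-1-t\ge0$ or the wrap keeps it inside the periodic part. So the two windows stop agreeing one step to the left.

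The main obstacle is excluding a \emph{different} pair of identical windows of length $\ge c+t$ with different successors in $\sni$. My approach is to argue by contradiction using the maximality built into Definition \ref{ES_Def}.
\begin{enumerate}
\item Suppose such a pair exists, of length $c'-1$ with $c'>c+t$. Since $c'\ge n/2$, Lemma \ref{lem unique}, applied to a suitable rotation of length $n$ containing both windows, says this pair is the unique maximal one in that rotation.
\item Choose the rotation $\textbf{a}_n=R^k(\sn)$ whose second window ends where its flipped successor sits. Then $\textbf{a}_n$ has the shape of Definition \ref{BB} for some spacing, so $\textbf{a}_n\in\mathcal{B}(n,c'')$ and hence $\textbf{a}_n\in E(\sn)$.
\item Using Lemma \ref{c-c+1}(i),(ii) to track how $nlc$ and $add$ change under shifts, I would show that $\textbf{a}_n$ has $nlc(\textbf{a}_n)+add(\textbf{a}_n)\ge c'>c+t$.
\item Normalizing $\textbf{a}_n$ so that its nonlinear complexity matches that of $\sn$ (as elements of $E(\sn)$ are compared inside $\mathcal{B}(n,c)$) should force $add(\textbf{a}_n)>t$. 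This contradicts the choice of $\sn$ as a representative.
\end{enumerate}
This bookkeeping in the last steps is where the real difficulty lies: one must match the spacing $d$ and confirm that the shifted sequence stays in $\mathcal{B}(n,c)$. I expect this to need a case split on whether $k\le n-c-d$ or the shift wraps past the flipped position.
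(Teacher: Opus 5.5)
The paper gives no proof of this lemma; it is quoted from \cite{yuan}. So I assess your argument on its own.

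Your lower bound is correct, and it needs less care than you fear. Every entry of the windows starting at $n-t$ and $d-t$ is pinned down by Definition~\ref{BB} or Definition~\ref{def-t}, so overlaps between added terms and the prefix cannot break the identification. It never uses $\sn\in\mathcal{R}(n,c)$, so the whole content of the lemma lies in the upper bound. (You also tacitly use $nlc(\sn)=c$, which the paper likewise imports.)

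The upper bound has a genuine gap. First, step~1 is false as stated: two identical windows of $\sni$ need not fit in any length-$n$ rotation, so Lemma~\ref{lem unique} is not available. Your own witness pair occupies the $c+d+t$ positions $-t,\dots,c+d-1$, which is more than $n$ whenever $d=n-c$ and $t>0$. For example, $\sn=(0100010000)\in\mathcal{B}(10,6,4)$ is the only element of its class in $\mathcal{B}(10,6)$, has $add(\sn)=2$ and $nlc(\sni)=8$. Yet no rotation of length $10$ has nonlinear complexity $8$.

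Second, and more seriously, the normalization in steps~2--4 works only in one regime. Note that $\textbf{a}_n\in\mathcal{B}(n,c'')$ puts it in $E(\sn)$ only if $c''=c$. Take a longest pair, of length $c'-1\ge c+t$ and spacing $\delta$; then $\delta$ is the least period of the run the pair spans.
\begin{itemize}
\item If $c'+\delta\le n$, your idea works. The rotation starting at the first window lies in $\mathcal{B}(n,c',\delta)$. Then $L^{c'-c}$ of it lies in $\mathcal{B}(n,c,\delta)$ with at least $c'-c>t$ added terms, contradicting representativeness.
\item If $c'+\delta>n$, the best you can build is a rotation in $\mathcal{B}(n,n-\delta,\delta)$ with at least $c'+\delta-n$ added terms. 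This reaches $\mathcal{B}(n,c)$ only if $\delta\le n-c$, because $\mathcal{B}(n,c,\delta)$ requires $\delta\le n-c$ and Lemma~\ref{c-c+1}(ii) allows right shifts only up to $\min\{t,n-c-d\}$.
\end{itemize}
Swapping the two windows replaces $\delta$ by $n-\delta$, so the uncovered case is $n-c<\delta<c$, which contains integers as soon as $2c\ge n+2$. There no element of $E(\sn)$ with more added terms appears, and the representative hypothesis gives you nothing.

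What is missing is an argument that a pair of length at least $c+t$ with such a spacing cannot coexist with the $d$-periodic run on $[-t,c+d-2]$, broken at $-t-1$ and at $c+d-1$. Overlap/Hamming-weight contradictions like those in Lemmas~\ref{b range} and~\ref{d_2=b} are one possible route. Your phrase ``should force $add(\textbf{a}_n)>t$'' assumes exactly this.
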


Lemma \ref{c+t} indicates that if we can determine a sequence representative of $\sn$, then $nlc(\textbf{s}_n^{\infty})$ can be calculated directly from
$nlc({\s}_n)$ and $add({\s}_n)$. It is to be noted that characterizing representative(s) of a sequence $\sn$ is not trivial since one needs to know the varying behavior of added terms
of all sequences in $E(\sn)$.
Some efforts in \cite{yuan} enabled the generation of all  $n$-periodic sequences with nonlinear complexity
larger than or equal to
$ \frac{n}{2}$, as  described below.


\begin{lemma}(\!\!\cite{yuan})
	\label{thm_core}
	Let $n$ and $\omega$ be two positive integers with $\omega\geq \frac{n}{2}$. Let
	 $\widetilde{\mathcal{P}}(n,\omega)$ be the set of binary sequences with period $n$ and nonlinear complexity $\omega$, and
	 $\mathcal{R}(n,\left\lceil\frac{n}{2}\right\rceil)$ be defined by \eqref{Eq-Rnc}.
	Then,
\begin{equation*}
\begin{array}{c}
		\widetilde{\mathcal{P}}(n,\omega)
= \left\{(R^k(\sn))^\infty:   \sn \in \mathcal{R}(n,\left\lceil\frac{n}{2}\right\rceil), \, add(\sn)=\omega-\left\lceil\frac{n}{2}\right\rceil, 0\leq k<n \right\}.
\end{array}
\end{equation*}
\end{lemma}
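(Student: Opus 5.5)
The plan is to prove the two inclusions separately. The inclusion ``$\supseteq$'' should follow quickly from Lemma~\ref{c+t}. The inclusion ``$\subseteq$'' needs a structural argument: every periodic sequence of high nonlinear complexity has a rotation in $\mathcal{B}(n,\lceil n/2\rceil)$, and a representative of that rotation's shift class realises $\omega$.

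For ``$\supseteq$'', let $\sn\in\mathcal{R}(n,\lceil\frac n2\rceil)$ with $add(\sn)=\omega-\lceil\frac n2\rceil$. Since $\lceil\frac n2\rceil\geq\n$, Lemma~\ref{c+t} gives $nlc(\sni)=\lceil\frac n2\rceil+add(\sn)=\omega$. The sequence $\sn$ is aperiodic (remark after Definition~\ref{BB}), so $\sni$ has least period $n$. The periodic sequences $(R^k(\sn))^\infty$ are shifts of $\sni$, and shifting an infinite periodic sequence changes neither its period nor its set of (subsequence, successor) pairs. Lemma~\ref{lem} then gives $nlc((R^k(\sn))^\infty)=\omega$ for all $0\le k<n$.

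For ``$\subseteq$'', let $\mathbf{u}$ have least period $n$ and $nlc(\mathbf{u})=\omega\geq\frac n2$. The steps are as follows.
\begin{enumerate}
\item By Lemma~\ref{lem}, there are positions $i<j$ with $j-i<n$ such that $u_{i+m}=u_{j+m}$ for $0\le m\le\omega-2$ and $u_{i+\omega-1}\neq u_{j+\omega-1}$. Put $d=j-i$.
\item Let $\sn$ be the rotation of $\mathbf u$ starting at position $i$. Its prefix of length $\omega+d$ satisfies $s_m=s_{m-d}$ for $d\le m<\omega+d-1$ and is flipped at index $\omega+d-1$. This is exactly the shape in \eqref{structure of finite} with $c=\omega$.
\item Check the side conditions of Definition~\ref{BB}.
\begin{itemize}
\item Aperiodicity of $\s_d$: replace $d$ by the least period $e$ of $(s_0,\dots,s_{d-1})$. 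Then $s_{m-d}=s_{m-e}$ on the relevant range, so the flip condition transfers unchanged.
\item The bound $d\le\min\{n-\omega,\n\}$: if $d>\n$, the pair $(i+n,\,j)$ gives a smaller spacing $n-d$ and plays the same role, so we may assume $d\le\n$. If $\omega+d>n$, the two occurrences overlap across a full period. Periodicity of $\mathbf u$ with period $n$ then forces $u_{i+\omega-1}=u_{j+\omega-1}$, a contradiction, so $\omega+d\le n$.
\end{itemize}
\item With $\sn\in\mathcal{B}(n,\omega,d)$ established, apply Lemma~\ref{c-c+1}(i) with $k=\omega-\lceil\frac n2\rceil<\omega$. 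This puts $L^k(\sn)$ in $\mathcal{B}(n,\lceil\frac n2\rceil,d)$.
\item Take a representative $\widetilde{\s}_n$ of $E(L^k(\sn))$; it lies in $\mathcal{R}(n,\lceil\frac n2\rceil)$ and is a rotation of $\sn$. Lemma~\ref{c+t} gives $\omega=nlc(\widetilde{\s}_n^\infty)=\lceil\frac n2\rceil+add(\widetilde{\s}_n)$, so $add(\widetilde{\s}_n)=\omega-\lceil\frac n2\rceil$. Finally $\mathbf u=(R^{k'}(\widetilde{\s}_n))^\infty$ for a suitable $k'$.
\end{enumerate}

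The main obstacle is step~3, placing the rotation inside $\mathcal{B}(n,\omega)$. The delicate parts are excluding the case where the two occurrences of the length-$(\omega-1)$ block overlap modulo $n$, which is what gives $d\le n-\omega$, and handling the reduction to spacing at most $\n$. The reduction does not literally reuse the same pair of occurrences, so it must be argued directly. I would attempt it using the condition $\omega\ge\frac n2$, via a Fine--Wilf type argument on the overlapping block, which has periods $d$ and $n$.
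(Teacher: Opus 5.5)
The paper quotes this lemma from \cite{yuan} without proof, so your argument has to stand on its own. Your ``$\supseteq$'' direction is fine. So is the swap to the pair $(j,i+n)$ that brings the spacing down to $d\le\lfloor n/2\rfloor$.

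The gap is in step~3. The claim that $\omega+d>n$ is impossible is false, and in general no rotation of $\mathbf u$ lies in $\mathcal{B}(n,\omega)$ at all. Take $\mathbf u=(01010)^\infty$. It has least period $5$, and the block $010$ occurs at positions $0$ and $2$ with successors $1$ and $0$. The five length-$4$ windows are distinct, so $\omega=nlc(\mathbf u)=4$. The spacing is $d=2$ (the swapped pair has spacing $3$), so $\omega+d=6>5$ (resp.\ $7>5$). Indeed $\mathcal{B}(5,4)=\{00001,11110\}$ contains no rotation of $01010$.

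Your suggested Fine--Wilf argument cannot close this. The period-$d$ run has length $\omega+d-1$, and combining the periods $d$ and $n$ needs length at least $n+d-\gcd(n,d)$, which fails here ($5<6$).

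This is not a corner case. It is exactly the family $\mathcal{S}_B$ in \eqref{eq_RE11}, i.e.\ $\sn\in\mathcal{B}(n,n-d,d)$ with $add(\sn)=\omega+d-n>0$, where the second occurrence wraps past the end of the window. In the paper's example $n=16$, $\omega=12$, the terms $|N(16,5,1)|+|N(16,6,2)|+|N(16,7,3)|=22$ of the $52$ classes are of this kind. That wrap-around is what $add(\cdot)$ records, and it is why the lemma is stated via $\mathcal{B}(n,\lceil n/2\rceil)$ rather than $\mathcal{B}(n,\omega)$.

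The fix is to bypass $\mathcal{B}(n,\omega)$ entirely. After arranging $d\le\lfloor n/2\rfloor$, take the rotation of $\mathbf u$ starting at position $i+\omega-\lceil n/2\rceil$; the offset is nonnegative since $\omega\ge n/2$. Its first $\lceil n/2\rceil+d-1$ terms have period $d$, and the flip sits at index $\lceil n/2\rceil+d-1\le n-1$. So this rotation lies in $\mathcal{B}(n,\lceil n/2\rceil,d)$ directly, and your step~5 then goes through unchanged. In the example this rotation is $10100\in\mathcal{B}(5,3,2)$ with $add=1$.

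You also need to correct the aperiodicity step. Suppose the first $d$ terms had the form $(\s_e)^{d/e}$ with $e\mid d$, $e<d$. Then the flip does \emph{not} transfer to spacing $e$, since $s_{\omega+e-1}=s_{\omega-1}$. Instead, the whole run of length $\omega+d-1$ has period $e$ and breaks at its end. This yields identical blocks of length $\omega+d-e-1\ge\omega$ with different successors, contradicting $nlc(\mathbf u)=\omega$. So $\s_d$, and hence every length-$d$ window of the run (in particular the start of the shifted rotation), is automatically aperiodic.
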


Although Lemma \ref{thm_core}  provides a method for generating binary periodic sequences using finite-length representative sequences,  it does not determine the exact form of the corresponding representative sequences in $\mathcal{R}(n,\left\lceil\frac{n}{2}\right\rceil)$.
To further characterize the precise structure of periodic sequences , it is essential to first investigate the properties and construction of their representative sequences.
According to Lemma \ref{c-c+1} (ii), we have \cn
\begin{equation}\label{eq_RE11}
	\begin{array}{cll}
		\vspace{0.2cm}
		& \{ R^k(\sn) :  \sn \in \mathcal{R}(n,\nup), \, add(\sn)=\omega-\nup, 0\leq k<n  \} \\
		\vspace{0.2cm}
		=& \{ R^k(\sn) :  \sn \in  \bigcup\limits_{1\leq d \leq n- \omega} \mathcal{R}(n,\nup,d), \, add(\sn)=\omega-\nup, 0\leq k<n  \}\bigcup\\
		\vspace{0.2cm}
		&  \,   \{ R^k(\sn) :  \sn \in  \bigcup\limits_{n- \omega< d \leq \n} \mathcal{R}(n,\nup,d), \, add(\sn)=\omega-\nup, 0\leq k<n   \} \\
		=& \bigcup\limits_{1\leq d < n- \omega}\{  R^k(\sn)  :  \sn \in \mathcal{R}(n,\omega,d),  add(\sn)=0, 0\leq k<n  \} \\
		& \bigcup   \bigcup\limits_{n- \omega \leq d \leq \n } \{ R^k(\sn) :  \sn \in \mathcal{R}(n,n-d,d),  \, add(\sn)=\omega+d-n, 0\leq k<n  \}.
	\end{array}
\end{equation}
Therefore,
in order to determine the structure of $	\widetilde{\mathcal{P}}(n,\omega)$ and  its number,
 we shall first characterize the sequence representatives in $\mathcal{R}(n,n-d,d)$ with the maximum
spacing and  $\mathcal{R}(n,\omega,d)$ with high nonlinear complexity $\omega$, respectively, in
Section \ref{Sec3}.

	\section{Characterization of   sequence representatives}\label{Sec3}

In this section, we examine the conditions under which a finite-length sequence can serve as a representative sequence from two distinct perspectives.
Through this analysis, we establish a necessary and sufficient condition for representative sequences with the maximum spacing in Subsection \ref{Sec3_1}, and further propose a criterion based on nonlinear complexity for identifying representative sequences in Subsection \ref{Sec4}.

The following lemma  will be used to determine sequence representatives  in
this section.
\begin{lemma}\label{b range}
	Let $c\geq \n$
	and $\textbf{s}_{n} \in \mathcal{B}(n, c, d_1)$ with $add(\sn)=t_1$.
	Suppose $\textbf{s}_{n}$ has a shift equivalent sequence $R^{h}(\textbf{s}_{n})\in \mathcal{B}(n,c, d_2)$
	with $add(R^{h}(\textbf{s}_{n}))=t_2$. Let $b = (n-c-d_1)-h$. Then
	$ t_{1} \leq b < d_{1}+d_{2}-t_{2}$.
\end{lemma}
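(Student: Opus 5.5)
We prove both inequalities by contradiction. The plan is to work in the cyclic word $\sni$ and compare two maximal periodic runs, using the Fine--Wilf periodicity lemma together with the uniqueness in Lemma \ref{lem unique}. Indices are read modulo $n$.

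\textbf{Step 1: the two runs.} By Definition \ref{BB} and Definition \ref{def-t}, the positions $n-t_1,\dots,n-1,0,1,\dots,c+d_1-2$ form a cyclic segment $I_1$ of length $t_1+c+d_1-1$ on which $\sni$ has period $d_1$. This run cannot be extended on either side: it breaks at position $c+d_1-1$ (the flipped term $\overline{s}_r$) and at position $n-1-t_1$ (by the definition of $add$).

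Since $n-h=b+c+d_1$, the sequence $R^h(\sn)$ starts at cyclic position $b+c+d_1$. Applying the same definitions to $R^h(\sn)$ gives a second maximal $d_2$-periodic segment $I_2$. It starts at $b+c+d_1-t_2$, has length $t_2+c+d_2-1$, and breaks at $b+2c+d_1+d_2-1$.

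Both runs are long, of length at least $c+d_i-1\ge \n+d_i-1$, while the cycle has length $n$. So they must overlap substantially, and the two inequalities will come from locating $I_2$ relative to $I_1$.

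\textbf{Step 2: lower bound $t_1\le b$.} Suppose $b<t_1$, allowing for wrap-around, i.e. $h>n-c-d_1-t_1$. Then the start $n-h$ of $R^h(\sn)$ lies inside the added-term block, or strictly between the break point $c+d_1-1$ and that block. I split into these two cases.

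\emph{Case (a): the start lies inside $I_1$.} Then $R^h(\sn)$ begins with a $d_1$-periodic stretch that runs up to the break at $c+d_1-1$. This gives a pair of identical subsequences of length at least $c-1$ with different successors, occurring at distance $d_1$. The nonlinear complexity of $R^h(\sn)$ is $c\ge n/2$, so Lemma \ref{lem unique} says there is exactly one such pair. Hence this pair must be the one defining $\mathcal{B}(n,c,d_2)$, which forces $d_1=d_2$. Moreover, the $d_1$-periodic prefix of $R^h(\sn)$ then has length greater than $c+d_1-1$, contradicting $nlc(R^h(\sn))=c$.

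\emph{Case (b): the start lies strictly between $c+d_1-1$ and $n-t_1$.} Then the prefix of $R^h(\sn)$ of length $c+d_2$ reaches into $I_1$. The resulting overlap of $I_2$ with $I_1$ has length at least $d_1+d_2-\gcd(d_1,d_2)$. By Fine--Wilf, both runs then have period $\gcd(d_1,d_2)$ on their union. This contradicts either the aperiodicity of $\s_{d_1}$ (or of $\s_{d_2}$), or the maximality of one of the runs.

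\textbf{Step 3: upper bound $b<d_1+d_2-t_2$.} Suppose $b\ge d_1+d_2-t_2$. Then $I_2$ starts at least $d_1+d_2$ positions after the break $c+d_1-1$ of $I_1$. Because $c\ge \n$ and $d_i\le n-c$, the run $I_2$ must wrap around and cover the start of $I_1$.

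I will compute the length of $I_1\cap I_2$ explicitly. The aim is to show it is at least $d_1+d_2-1$, using $c\ge n/2$ and $d_1,d_2\le n-c$. Fine--Wilf then gives a common period $\gcd(d_1,d_2)$ on $I_1\cup I_2$. Since $I_1\cup I_2$ crosses the break point of one of the runs, this is a contradiction. Conversely, the shift that places the start of $I_2$ exactly $d_1+d_2-t_2-1$ past the break is the borderline case, which is consistent with the strict inequality.

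\textbf{Main obstacle.} The hard part is the bookkeeping of cyclic overlaps in Steps 2(b) and 3. One has to check that the overlap really reaches $d_1+d_2-\gcd(d_1,d_2)$ in every configuration, especially near the boundary cases $d_i=n-c$ and odd $n$ (where $c=\nup$ versus $\n$ matters). Where Fine--Wilf is too weak, I would instead use Lemma \ref{lem unique} directly: exhibit two distinct pairs of identical length-$(c-1)$ windows with different successors, which is impossible for a sequence of complexity $c\ge n/2$.
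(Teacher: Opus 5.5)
Your proof has genuine gaps, and the first comes from the orientation of $b$. You follow the displayed definition $b=(n-c-d_1)-h$ literally, so $R^h(\sn)$ starts at position $b+c+d_1$ of $\sn$. That sign is a slip in the statement. The paper's proof sets $a=n-c-d_1$ and $h=a+b$, i.e.\ $b=h-(n-c-d_1)$, and the lemma is used this way later (e.g.\ $\textbf{v}_n=R^{a+b}(\sn)$ in the proof of Theorem~\ref{thm_lowerbound}, and \eqref{hhh_ran}).

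Read your way, the claim is false. Take $\sn=(1,0,1,0,1,0,1,1,0,1)\in\mathcal{B}(10,6,2)$, which has $t_1=0$, and $R^5(\sn)=(0,1,1,0,1,1,0,1,0,1)\in\mathcal{B}(10,6,3)$, which has $t_2=1$. Then $(n-c-d_1)-h=-3$ (or $7$ modulo $n$) lies outside $[0,4)$, whereas $h-(n-c-d_1)=3$ lies inside. So your placement of $I_2$ is mirrored, Steps 2--3 examine the wrong configurations, and some step must break.

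It is Step 2(a). If $\textbf{v}_n=R^h(\sn)$ starts at a position $p>0$ of $I_1$, the linear word $\textbf{v}_n$ exposes at most a $d_1$-pair of length $c-1-p<c-1$. That contradicts neither $nlc(\textbf{v}_n)=c$ nor Lemma~\ref{lem unique}. In the example, $p=5$ and the exposed pair is empty. Only a start inside the added-term block yields a pair of length $\ge c$.

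In the correct orientation, $\textbf{v}_n$ starts at position $c+d_1-b$ of $\sn$. The flipped term of $\sn$ lands at position $b-1$ of $\textbf{v}_n$, the $d_1$-run occupies $a+b-t_1,\dots,n-1,0,\dots,b-2$, and the $d_2$-run is $[-t_2,c+d_2-2]$ with its break at $c+d_2-1$. The range $0<b<t_1$, where $\textbf{v}_n$ starts within $t_1$ positions before the flipped term of $\sn$, is exactly what Step 2(a) cannot handle, and it is the core of the lower bound.

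The paper handles it through the break of $I_2$. Since $b+a-t_1\le c-d_1$ and $c+d_2\le n+b-1$, both $\textbf{v}_{[c-d_1:c]}$ and $\textbf{v}_{[c+d_2-d_1:c+d_2]}=(v_{c-d_1},\dots,v_{c-2},\overline{v}_{c-1})$ are windows of the $d_1$-run. Hence both are rotations of $\s_{d_1}$ with equal Hamming weight, yet they differ in one coordinate.

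The upper bound is the mirror argument: it compares $\textbf{v}_{[b-d_1-d_2:b-d_2]}=\textbf{v}_{[b-d_1:b]}$ with $(v_{b-d_1},\dots,v_{b-2},\overline{v}_{b-1})$. The relevant overlap is the end of $I_1$ (up to its flipped term) with the beginning of $I_2$, not the start of $I_1$ as in your Step 3. This weight comparison needs no Fine--Wilf or $\gcd(d_1,d_2)$ bookkeeping.

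It does need the preliminary localisation $0\le b\le c+d_2-1$, so that each break falls inside the other run. You assert this in Step 3 without proof. The paper takes it from Lemma~8 of \cite{yuan} for even $n$. For odd $n$ with $c=\n<n/2$, Lemma~\ref{lem unique} does not apply to $\textbf{v}_n$ itself, so the paper applies it to the first $n-1$ terms instead. Your sketch invokes Lemma~\ref{lem unique} under the hypothesis $c\ge n/2$ and leaves this case open.
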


The technical proof of Lemma 	\ref{b range}  	is given in Appendix A.

	\subsection{Sequence representatives with the maximum spacing}\label{Sec3_1}

 Let $c\geq \nup$.
 To
 determine the structure of periodic sequences,  this subsection  will present a necessary  and sufficient condition
such that a sequence  $\sn$ in $\mathcal{B}(n, c, d)$ with the maximum spacing $d=n-c$ can be a
 sequence representative.
 Moreover, the nonlinear complexity of the corresponding periodic sequence $\textbf{s}_n^{\infty}$
 can be directly determined.
 The main result in this subsection is presented below, followed by auxiliary results used in its proof.

\begin{prop}\label{t2>t1 seq} 			
	Let  $c\geq \lceil \frac{2n-1}{3} \rceil $  and $\textbf{s}_{n}$
	be a sequence
	in $\mathcal{B}(n, c, d)$ with 	$d=n-c$.		
	Then
	$\textbf{s}_{n} \in \mathcal{R}(n,c)$ if and only if ${\textbf{s}}_{[d:n]}$
	can't be expressed as ${\textbf{s}}_{[d:n]}=(s_{0},s_{1},\dots,s_{b-1})^{l}$ for any aperiodic subsequence
	$(s_{0},\dots,s_{b-1})$, where $b>1$ a proper divisor of $c$. 	
\end{prop}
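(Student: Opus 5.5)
Since $d=n-c$, we have $c+d=n$, so the free tail $\textbf{s}_{[c+d:n]}$ is empty and $\sn$ is completely determined by the aperiodic block $\textbf{s}_d=(s_0,\dots,s_{d-1})$:
\[
\sn=(s_0,\dots,s_{d-1})^{q}\,(s_0,\dots,s_{r-1},\overline{s}_r),\qquad
\textbf{s}_{[d:n]}=(s_0,\dots,s_{d-1})^{q-1}\,(s_0,\dots,s_{r-1},\overline{s}_r),
\]
where $\textbf{s}_{[d:n]}$ has length $c$. The hypothesis $c\geq\lceil\frac{2n-1}{3}\rceil$ gives $d\leq\frac{n+1}{3}$, so $2d\leq c+1$. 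Hence $\textbf{s}_d$ repeats at least twice inside the prefix of length $c+d-1$, which will let us apply Fine--Wilf type periodicity arguments.

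The strategy is to compare $add(\sn)=t_1$ with $add(R^h(\sn))=t_2$ for every shift $R^h(\sn)\in\mathcal{B}(n,c,d_2)$, using Lemma~\ref{b range}. Here $n-c-d_1=0$, so $b=-h$, read modulo $n$ as the corresponding left shift $L^{|b|}$. The lemma then gives $t_1\leq b<d+d_2-t_2$. By Definition~\ref{ES_Def}, $\sn\notin\mathcal{R}(n,c)$ exactly when some such shift has $t_2>t_1$. So both directions reduce to deciding when a shift with strictly more added terms can exist.

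\emph{Necessity ($\Rightarrow$, by contraposition).} Suppose $\textbf{s}_{[d:n]}=\textbf{u}^l$ with $\textbf{u}=(s_0,\dots,s_{b-1})$ aperiodic, $b>1$ and $b\mid c$, $l=c/b$. Consider $L^d(\sn)=\textbf{u}^l\,\textbf{s}_d$.
\begin{itemize}
\item First, show that $b\neq d$ and that the periodic run of $\textbf{u}$ breaks within the appended block $\textbf{s}_d$. This uses the flipped bit $\overline{s}_r$ and the aperiodicity of $\textbf{s}_d$.
\item Next, compute the length of the longest prefix of some rotation that is $b$-periodic. From this, exhibit an explicit rotation $\textbf{a}_n\in\mathcal{B}(n,c,b)$: take the rotation whose $b$-periodic prefix has length exactly $c+b-1$, followed by the mismatch.
\item Then read off $add(\textbf{a}_n)$ from the wrap-around. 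The tail of $\textbf{a}_n$ consists of the $\textbf{u}$-blocks moved to the end, and these agree with the period-$b$ pattern. This should give $add(\textbf{a}_n)>t_1$, where $t_1$ is computed directly from the end $(\dots,s_{r-1},\overline{s}_r)$ of $\sn$.
\end{itemize}
Consequently $\sn$ is not a representative.

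\emph{Sufficiency ($\Leftarrow$).} Suppose some $R^h(\sn)\in\mathcal{B}(n,c,d_2)$ has $t_2>t_1$. Lemma~\ref{b range} bounds the shift length $b$ by $d+d_2-t_2$. Combining the $d$-periodic prefix of $\sn$ with the $d_2$-periodic prefix of $R^h(\sn)$, together with the $t_2$ wrap-around agreements, produces a long window of $\textbf{s}_{[d:n]}$ carrying two periods $d$ and $d_2$. The overlap has length at least $d+d_2-\gcd(d,d_2)$; this is where $c\geq\frac{2n-1}{3}$ is used. By Fine--Wilf, $\textbf{s}_{[d:n]}$ then has period $g=\gcd(d,d_2)$ or a related period $b'$. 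The boundary conditions (the mismatches $\overline{s}_r$ and $s_{n-1-t}\neq s_{(d-1-t)\bmod d}$) should force this period to divide $c$ exactly, with $b'>1$ because $\sn$ is aperiodic. This is the forbidden form.

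\emph{Main obstacle.} I expect the hardest step to be the sufficiency direction. It requires tracking the indices precisely through the cyclic wrap so that the Fine--Wilf hypothesis actually holds and the resulting period divides $c$ rather than merely $n$. I would first try to handle the case $d_2\leq d$, where the window is longest, and then reduce the case $d_2>d$ to it by applying the symmetric argument with the roles of $\sn$ and $R^h(\sn)$ exchanged.
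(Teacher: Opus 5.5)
\textbf{There is a genuine gap.} Your sufficiency direction is where all the work lies: you must show that no shift in $\mathcal{B}(n,c)$ gains added terms. Its key step asserts, without proof, that the $d$-periodic stretch of $\sn$ and the $d_2$-periodic stretch of $R^h(\sn)$ share a window of length at least $d+d_2-\gcd(d,d_2)$. This does not follow from the relations you list.

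First, a sign issue. With $d_1=n-c$, the $b$ of Lemma~\ref{b range} is simply the right-shift amount $h$. The lemma's proof uses $h=(n-c-d_1)+b$; the sign in the statement is a slip. So the admissible shifts are $t_1\le h<d+d_2-t_2$, not their negatives.

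Now take $d_2<d$ and $d-t_2<h<d+d_2-t_2$. The relations you combine are $s_i=s_{i+d}$ for $i\in[-t_1,c-2]$ and $s_i=s_{i+d_2}$ for $i\in[-t_2-h,c-2-h]$, with indices mod $n$. They directly give both periods only on windows of length at most $\max\{h+t_2-1,\,c-1+t_1-h+d_2\}$. For example, take $n=14$, $c=9$, $d=5$, $d_2=3$, $t_1=0$, $t_2=1$, $h=6$. This meets every numerical constraint available to you, yet gives $6<7=d+d_2-\gcd(d,d_2)$. So the Fine--Wilf step has no window to act on.

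Your closing step is also unsupported. Even where such a window exists, Fine--Wilf gives period $\gcd(d,d_2)$ on a stretch containing a rotation of $\textbf{s}_d$. That contradicts aperiodicity unless $d_2=d$; it does not produce a period dividing $c$.

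\textbf{The missing idea is the rigidity statement of Lemma~\ref{d_2=b}.} It is applied through Corollary~\ref{coro44cc}, which is where $c\ge\lceil\frac{2n-1}{3}\rceil$ enters, via $d_1+d_2\le c+t_1+t_2+1$. The statement is: a shift $R^h(\sn)\in\mathcal{B}(n,c,d_2)$ with at least as many added terms must have $h=d_2$. Its proof uses the complemented bits, not periodicity alone.

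Consider the case $d_2<h\le c+t_1$, which covers the example above, and set $\textbf{v}_n=R^h(\sn)$.
The relations $v_i=v_{i+d}$ on $[h-t_1,h+c-2]$ carry $(v_c,\dots,v_{c+d_2-1})$ onto $(v_0,\dots,v_{d_2-1})$.
The $\mathcal{B}(n,c,d_2)$ structure makes $(v_c,\dots,v_{c+d_2-2},\overline{v}_{c+d_2-1})$ a rotation of $\textbf{v}_{d_2}$.
Hence $(v_0,\dots,v_{d_2-2},\overline{v}_{d_2-1})$ would be a rotation of $\textbf{v}_{d_2}$ with a different Hamming weight, which is impossible.
Once $h=d_2$ is known, comparing the two $\mathcal{B}$-structures gives $d_2\mid c$ and $\textbf{s}_{[d:n]}=(s_0,\dots,s_{d_2-1})^{c/d_2}$.

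Your periodicity idea does settle the other direction, and more strongly than you planned. The first $c-1$ symbols of $\textbf{s}_{[d:n]}$ have period $d$. A decomposition $(s_0,\dots,s_{b-1})^{c/b}$ would add period $b\le c/2$. Since $2d\le c+1$, we get $d+b\le c$, so Fine--Wilf gives period $\gcd(d,b)$ on those symbols. Hence $s_{n-1}=s_{n-1-b}=s_{c-1}$, contradicting the complemented bit.

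So under this hypothesis the forbidden form never occurs, and your rotation construction is vacuous. The proposition is exactly the claim that every sequence of $\mathcal{B}(n,c,n-c)$ is a representative. That is precisely the part your sketch leaves open.
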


To prove Proposition~\ref{t2>t1 seq}, we begin by characterizing the spacing
$d$ of the shifted sequence of $\sn$ with more added terms.

\begin{lemma}\label{d_2=b}
	Let  $c \geq \nup$  and
	$\textbf{s}_{n}$ be a sequence in $\mathcal{B}(n, c, d_1)$ with  $d_1=n-c$, $add(\sn)=t_1$.
	Suppose for certain integer $0 <b<n$, the shifted sequence $R^b(\sn)\in \mathcal{B}(n,c,d_2)$ with  $add(R^b(\sn))=t_2 \geq t_1$. If $d_1+d_2  \leq c+t_2+t_1 +1$,
	then $b=d_2$.   	
\end{lemma}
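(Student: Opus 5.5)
We plan to work with the cyclic sequence $\sni$ and compare the two long periodic stretches it contains: one coming from $\sn$ itself and one coming from $R^b(\sn)$. Since $d_1=n-c$, we have $c+d_1=n$. So $\sn=(s_0,\dots,s_{d_1-1})^{q}(s_0,\dots,s_{r-1},\overline{s}_r)$ has no free tail, and the first $n-1$ terms are $d_1$-periodic. By Definition~\ref{def-t}, the $t_1$ added terms at the end of $\sn$ continue the $d_1$-period backwards, read cyclically. Hence in $\sni$ the window of length $n-1+t_1$ that starts $t_1$ positions before $s_0$ has period $d_1$. This window is maximal: it breaks at $\overline{s}_r$ on the right and at $s_{n-1-t_1}$ on the left.

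Likewise, $R^b(\sn)\in\mathcal{B}(n,c,d_2)$ with $add(R^b(\sn))=t_2$. Hence $\sni$ contains a window of length $c+d_2-1+t_2$ with period $d_2$. It starts $t_2$ positions before index $n-b$ (mod $n$), and its right end is broken by the flipped symbol at index $c+d_2-1-b$ (mod $n$). Lemma~\ref{b range} helps place these windows. Apply it to $R^b(\sn)$ and $\sn$, viewing $\sn=R^{n-b}(R^b(\sn))$ and reducing the shift modulo $n$. This gives explicit bounds on $b$ in terms of $d_1,d_2,t_1,t_2$, which pin down where the $d_2$-window sits relative to the $d_1$-window.

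The core step is an overlap estimate followed by the Fine--Wilf periodicity lemma. The two windows live on a cycle of length $n$ and have total length $(n-1+t_1)+(c+d_2-1+t_2)$. So on the line (using the position information from Lemma~\ref{b range} to avoid wrap-around) they share a common stretch of length at least $c+d_2+t_1+t_2-2$. By the hypothesis $d_1+d_2\le c+t_1+t_2+1$, this is at least $d_1+2d_2-3\ge d_1+d_2-\gcd(d_1,d_2)$, at least when $d_2\ge 2$; the case $d_2=1$ is handled directly. Fine--Wilf then says the common stretch has period $g=\gcd(d_1,d_2)$. The stretch contains a full block of length $d_1$, and $(s_0,\dots,s_{d_1-1})$ is aperiodic, so $g=d_1$. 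Symmetrically, the stretch contains a full $d_2$-block of $R^b(\sn)$, which is aperiodic, so $g=d_2$. Hence $d_1=d_2$, and both windows are extensions of one and the same $d_1$-periodic pattern.

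Next we would use maximality of the windows. A maximal $d_1$-periodic run in $\sni$ overlapping another one by more than $d_1$ must coincide with it. So the two windows have the same right breaking point, namely the position of $\overline{s}_r$ (index $n-1$) and the index $c+d_2-1-b$ (mod $n$). This gives $c+d_2-1-b\equiv n-1\pmod n$, that is, $b\equiv c+d_2-n=d_2-d_1+c+d_1-n\pmod n$. Using $c+d_1=n$, this reduces to $b\equiv d_2-d_1+\ldots$. We will carry out this bookkeeping carefully, also matching the left breaking points together with $t_2\ge t_1$. Combined with $0<b<n$ and the range from Lemma~\ref{b range}, it should force $b=d_2$. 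The main obstacle is this cyclic index bookkeeping: verifying that the two windows genuinely overlap as intervals without wrap-around and by at least $d_1+d_2-g$. This is exactly where the inequality $d_1+d_2\le c+t_1+t_2+1$ and the bounds from Lemma~\ref{b range} must be used precisely, so I would first fix coordinates with $s_{n-1}=\overline{s}_r$ as the anchor and treat the edge cases $d_2=1$ and $t_1=0$ separately.
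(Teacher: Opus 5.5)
\textbf{The core overlap step fails, and the conclusion $d_1=d_2$ it leads to is false.} Work in the coordinates of $\mathbf{s}$ on the line. The maximal $d_1$-run is $W_1=[-t_1,\,n-2]$, of length $n-1+t_1\ge n-1$, so it already wraps essentially the whole cycle. The $d_2$-run has two relevant translates, $[-t_2-b,\,c+d_2-2-b]$ and $[n-t_2-b,\,n+c+d_2-2-b]$. Lemma~\ref{b range} gives $t_1\le b<d_1+d_2-t_2$, and we also have $d_2\le d_1$. With these, the two translates meet $W_1$ in $[-t_1,\,c+d_2-2-b]$ and $[n-t_2-b,\,n-2]$, of lengths $c+d_2-1-b+t_1$ and $b+t_2-1$.

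Your figure $c+d_2+t_1+t_2-2$ is only the \emph{sum} of these two lengths. The two pieces lie in different lifts of the $d_2$-run, i.e.\ in different maximal $d_2$-periodic stretches on the line. So they cannot be merged into one interval carrying both periods, and no choice of coordinates removes this wrap-around. Fine--Wilf needs a single interval of length at least $d_1+d_2-\gcd(d_1,d_2)$, and in general neither piece is that long.

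The claimed consequence $d_1=d_2$ also fails under the hypotheses of the lemma. Take $n=8$, $c=4$, $\mathbf{s}_8=(0100\,0101)\in\mathcal{B}(8,4,4)$, which has $t_1=0$, and take $b=2$. Then $R^2(\mathbf{s}_8)=(0101\,0001)\in\mathcal{B}(8,4,2)$ with $t_2=2$, and $d_1+d_2=6\le c+t_1+t_2+1=7$. Here $b=d_2=2\neq d_1=4$. The two overlap pieces are $[0,2]$ and $[4,6]$, each of length $3<d_1+d_2-\gcd(d_1,d_2)=4$. The construction in the sufficiency part of the proof of Proposition~\ref{t2>t1 seq} produces exactly such pairs with $d_2=b<d_1$.

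Your argument therefore proves too much. If $d_1=d_2$ and the maximal runs coincide, you get $b\equiv d_2-d_1\equiv 0\pmod n$, so no admissible $b$ would exist at all. That is why your final bookkeeping cannot land on $b=d_2$.

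What is missing is a local argument. The paper's route:
\begin{itemize}
\item Use Lemma~\ref{b range} to get $t_1\le b<d_1+d_2-t_2$. The hypothesis then forces $b\le c+t_1$.
\item Assume $b\neq d_2$ and treat the ranges $d_2<b\le c+t_1$, $d_2-t_2\le b<d_2$ and $t_1\le b<d_2-t_2$ separately.
\item In each range, the flipped symbol of one run falls inside the periodic region of the other. This yields a window of length $d_2$ (or $d_1$) that is both a cyclic shift of the aperiodic block and that same shift with exactly one bit complemented. Counting weights makes this impossible.
\end{itemize}
The inequality $d_1+d_2\le c+t_1+t_2+1$ is what makes these placements fit; for example, it gives $b-t_1\le c-d_1$ in the last range.
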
 \cn

The proof of Lemma \ref{d_2=b} is provided in Appendix B.
The condition of $d_1+d_2 \leq c+t_2+t_1+1$ in Lemma~\ref{d_2=b} can be transformed to a condition on nonlinear complexity.

\begin{corollary}\label{coro44cc}
	Let $c\geq \lceil \frac{2n-1}{3} \rceil $ and
	a sequence $\textbf{s}_{n} \in \mathcal{B}(n, c, d_1)$ with $d_1=n-c$ and $add(\sn)=t_1$.
	If there exists a positive integer $0 <b<n$ such that $R^b(\sn)\in \mathcal{B}(n,c,d_2)$ satisfies $add(R^b(\sn))=t_2\geq  t_1$, then $b=d_2$.	
\end{corollary}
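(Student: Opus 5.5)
The plan is to reduce Corollary~\ref{coro44cc} directly to Lemma~\ref{d_2=b}. Since $c\geq \lceil\frac{2n-1}{3}\rceil\geq \nup$, the hypotheses of Lemma~\ref{d_2=b} on $c$, $\sn$, $d_1=n-c$, $b$ and $t_2\geq t_1$ hold. So it suffices to verify the remaining condition
\begin{equation*}
d_1+d_2\leq c+t_2+t_1+1 .
\end{equation*}

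\textbf{Step 1: bound $d_1$.} By Definition~\ref{BB}, $d_1\leq \min\{n-c,\n\}$, and here $d_1=n-c$. From $c\geq \lceil\frac{2n-1}{3}\rceil\geq \frac{2n-1}{3}$ we get
\begin{equation*}
d_1=n-c\leq n-\frac{2n-1}{3}=\frac{n+1}{3}.
\end{equation*}

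\textbf{Step 2: bound $d_2$.} Since $R^b(\sn)\in\mathcal{B}(n,c,d_2)$, Definition~\ref{BB} again gives $d_2\leq \min\{n-c,\n\}\leq n-c$, so also $d_2\leq \frac{n+1}{3}$. (Lemma~\ref{b range} could give extra information on $b$, but it should not be needed for this bound.)

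\textbf{Step 3: compare with $c$.} Using $t_1,t_2\geq 0$, the two bounds give
\begin{equation*}
d_1+d_2\leq 2(n-c).
\end{equation*}
It therefore suffices to show $2(n-c)\leq c+1$, that is, $3c\geq 2n-1$. This is exactly the assumption $c\geq \frac{2n-1}{3}$, and it holds because $c$ is an integer at least $\lceil\frac{2n-1}{3}\rceil$. Lemma~\ref{d_2=b} then yields $b=d_2$.

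\textbf{Expected obstacle.} The argument is essentially arithmetic; the only delicate point is the inequality $2(n-c)\leq c+1$. Integrality and the ceiling must be handled carefully there, since the threshold $\lceil\frac{2n-1}{3}\rceil$ is precisely where this inequality becomes valid. No use of $t_1,t_2>0$ is required, because the bound already holds with $t_1=t_2=0$.
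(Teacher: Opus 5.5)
Your proposal is correct and is exactly the paper's argument: the paper also uses $3c\geq 2n-1$ to get $d_1+d_2\leq 2(n-c)\leq c+1\leq c+t_2+t_1+1$, and then invokes Lemma~\ref{d_2=b}. One small point of phrasing: the nonnegativity of $t_1,t_2$ is what gives the last inequality $c+1\leq c+t_1+t_2+1$; it plays no role in bounding $d_1+d_2$.
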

\begin{proof}
Since $c\geq \lceil \frac{2n-1}{3} \rceil $, we have
$d_1+d_2 \leq 2(n-c) \leq c+1 \leq c+t_2+t_1+1.$
Thus, the result follows directly  from Lemma \ref{d_2=b}.
\end{proof}

	Based on the above preparations,
 Proposition~\ref{t2>t1 seq} is proved below.

\noindent\textbf{Proof of Proposition~\ref{t2>t1 seq}.}	
	According to Definition \ref{ES_Def},
	it suffices to prove that
	$\textbf{s}_{n}$ has a shift equivalent sequence $R^{b}(\textbf{s}_{n})$ in $\mathcal{B}(n,c)$ with $add(R^{b}(\textbf{s}_{n}))>add(\textbf{s}_{n})$ if and only if
	${\textbf{s}}_{[d:n]}$ can be expressed as ${\textbf{s}}_{[d:n]}=(s_{0},\dots,s_{b-1})^{l}$, where $l\geq 2$ and $(s_{0},\dots,s_{b-1})$ is aperiodic.

	For sufficiency, 	
	since ${\textbf{s}}_{[d:n]}=(s_{d},\dots,s_{n-b},\dots,s_{n-1})=(s_{0},s_{1},\dots,s_{b-1})^{l}$ with $b<n-d$, one can get  $(s_{n-b},\dots,s_{n-1})=(s_{0},\dots,s_{b-1})$.
	Moreover, from $\textbf{s}_{n}\in\mathcal{B}(n, c, d)$,
	$${\textbf{s}}_{c}=((s_{0},s_{1},\dots,s_{b-1})^{l-1}(s_{0},s_{1},\dots,s_{b-1}\oplus 1)).$$
	Thus, the shift equivalent sequence $R^{b}(\textbf{s}_{n})$ can be expressed as follows,
	\begin{equation*}
		\begin{array}{cll}
			R^{b}(\textbf{s}_{n}) &=((s_{n-b},\dots,s_{n-1})(s_{0},\dots,s_{c-1})(s_{c},\dots,s_{n-b-1}))\\
			&=((s_{0},\dots,s_{b-1})(s_{0},s_{1},\dots,s_{b-1})^{l-1}(s_{0},s_{1},\dots,s_{b-1}\oplus 1)(s_{c},\dots,s_{n-b-1}))\\
			&=((s_{0},s_{1},\dots,s_{b-1})^{l}(s_{0},s_{1},\dots,s_{b-1}\oplus 1)(s_{c},\dots,s_{n-b-1})).
		\end{array}
	\end{equation*}
	Since $(l-1)b+b=c$, it is clear that $R^{b}(\textbf{s}_{n})$ belongs to the set $\mathcal{B}(n,c,b)$ with $b<n-d=\min\{n-c,c\}$.
	Then it remains to prove that $add(R^{b}(\textbf{s}_{n}))>add(\textbf{s}_{n})$.
	For $\textbf{s}_{n}=(s_{0},\dots,s_{d-1})(s_{0},s_{1},\dots,s_{b-1})^{l}$, let $add(\textbf{s}_{n})=t_{1}$, we have $t_{1}\leq n-c-1\leq d$ by Lemma \ref{c-c+1} (ii).
	According to Definition \ref{def-t}
	and $t_{1}\leq d$, it follows that
	\begin{equation}\label{add-sn}
		\left\{\begin{array}{rll}
			s_{(b-1-i)\, \rm{mod}\,\,b}&=&s_{d-1-i},\,\, 0\leq i< t_{1}, \\
			s_{(b-1-t_{1})\, \rm{mod}\,\,b}& \neq  &s_{d-1-t_{1}}.
		\end{array}
		\right.
	\end{equation}
	Since $bl=n-d\geq d\geq t_{1}$, the added terms of $\textbf{s}_{n}$ is a subsequence of length $t_{1}$ in the end of $(s_{0},s_{1},\dots,s_{b-1})^{l}$.
	Based on
	${\textbf{s}}_{[d:n]}=(s_{0},s_{1},\dots,s_{b-1})^{l}$, $R^{b}(\textbf{s}_{n})$ can also be represented in the following form,
	\begin{equation*}
		\begin{array}{cll}
			R^{b}(\textbf{s}_{n}) &=((s_{n-b},\dots,s_{n-1})(s_{0},\dots,s_{d-1})(s_{d},\dots,s_{n-b-1}))\\
			&=((s_{0},\dots,s_{b-1})(s_{0},\dots,s_{d-1})(s_{0},\dots,s_{b-1})^{l-1}).
		\end{array}
	\end{equation*}
	Thus, it follows from Definition \ref{def-t} and (\ref{add-sn}) that
	\begin{equation}\label{t2t2t2}
		add(R^{b}(\textbf{s}_{n}))=(l-1)b+add(\textbf{s}_{n}).
	\end{equation}
	Hence, $R^{b}(\textbf{s}_{n}) \in \mathcal{B}(n,c,b)$ and $add(R^{b}(\textbf{s}_{n}))>add(\textbf{s}_{n})$ by $l\geq 2$.

	For necessity,
	since $\textbf{s}_{n}$ have shift equivalent sequences
	$R^{b}(\textbf{s}_{n})\in\mathcal{B}(n,c)$ with $add(R^{b}(\textbf{s}_{n}))$\\
	$>add(\textbf{s}_{n})$,
	from $c\geq \lceil \frac{2n-1}{3} \rceil $ and Corollary \ref{coro44cc},
	we have $R^{b}(\textbf{s}_{n})\in\mathcal{B}(n,c,b)$.	
	Let  $\textbf{v}_{n}=R^{b}(\textbf{s}_{n})$, then $s_i=v_{i+b}$
	and $v_{i_1}=v_{i_1+b}$ with $0\leq i_1 \leq c$.
	Thus $(v_0,v_1,\dots,v_{b-1})=(v_b,v_{b+1},\dots,v_{2b-1})=(s_0,s_1,\dots,s_{b-1})$.
	It implies that $(s_{0},\dots,s_{b-1})$ is aperiodic and
	\begin{equation}\label{v_c+b}
		\textbf{v}_{c+b}=(s_0,s_1,\dots,s_{b-1})^q(s_0,\dots,s_{r-1},\overline{s}_{r}).
	\end{equation}
	Due to $\textbf{s}_{n}\in\mathcal{B}(n,c,d)$, we have
	$(s_{c-b},\dots,s_{c-2},s_{c-1})=(s_{n-b},\dots,s_{n-2},\overline{s}_{n-1})$,
	that is to say,
	$$(v_{c},\dots,v_{c+b-2},v_{c+b-1})=(v_{0},\dots,v_{b-2},\overline{v}_{b-1}).$$
	Recall that  $(v_0,v_1,\dots,v_{b-1})=(s_0,s_1,\dots,s_{b-1})$,
	thus it is evident that
	$$(v_{c},\dots,v_{c+b-2},v_{c+b-1})=(s_0,s_1,\dots,\overline{s}_{b-1}).$$
	Together with \eqref{v_c+b},
	it can be seen that
	$(s_{r+1},\dots,{s}_{b-1})(s_0,\dots,s_{r-1},\overline{s}_{r})=(s_0,s_1,\dots,\overline{s}_{b-1})$,
	which implies $r=b-1$ since all shifts of a aperiodic sequence $\s_b$ are different.
	Therefore, $\textbf{v}_{c+b}=(s_0,s_1,\dots,s_{b-1})^{q}(s_0,s_1,\dots,\overline{s}_{b-1})$.
	So the first $c$ terms of $R^{b}(\textbf{s}_{n})$ provide that
	$$(s_{n-b},\dots,s_{n-1},s_{0},\dots,s_{c-b-1})=(s_{0},\dots,s_{b-1})^{l}.$$
	And it follows from $s_i=s_{i+d}$, $0 \leq i \leq c-2$ that
	$$
	((s_{0},s_{1},\dots,s_{c-b-1})(s_{n-b},\dots,s_{n-1}))
	=((s_{d},s_{d+1},\dots,s_{n-b-1})(s_{n-b},\dots,s_{n-1}))
	={\textbf{s}}_{[d:n]}.$$
	Combining the above two equations implies ${\textbf{s}}_{[d:n]}=(s_{0},s_{1},\dots,s_{b-1})^{l}$ with $n-d=c= bl$.
	 The desired conclusion thus follows. 	
\hfill $\square$


Following the proof for the necessity in Proposition \ref{t2>t1 seq}, we immediately have the following
corollary.

\begin{corollary}\label{myloveadd}
	Let $c \geq \nup$ and $\textbf{s}_{n} \in \mathcal{B}(n, c, d)$ with $d=n-c$.
	If its shifted sequence $R^{b}(\textbf{s}_{n}) \in \mathcal{B}(n, c, b)$, then
	
	\noindent{\rm (i)} when $add(R^{b}(\textbf{s}_{n})) \geq add(\textbf{s}_{n})$,   we have ${\textbf{s}}_{[d:n]}=(s_{0},s_{1},\dots,s_{b-1})^{l}$, where
	$(s_{0},\dots,s_{b-1})$ is a certain  aperiodic subsequence and the integer   $l \geq 2$.
	
	\noindent{\rm (ii)} when $add(R^{b}(\textbf{s}_{n})) > add(\textbf{s}_{n})$,  we have $1\leq b < d$, $b\,|\,n-d$ and if $d\neq \frac{n}{2}$, then $b\nmid d$.
\end{corollary}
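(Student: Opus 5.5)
Both parts follow by rerunning the necessity argument in the proof of Proposition~\ref{t2>t1 seq}, now starting from the hypothesis $R^{b}(\sn)\in\mathcal{B}(n,c,b)$ rather than deriving it from Corollary~\ref{coro44cc}.

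For (i), set $\textbf{v}_n=R^b(\sn)$, so $v_{i+b}=s_i$. The spacing-$b$ structure of $\textbf{v}_n$ gives $v_i=v_{i+b}$ for $0\le i<c-1$. Hence $(v_0,\dots,v_{b-1})=(s_0,\dots,s_{b-1})$ is aperiodic, and $\textbf{v}_{c+b}=(s_0,\dots,s_{b-1})^q(s_0,\dots,s_{r-1},\overline{s}_r)$. The spacing-$d$ structure of $\sn$, via $s_i=s_{i+d}$ for $i\le c-2$ and $s_{c-1}\neq s_{n-1}$, yields $(v_c,\dots,v_{c+b-1})=(v_0,\dots,v_{b-2},\overline v_{b-1})$. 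Comparing this with the block just derived, and using that the cyclic shifts of an aperiodic block are distinct, forces $r=b-1$, i.e.\ $b\mid c$. With $l=c/b$, the first $c$ terms of $\textbf{v}_n$ give $(s_{n-b},\dots,s_{n-1},s_0,\dots,s_{c-b-1})=(s_0,\dots,s_{b-1})^l$. The relation $s_i=s_{i+d}$ then rewrites $\textbf{s}_{[d:n]}$ as $(s_0,\dots,s_{c-b-1})(s_{n-b},\dots,s_{n-1})=(s_0,\dots,s_{b-1})^l$.

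It remains to get $l\ge2$, i.e.\ $b<c$. The paper does not need to say explicitly where the hypothesis $add(R^b(\sn))\ge add(\sn)$ enters, and I expect this to be the main obstacle. I would argue as follows. If $b\ge c$, then since $b\le\n$ we would have $c\le n/2$, so with $c\ge\nup$ this forces $c=b=d=n/2$. In that case $R^b(\sn)$ is a half-rotation. I would rule it out, or absorb it into a degenerate $l$, by computing the added terms of both sequences directly and showing that the inequality $t_2\ge t_1$ fails. Lemma~\ref{b range}, with its normalization $b=(n-c-d_1)-h$, gives the admissible range of shifts and should settle this edge case.

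For (ii), part (i) applies because strict inequality implies $\ge$. So $n-d=c=bl$ with $l\ge2$, giving $b\mid n-d$ and $b<c$. The proof of Proposition~\ref{t2>t1 seq} shows $add(R^b(\sn))=(l-1)b+add(\sn)$ when $b$ divides $\textbf{s}_{[d:n]}$ this way; the new added terms come from the tail $(s_0,\dots,s_{b-1})^{l-1}$, valid when $t_1\le d$. To get $b<d$, I would invoke Lemma~\ref{c-c+1}(ii), which gives $add(R^b(\sn))<n-c=d$. Then $(l-1)b<d$ and $l\ge2$ give $b<d$.

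For $b\nmid d$ when $d\ne n/2$, suppose $b\mid d$. Write $\sn=(s_0,\dots,s_{d-1})(s_0,\dots,s_{b-1})^l$. The defining relation $s_i=s_{i+d}$ for $i\le c-2$, with $b\mid d$, makes $(s_0,\dots,s_{d-1})$ a power of the length-$b$ block, since $\textbf{s}_{[d:2d]}$ lies inside $(s_0,\dots,s_{b-1})^l$ when $d<n/2$. But $\textbf{s}_d$ must be aperiodic, so $b=d$, contradicting $b<d$. The case $d=n/2$ is exactly where the containment $2d\le n$ becomes equality and the argument degenerates, which explains the exclusion.
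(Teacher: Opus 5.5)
Your generic argument (the case $b<c$) is the paper's necessity argument and is fine. The gap is the edge case $c=d=b=n/2$, which you correctly isolate but propose to eliminate by showing that $t_2\ge t_1$ fails. That cannot work, because there the hypothesis actually holds.

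\begin{itemize}
\item For $\textbf{s}_n\in\mathcal{B}(n,n/2,n/2)$ the free tail is empty, so $\textbf{s}_n=\textbf{s}_d\,(s_0,\dots,s_{d-2},\overline{s}_{d-1})$. Hence $s_{n-1}\neq s_{d-1}$ and $add(\textbf{s}_n)=0$.
\item $R^{n/2}(\textbf{s}_n)=(s_0,\dots,s_{d-2},\overline{s}_{d-1})\,\textbf{s}_d$ has exactly the same shape. So it lies in $\mathcal{B}(n,n/2,n/2)$ whenever its first half is aperiodic, and again has $add=0$.
\item Lemma~\ref{b range} gives only $t_1\le n/2<d_1+d_2-t_2$, i.e.\ $0\le n/2<n$, which excludes nothing.
\item A degenerate $l=1$ does not rescue the conclusion either, since $\textbf{s}_{[d:n]}=(s_0,\dots,s_{d-2},\overline{s}_{d-1})\neq\textbf{s}_d$.
\end{itemize}

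Concretely, $\textbf{s}_6=(0,1,1,0,1,0)\in\mathcal{B}(6,3,3)$ and $R^3(\textbf{s}_6)=(0,1,0,0,1,1)\in\mathcal{B}(6,3,3)$, both with $add=0$. Yet $\textbf{s}_{[3:6]}=(0,1,0)$ is not a power of $(0,1,1)$. So (i) as stated is false in this case, and no argument can close your step. Your very first equality $(v_0,\dots,v_{b-1})=(v_b,\dots,v_{2b-1})$ also silently uses $v_i=v_{i+b}$ up to $i=b-1\le c-2$, i.e.\ it already assumes $b<c$.

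What can be proved is (i) under the extra assumption $b<c$. This is automatic when $c>n/2$, which is the regime $c\ge\lceil\frac{2n-1}{3}\rceil$ of Proposition~\ref{t2>t1 seq}. The paper imports (i) from that proposition's necessity proof, so its one-line proof carries the same hidden restriction.

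Part (ii) is true as stated. In the edge case the two add values coincide, which contradicts the strict hypothesis; this is exactly how the paper disposes of $b=d=n/2$. For $b<c$ your argument goes through. So restructure: first dispatch $b=c$ via $add(R^{n/2}(\textbf{s}_n))=add(\textbf{s}_n)=0$, which kills only the strict version, and then run the necessity argument.

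The rest of your (ii) is correct and improves on the paper's route. The paper gets $b\le d$ from the spacing bound. It then excludes $b=d$ for $d\neq n/2$ via the aperiodicity of $\textbf{s}_d=(s_0,\dots,s_{b-1})^m$, which yields no contradiction when $m=1$. Your bound $(l-1)b+add(\textbf{s}_n)=add(R^b(\textbf{s}_n))<n-c=d$, from Lemma~\ref{c-c+1}(ii), gives $b<d$ uniformly. Then $b\mid d$ forces $m\ge 2$, so your $b\nmid d$ step is airtight.

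One small correction to your explanation: the exclusion $d\neq n/2$ is needed simply because then $d=c$, so $b\mid n-d=d$ holds automatically.
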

\begin{proof}
	(i) It follows from  the necessity proof for Proposition \ref{t2>t1 seq}.
	
	(ii)  By (i), we have ${\textbf{s}}_{[d:n]}=(s_{0},\dots,s_{b-1})^{l}$, thus it follows $b\,|\,n-d$.
Since $R^{b}(\textbf{s}_{n})\in\mathcal{B}(n,c,b)$, we get $1\leq b \leq d=n-c$.
	When $d\neq \frac{n}{2}$,
	$\textbf{s}_{[d:2d]}$ is a subsequence of ${\textbf{s}}_{[d:n]}=(s_{0},\dots,s_{b-1})^{l}$.
	Hence, if $b\,|\, d$, then $\textbf{s}_{d}=\textbf{s}_{[d:2d]}=(s_{0},\dots,s_{b-1})^{m}$ with $d=bm$.
From (\ref{structure of finite}), we know that $\textbf{s}_{d}$ is an aperiodic sequence, a contradiction.
Thus, if $d\neq \frac{n}{2}$, then $b\nmid d$ and $b<d$.
When $d=  \frac{n}{2}$, if $b=d$, then it is clear that $add(R^b(\textbf{s}_{n}))=add(\sn)$, a contradiction.
Hence, $b \neq d$.
Therefore, $1\leq b < d$, $b\,|\,n-d$ and if $d\neq \frac{n}{2}$, then $b\nmid d$.
\end{proof}

According to Proposition \ref{t2>t1 seq}, if there exists an integer $b$ with $1<b<c$ such that ${\textbf{s}}_{[d:n]}=(s_{0},\dots,s_{b-1})^{l}$ for an aperiodic subsequence $(s_{0},\dots,s_{b-1})$, then such an integer
$b$ is unique, thereby implying that the sequence $R^{b}(\textbf{s}_{n})$  is also unique.	
	Now we are ready to give the nonlinear complexity of the corresponding
	periodic sequence $\textbf{s}_n^{\infty}$.

					\begin{corollary}\label{cor2}
						Let  $c\geq \lceil \frac{2n-1}{3} \rceil$.
	For $\textbf{s}_{n} \in \mathcal{B}(n, c, d)$ with  $d=n-c$, 					
 if ${\textbf{s}}_{[d:n]}=(s_{0},s_{1},\dots,s_{b-1})^{l}$ with an aperiodic subsequence $(s_{0},s_{1},\dots,s_{b-1})$ and an integer $l\geq 2$,
		then $R^{b}(\textbf{s}_{n})$ is the sequence representative and $nlc(\textbf{s}_n^{\infty})=c+n-d-b+add(\textbf{s}_{n})$;
		otherwise, $\textbf{s}_{n}$ is the sequence representative and $nlc(\textbf{s}_n^{\infty})=c+add(\textbf{s}_{n})$.		
	\end{corollary}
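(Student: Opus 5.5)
The plan is to split according to whether $\textbf{s}_{[d:n]}$ is a proper power. In each case I first identify the representative of $\textbf{s}_n$, and then read off the nonlinear complexity from Lemma~\ref{c+t}. Throughout I use that the nonlinear complexity of a periodic sequence is invariant under circular shifts. Hence $nlc(\textbf{s}_n^{\infty})=nlc((R^{b}(\textbf{s}_n))^{\infty})$ for every $b$.

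\emph{Non-power case.} Suppose $\textbf{s}_{[d:n]}$ is not of the form $(s_0,\dots,s_{b-1})^{l}$ with $l\ge 2$. Since $n-d=c$, Proposition~\ref{t2>t1 seq} applies directly and gives $\textbf{s}_n\in\mathcal{R}(n,c)$. The only gap is the degenerate value $b=1$, which the proposition excludes. I will check that the sufficiency part of that proof works verbatim for $b=1$: it only uses $l\ge 2$ and $b<n-d$. So a power with $b=1$ also produces a shift with strictly more added terms, and such $b=1$ powers belong to the other case. Lemma~\ref{c+t} and $nlc(\textbf{s}_n)=c$ then give $nlc(\textbf{s}_n^{\infty})=c+add(\textbf{s}_n)$.

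\emph{Power case.} Suppose $\textbf{s}_{[d:n]}=(s_0,\dots,s_{b-1})^{l}$ with $l\ge 2$ and $(s_0,\dots,s_{b-1})$ aperiodic. The sufficiency part of the proof of Proposition~\ref{t2>t1 seq} shows that $R^{b}(\textbf{s}_n)\in\mathcal{B}(n,c,b)$. Equation~\eqref{t2t2t2} gives $add(R^{b}(\textbf{s}_n))=(l-1)b+add(\textbf{s}_n)$, which is strictly larger than $add(\textbf{s}_n)$.

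It remains to show that no element of $E(\textbf{s}_n)$ has more added terms than $R^{b}(\textbf{s}_n)$.
\begin{enumerate}
\item Take any $\textbf{a}_n=R^{b'}(\textbf{s}_n)\in E(\textbf{s}_n)$ with $add(\textbf{a}_n)\ge add(\textbf{s}_n)$. Corollary~\ref{coro44cc}, which applies since $c\ge\lceil\frac{2n-1}{3}\rceil$, forces $\textbf{a}_n\in\mathcal{B}(n,c,b')$.
\item If $b'=d$, then $\textbf{a}_n=R^{d}(\textbf{s}_n)$. This needs a separate check: I expect this forces $d=\frac n2$, and then $add(\textbf{a}_n)=add(\textbf{s}_n)$ as in Corollary~\ref{myloveadd}(ii).
\item Otherwise, Corollary~\ref{myloveadd}(i) gives $\textbf{s}_{[d:n]}=(s_0,\dots,s_{b'-1})^{l'}$ with $(s_0,\dots,s_{b'-1})$ aperiodic and $l'\ge 2$.
\item A word has a unique primitive root, and $(s_0,\dots,s_{b-1})$ and $(s_0,\dots,s_{b'-1})$ are both aperiodic words whose powers equal $\textbf{s}_{[d:n]}$. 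Hence $b'=b$ and $\textbf{a}_n=R^{b}(\textbf{s}_n)$.
\item Consequently every element of $E(\textbf{s}_n)$ has at most $\max\{add(\textbf{s}_n),add(R^{b}(\textbf{s}_n))\}=add(R^{b}(\textbf{s}_n))$ added terms. So $R^{b}(\textbf{s}_n)$ is the representative, and it is unique, in agreement with the remark after Corollary~\ref{myloveadd}.
\end{enumerate}

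Since $R^{b}(\textbf{s}_n)\in\mathcal{R}(n,c)$ has nonlinear complexity $c$, Lemma~\ref{c+t} gives
$nlc(\textbf{s}_n^{\infty})=c+(l-1)b+add(\textbf{s}_n)$. Using $lb=n-d$, this is $c+n-d-b+add(\textbf{s}_n)$.

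\emph{Main obstacle.} The delicate step is the maximality argument in the power case. It hinges on two points: Corollary~\ref{coro44cc} must pin down the spacing of any competing shift, and the primitive root of $\textbf{s}_{[d:n]}$ must be unique. I will also double-check the edge situations $b=1$ and $d=\frac n2$, where Corollary~\ref{myloveadd}(ii) carries special clauses.
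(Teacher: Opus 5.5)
Your proposal is correct and follows the paper's route. The sufficiency computation in Proposition~\ref{t2>t1 seq} together with \eqref{t2t2t2} gives $add(R^{b}(\textbf{s}_n))=(l-1)b+add(\textbf{s}_n)$, and Lemma~\ref{c+t} then yields both formulas. Your steps 1--4 (Corollary~\ref{coro44cc}, Corollary~\ref{myloveadd}(i) and uniqueness of the primitive root) only make explicit the uniqueness of $b$ that the paper asserts without detail in its remark after Corollary~\ref{myloveadd}. Your step 2 for $b'=d$ needs no separate check: Corollary~\ref{myloveadd}(i) does not assume $b'\neq d$, and here $b'\le d<c$, so steps 3--4 already give $b'=b$ in that case too.
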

	\begin{proof}
 According to Proposition \ref{t2>t1 seq},  $\textbf{s}_{n} \in \mathcal{B}(n, c,d )$ with $d=n-c$ has either only one shift equivalent sequence $R^{b}(\textbf{s}_{n})$ with $add(R^{b}(\textbf{s}_{n}))$
		$>add(\textbf{s}_{n})$ or no such shift equivalent sequence.
		When there exists an integer $b$ satisfying ${\textbf{s}}_{[d:n]}=(s_{0},\dots,s_{b-1})^{l}$,
		$R^{b}(\textbf{s}_{n})$ is the sequence representative of $ES(\textbf{s}_{n})$.
		By (\ref{t2t2t2}), we have $add(R^{b}(\textbf{s}_{n}))=(l-1)b+add(\textbf{s}_{n})=
		n-d-b+add(\textbf{s}_{n})$.
		Due to Lemma \ref{c+t}, we have
		$nlc(\textbf{s}_n^{\infty})=c+add(R^{b}(\textbf{s}_{n}))=	n+c-d-b+add(\textbf{s}_{n})$.
		When there is not a positive integer $b$ satisfying ${\textbf{s}}_{[d:n]}=(s_{0},\dots,s_{b-1})^{l}$,
		it follows from Proposition \ref{t2>t1 seq} that $\textbf{s}_{n}$ is the sequence representative of $ES(\textbf{s}_{n})$, which implies $nlc(\textbf{s}_n^{\infty})=c+add(\textbf{s}_{n})$ by Lemma \ref{c+t}. The proof is finished.
	\end{proof}

From experimental data, we observe that Corollary \ref{coro44cc} also holds with a less restrictive condition on the nonlinear complexity $c$.
Nevertheless, it cannot be proved by the same technique. We provide a conjecture below. Note that if the conjecture holds, then statements in Proposition \ref{t2>t1 seq} and Corollary \ref{cor2} can be similarly proved for sequences $\textbf{s}_{n}\in \mathcal{B}(n, c, n-c)$ with $c\geq \nup$.

	\begin{conjecture}
		For $c\geq \nup$ and  $\textbf{s}_{n}\in \mathcal{B}(n, c, n-c)$,
	if there exists an integer $0 <b<n$ such that
	$R^b(\sn)\in \mathcal{B}(n,c,d_2)$ satisfies $add(R^b(\sn))> add(\sn)$,
	then $b=d_2$.
	\end{conjecture}

	\subsection{Determining sequence representatives with high nonlinear complexity}\label{Sec4}
	For $\sn$ in $\mathcal{B}(n,c)$ with $c\geq \n$,  the set $E(\sn) = \{L^k(\sn)\,:\, 0\leq k<n \}\cap \mathcal{B}(n,c)$
	may contain several sequences.
	In this section, we shall establish  a lower bound $c_{0}$ such that a sequence $\sn$ with nonlinear complexity $c\geq c_0$
	is a sequence representative, which allows us to
	determine the structure of periodic sequences
	and calculate the nonlinear complexity of $\sni$ directly as $nlc(\sni)=nlc(\sn)+add(\sn)$.

	\begin{theorem}\label{thm_lowerbound}
		Let $n\geq 1$ and $k$ be integers.
		For a sequence $\sn \in \mathcal{B}(n, c)$,
		if $c \geq c_0=\lfloor\frac{3n}{4}\rfloor$, then
		$\sn \in \mathcal{R}(n, c)$ and the nonlinear complexity of $\sni$ satisfies $nlc(\textbf{s}_n^{\infty})=nlc(\sn)+add(\sn)$.
		Moreover, the tight bound $c_0$ for even $n$ is
		$
		c_0= \begin{cases}
			\lfloor\frac{3n}{4}\rfloor- 1, & \text{ if } n= 8k, \,{ \forall} k,\\
			\lfloor\frac{3n}{4}\rfloor, & \text{ if } n = 8k+2,\,8k+6,\,{ \forall} k,\\
			\lfloor\frac{3n}{4}\rfloor- 2,& \text{ if } n = 8k+4, k \geq 2.\\
		\end{cases}
		$	
		%
		%
		\end{theorem}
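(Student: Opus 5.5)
The plan is to show that when $c\geq\lfloor\frac{3n}{4}\rfloor$, no shift $R^h(\sn)$ with $0<h<n$ lies in $\mathcal{B}(n,c)$ with strictly more added terms than $\sn$. By Definition~\ref{ES_Def} this gives $\sn\in\mathcal{R}(n,c)$, and Lemma~\ref{c+t} then yields $nlc(\sni)=nlc(\sn)+add(\sn)$.

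\textbf{Step 1 (setup).} Suppose $\sn\in\mathcal{B}(n,c,d_1)$ with $add(\sn)=t_1$, and $R^h(\sn)\in\mathcal{B}(n,c,d_2)$ with $add(R^h(\sn))=t_2>t_1$. Since $d_i\leq n-c\leq \lceil n/4\rceil$, both spacings are small relative to $c$. By Lemma~\ref{b range}, the quantity $b=(n-c-d_1)-h$ satisfies $t_1\leq b<d_1+d_2-t_2$. In particular $h$ lies within $d_1+d_2$ of $n-c-d_1$.

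\textbf{Step 2 (periodicity argument).} The prefix of $\sn$ of length $c+d_1-1$ has period $d_1$. The prefix of $R^h(\sn)$ of length $c+d_2-1$ has period $d_2$, and it occupies positions $n-h,\dots$ of $\sn$ cyclically. Because $c\geq 3n/4$, these two windows overlap in $\sn$ (cyclically) in a segment of length at least roughly $2c-n+\min(d_1,d_2)\geq d_1+d_2-\gcd(d_1,d_2)$. The Fine--Wilf theorem then applies, so the overlap has period $g=\gcd(d_1,d_2)$. Aperiodicity of $\s_{d_1}$ and $\s_{d_2}$ (Definition~\ref{BB}) forces $d_1=d_2=d$. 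The two $d$-periodic runs, each ended by a flipped symbol $\overline{s}_r$, must then be aligned. Otherwise the flip at position $c+d_1-1$ would fall inside the other run, contradicting its periodicity, or the $d$-periodic run would extend and make $\sn$ periodic. Alignment forces $h\equiv 0$, or it pins the added terms so that $t_2\leq t_1$, which contradicts the assumption. When $d_1=n-c$, I will instead invoke Corollary~\ref{myloveadd}(ii) and Proposition~\ref{t2>t1 seq}: a better shift requires $\s_{[d:n]}=(s_0,\dots,s_{b-1})^l$ with $b<d$, and I would rule this out by checking that it forces a longer repeated pair than $c-1$, contradicting Lemma~\ref{lem unique}.

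\textbf{Step 3 (tightness for even $n$).} For each residue class of $n$ modulo $8$, I will exhibit explicit sequences with $c=c_0-1$ in which some shift $R^h(\sn)$ has more added terms. The natural construction uses $d_2$ a divisor pattern as in the sufficiency part of Proposition~\ref{t2>t1 seq}, with $d_1=n-c$ and $\s_{[d_1:n]}$ a power of a shorter block $b\mid c$, $b\nmid d_1$. For $n=8k$ one can still realize this with $c=\frac{3n}{4}-2$, while $c=\frac{3n}{4}-1$ admits no such configuration. For $n=8k+4$ the divisibility constraints are more flexible, which pushes the tight value down to $\lfloor 3n/4\rfloor-2$. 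Conversely, showing that $c_0$ suffices in these refined cases requires rerunning Step 2 with the slack $b<d_1+d_2-t_2$ and the divisibility conditions $b\mid c$, $b\nmid d$, $b<d$ from Corollary~\ref{myloveadd}.

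\textbf{Main obstacle.} The main obstacle is Step 2 at the boundary, where the overlap length barely meets the Fine--Wilf threshold, together with the case analysis for tightness. Near $c\approx 3n/4$, exact counting of the overlap, including the cyclic wraparound and the unconstrained tail $\s_{[c+d:n]}$, determines success, and the parity of $n$ modulo $8$ enters there. I would first handle $d_1=d_2$ and $d_1=n-c$ via Corollary~\ref{cor2}, then treat $d_1\neq d_2$ by Fine--Wilf, checking the small residual cases by hand.
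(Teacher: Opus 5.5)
\textbf{First statement.} Your Fine--Wilf route differs from the paper's, which never looks at overlaps. The paper moves both sequences down to level $m=\lfloor n/2\rfloor$ with $L^{c-m}$ (Lemma~\ref{c-c+1}(i)). This raises both added-term counts by $c-m$, so the level-$m$ counts satisfy $t_1+t_2\ge 2(c-m)+1\ge d_1+d_2$, contradicting $b<d_1+d_2-t_2$ from Lemma~\ref{b range}. The strictness $d_1+d_2<2(n-c)$ when $d_1=n-c$ comes from Corollaries~\ref{coro44cc} and~\ref{myloveadd}(ii).

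Your route can be repaired, but the overlap estimate in Step~2 is false. Use the convention $h=(n-c-d_1)+b$ of Appendix~A; the displayed statement of Lemma~\ref{b range} has the sign reversed. Then the prefix of $R^h(\sn)$ starts at position $c+d_1-b$ of $\sn$, shortly before the flip, and wraps around. Its overlap with the $d_1$-periodic prefix of $\sn$ therefore splits into two pieces, of lengths $b-1$ and $\ell=2c-n+d_1+d_2-1-b$. Lemma~\ref{b range} only gives $\ell\ge 2c-n+t_2$, not $2c-n+\min(d_1,d_2)$.

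The paper's witness for $n=8k+2$ refutes your bound. There $c=6k$, $d_1=2k$, $d_2=2k+1$, $t_1=0$, $t_2=1$, $b=4k-1$, and both pieces are shorter than $d_1+d_2-\gcd(d_1,d_2)=4k$. Your estimate gives $6k-2$, which would wrongly certify a non-representative.

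Done correctly, the argument goes as follows.
\begin{itemize}
\item If $d_1\ne d_2$, then $d_1+d_2-\gcd(d_1,d_2)\le 2(n-c)-2$. Together with $t_2\ge 1$, the long piece reaches the Fine--Wilf threshold as soon as $4c\ge 3n-3$, which is equivalent to $c\ge\lfloor 3n/4\rfloor$. That piece begins with $\s_{d_1}$ and contains a $d_2$-window of $R^h(\sn)$, so primitivity forces $d_1=d_2$.
\item If $d_1=d_2$, the flip pair of $R^h(\sn)$ sits at positions $2c+d_1-b-1-n$ and $2c+d_1+d_2-b-1-n$ of $\sn$. Both lie inside the $d_1$-periodic prefix of $\sn$, which is a contradiction.
\end{itemize}
So $t_2>t_1$ is needed in the overlap count itself, and the separate treatment of $d_1=n-c$ is unnecessary.

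\textbf{Tightness: a genuine gap.} Your witnesses have $d_1=n-c$ and $\s_{[d_1:n]}=(s_0,\dots,s_{b-1})^l$ with $b<d_1$. Such sequences do not exist in the relevant range. The block $(s_{d_1},\dots,s_{n-2})$ has periods $d_1$ and $b$, and its length is $c-1\ge 2d_1-2\ge b+d_1-\gcd(b,d_1)$ once $3c\ge 2n-1$. Fine--Wilf then gives $\s_{d_1}=\s_{[d_1:2d_1]}$ the period $\gcd(b,d_1)<d_1$, contradicting aperiodicity. This is exactly the obstruction your Step~2 uses for $d_1=n-c$, and under the hypothesis of Proposition~\ref{t2>t1 seq} it shows the non-representative alternative never occurs.

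Since $c_0-1\ge\lceil\frac{2n-1}{3}\rceil$ for all but the first few $k$ in each residue class, this pattern cannot witness tightness. For instance, your claim for $n=8k$ at $c=\frac{3n}{4}-2$ fails for $k\ge 3$. The paper's witnesses have non-maximal spacing on both sides, e.g.\ $\mathbf{u}_n\in\mathcal{B}(8k+2,6k,2k)$ with $R^{4k+1}(\mathbf{u}_n)\in\mathcal{B}(8k+2,6k,2k+1)$.

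The other half of the tightness claim is also missing. For $n=8k$ and $n=8k+4$ you must show that $c\ge\lfloor 3n/4\rfloor-1$, respectively $c\ge\lfloor 3n/4\rfloor-2$, still suffices. Your overlap count, even done correctly, only reaches $c\ge\lfloor 3n/4\rfloor$. The divisibility conditions of Corollary~\ref{myloveadd} concern only the empty maximal-spacing case, so they do not help.

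The paper needs a finer argument (Appendix~C). It splits cases according to a multiple $d_2 i$ relative to $m+d_2-b$ and reduces to $2b=d_1+d_2+t_1-t_2$, $d_2\mid m+d_2-b$, $d_1\mid m+b-d_2$. It then pins down $t_1=2k-1$, $d_1=2k$, $d_2=2k+2$, $b=2k$, and gets a contradiction from two expressions of $\mathbf{v}_{d_2}$.
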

		\begin{proof}
			We first transform the cyclic shift equivalent sequences in $\mathcal{B}(n,c)$ into the shift equivalent sequences in $\mathcal{B}(n,\n)$ with some restrictions on the parameters of sequences. Then
			we shall use the properties of shift equivalent sequences in $\mathcal{B}(n,\n)$, given by Lemma \ref{b range}, to prove the first statement $\sn \in \mathcal{R}(n, c)$ .

			When $c \geq \lfloor\frac{3n}{4}\rfloor$,	
		suppose there exists a pair of shift equivalent sequences $(\textbf{s}'_n,\textbf{v}'_n)$ in $\mathcal{B}(n, c)$
		with
		$0 \leq add(\textbf{s}'_n)<add(\textbf{v}'_n)$.
		Then
		let  $\textbf{s}'_n \in \mathcal{B}(n, c,d_1)$ and
		$\textbf{v}'_n\in \mathcal{B}(n, c,d_2)$.
		Since $c>\frac{n}{2}$, we have $d_1 \leq n-c$ and $d_2 \leq n-c$.
		If $d_1=n-c$, then from $c \geq \lfloor\frac{3n}{4}\rfloor  \geq   \lceil \frac{2n-1}{3} \rceil$
		and Corollary \ref{coro44cc}, we get $b=d_2$.
		Thus by Corollary \ref{myloveadd} (ii), we know that $d_2=b <d_1= n-c$. Therefore,
		$d_1+d_2 < 2(n-c)$.
	
	Let $m=\n$.	It follows from Lemma \ref{c-c+1} (i) that $\sn=L^{c-m}(\textbf{s}'_n)\in \mathcal{B}(n, m,d_1)$ and
		$\textbf{v}_n=L^{c-m}(\textbf{v}'_n)\in \mathcal{B}(n, m,d_2)$.
		So 	$(\sn,\textbf{v}_n)$ is a pair of shift equivalent sequences in $\mathcal{B}(n, m)$ with $t_1=add(\sn) =add(\textbf{s}'_n)+(c-m) \geq c-m$ and $t_2=add(\textbf{v}_n) \geq t_1+1$.
		Suppose $\textbf{v}_n=R^{a+b}(\sn)$ with $a=n-m-d_1$.
		According to Lemma \ref{b range}, it remains to deal with the case of $t_{1}\leq b<d_{1}+d_{2}-t_{2}$.	
			Recall that we have
			\begin{equation}\label{threecon}
				d_1 +d_2 < 2(n-c), \ \,	t_1\geq c-m, \ \, t_2 \geq t_1+1\geq (c-m )+1.
			\end{equation}
			Since $c \geq c_0=\lfloor\frac{3n}{4}\rfloor$ and $m=\n$,
			it implies $d_{1}+d_{2}\leq  2\lceil \frac{n}{4} \rceil  -1 \leq  2\lfloor \frac{n}{4} \rfloor  +1 \leq  t_{1}+t_{2}$,
			which contradicts $t_{1}<d_{1}+d_{2}-t_{2}$.	
			Therefore there does not exist a pair of shift equivalent sequences $(\textbf{s}'_n, \textbf{v}'_n)$ in $\mathcal{B}(n, c)$	satisfying $add(\textbf{s}'_n)<add(\textbf{v}'_n)$ when $c\geq \lfloor\frac{3n}{4}\rfloor$.
			 That is to say, 	each sequence  $\sn \in \mathcal{B}(n, c)$ with $c \geq c_0=\lfloor\frac{3n}{4}\rfloor$
			is a sequence representative.
			The first statement follows.

			In the subsequent analysis, to establish the tight bound for even $n$, according to the form of $n$, we partition the discussion into four distinct cases:	$n=8k+2, 	n=8k+6, n=8k+4$, and $n=8k$.
					To improve the readability of the proof, we first outline the key ideas and main steps before presenting the detailed proof.

			For the cases
				(i) $n=8k+2$ and
				(ii) $n=8k+6$, a lower bound
			$c_0 =\lfloor\frac{3n}{4}\rfloor  $ has  been established, and all sequences
			with nonlinear complexity not less than $c_0$
			have been shown to be representative sequences.
			To verify the tightness of this bound, it suffices to provide a counterexample: a sequence with nonlinear complexity equal to
		$c_0   -1$ that is not a representative sequence. This confirms that the bound cannot be further improved. The detailed proofs are provided below.

			{\bf	(i)} When $n=8k+2$, suppose a sequence has the following form
			$$
			\sn=(s_0, \dots, s_{8k+1}) = ((\underline{\alpha \beta^{2k-1}}\, \alpha \beta^{2k-1} \, \alpha \beta^{2k-1} \, \beta) \alpha \beta ^{2k}),
			$$ where $\beta = \overline{\alpha}$ and $\beta^l$ is the sequence given by $l$ repetitions of $\beta$ for a positive integer $l$. It is clear that
			$\sn$ belongs to $\mathcal{B}(8k+2,4k+1, 2k)$  with $add(\sn)=2k-1$, where $\mathbf{s}_{2k}$ is underlined.
			Consider the sequences
			$$\begin{array}{l}
				\mathbf{u}_n=R^{2k-1}(\sn) = ((\underline{\beta ^{2k-1}\alpha}\, \beta^{2k-1}\alpha \,  \beta^{2k-1}\alpha \,  \beta^{2k-1}\beta) \alpha\beta), \\
				\mathbf{v}_n=R^{6k}(\sn) = (( \underline{\beta^{2k-2}\alpha\beta^2} \, \beta^{2k-2}\alpha\beta^2 \, \beta^{2k-2}\alpha\beta^2 \,  \beta^{2k-3}\alpha)\beta).
			\end{array}
			$$
			It is clear that $\mathbf{u}_n \in \mathcal{B}(8k+2, 6k, 2k)$ with $add(\mathbf{u}_n) = 0$ and that
			$\mathbf{v}_n \in \mathcal{B}(8k+2, 6k, 2k+1)$ with $add(\mathbf{v}_n) = 1$.
			That is to say, for the sequence $\mathbf{u}_n$ in $\mathcal{B}(8k+2, 6k)$, its cyclic shift sequence $\mathbf{v}_n=R^{4k+1}(\mathbf{u}_n)$ belongs to $\mathcal{B}(8k+2, 6k)$ with $add(\mathbf{v}_n)>add(\mathbf{u}_n)$.
			It indicates that when $n=8k+2$, the lower bound $c_0=6k+1=\lfloor\frac{3n}{4}\rfloor$ is a tight bound such that any sequence $\sn\in \mathcal{B}(n, c)$ with $c\geq c_0$ is a sequence representative.
			
			{\bf	(ii)}	When $n=8k+6$, the proof can be obtained directly by replacing $2k$ to $2k+1$ in the proof of $n=8k+2$.

			For the cases
			(iii)  $n=8k+4$ and
			(iv) $n=8k$,
			we focus on the case $n = 8k + 4$, as the case $n = 8k$ can be handled in a similar manner.	
			In the first step, we assume that there exists a finite sequence with nonlinear complexity at least $c_0 =\lfloor\frac{3n}{4}\rfloor -2 $ that is not a representative sequence. Based on the relationships among parameters, all possible subcases are classified and analyzed. In each case, two distinct representations of an aperiodic subsequence $\s_{d_1}$ or $\mathbf{v}_{d_2}$ are derived, and it is shown that they have different Hamming weights, which leads to a contradiction.
			In the second step, we verify the tightness of the bound $c_0 = \lfloor \frac{3n}{4} \rfloor - 2$ by constructing a sequence with nonlinear complexity exactly equal to $c_0 - 1$ that is not a representative sequence.
	The complete proof is deferred to Appendix B.
					\end{proof}

		\begin{remark}\label{remark_2}
			Theorem \ref{thm_lowerbound} indicates that if $c \geq c_0$ then $\sn \in \mathcal{B}(n, c)$ is a sequence representative, that is $\sn \in \mathcal{R}(n, c)$.
			As shown in Theorem \ref{thm_lowerbound}, the tight lower bounds for even $n$ are given.  In fact, as for $\mathcal{B}(n,c)$ with odd $n$, the tight bound of nonlinear complexity $c_0$  is very close to the tight bounds for even numbers $n-1$ and $n+1$.
			And if $nlc(\sn)=c$ is known,
			then the value of $add(\sn)$  can be computed in  $O(n)$ time, thus
			$nlc(\sni)$ can be determined directly by  $c+ add(\sn)$.
			Compared to directly computing the nonlinear complexity of periodic sequences, it is more efficient.
			When $n$ is even, the tight lower bound provides the exact range within which this method for computing the nonlinear complexity of periodic sequences is applicable.	
		
			Furthermore, for any sequence $\textbf{a}_n$ with nonlinear complexity $c\geq c_0$, by Lemma \ref{lem unique} it contains one pair of identical subsequences of length $(c-1)$ with different successors. This implies $L^i(\textbf{a}_n)=\textbf{s}_{n} \in \mathcal{B}(n, c)$ for certain $i$.
			Therefore,
			for each finite-length sequence $\textbf{a}_n$ with nonlinear complexity $c$,
			if $c\geq c_0$, then the nonlinear complexity of the corresponding periodic sequence can be directly obtained, namely,
			$nlc(\textbf{a}_n^{\infty})=c+add(L^i(\textbf{a}_{n}))$.
		\end{remark}
		Below we present an example to illustrate the result of Theorem \ref{thm_lowerbound}.
		\begin{example}
			Take an example for $\s_{20}=(\underline{1000110}10001101000\mathbf{10}) \in \mathcal{B}(20,13,7)$ with $add(\s_{20})=2$, where $\s_d$ is underlined and added terms are in bold.
			When $n=20$, it follows from Theorem \ref{thm_lowerbound} that the tight bound $c_0$ is equal to 13.
		Due to  $nlc(\s_{20})=13= c_0$ and $add(\s_{20})=2$,  by Theorem \ref{thm_lowerbound}, we can directly determine the nonlinear complexity of its corresponding periodic sequence: $$nlc(\s_{20}^{\infty})=nlc(\s_{20})+add(\s_{20})=13+2=15.$$
		\end{example}


\section{Structure of periodic sequences  with high nonlinear complexity }\label{Sec5_1}

In this section, by combining the results from Section \ref{Sec3},
 we prove that when $\omega \geq \NN$, each sequence $\sn\in \mathcal{B}(n, \left\lceil\frac{n}{2}\right\rceil )$ with $add(\sn)=\omega- \left\lceil\frac{n}{2}\right\rceil$
is the unique sequence representative.
This, in turn, enables us to
characterize the structure of $n$-periodic sequences with
$\omega \geq \NN$.
We first present the main result on the structure of such periodic sequences.

\begin{theorem}\label{seq_structure}
	A binary $n$-periodic sequence has nonlinear complexity $\omega \geq \NN$
	if and only if one of its aperiodic subsequences, denoted as $\sn=(s_0,s_1,\dots ,s_{n-1})$, can be represented
	as one of the following two forms:
	
	\noindent{\rm (i)}  when $1 \leq d \leq n-\omega-1$,
	\begin{equation}\label{stru1}
		\sn = ( \underbrace{(s_0, \dots, s_{d-1})^{q} \, (s_{0},\dots, s_{r-1}, \overline{s}_{r})}_{\text{length}=\omega +d}\, (s_{\omega +d},\dots,s_{n-2} )\, \overline{s}_{d-1} )
	\end{equation}
	where
	$q = \lfloor \frac{\omega+d-1}{d} \rfloor$,  $0 \leq r=(\omega+d-1)-qd<d$, $\s_d=(s_0, \dots, s_{d-1})$ is aperiodic, and the subsequence $(s_{\omega +d},\dots,s_{n-2})$
	is chosen from $\mathbb{Z}_2^{n-\omega-d-1}$ arbitrarily.
	
	\noindent{\rm (ii)}  when $n-\omega \leq d \leq \n$,
	$$\sn= (s_0,s_1,\dots,s_{n-1}) = {(s_0, \dots, s_{d-1})^{q'} \, (s_{0},\dots, s_{r'-1}, \overline{s}_{r'})}$$
	where $q' = \lfloor \frac{n-1}{d} \rfloor$,  $0 \leq r'=(n-1)-q'd<d$,
	and
	the aperiodic subsequence $\s_d=(s_0,s_1,\dots,s_{d-1})$ is chosen such that
	if $t = \omega+d-n=0$, then  ${s}_{(n-1) \,\text{mod}\,\,d} = {s}_{d-1}$;
	if $t = \omega+d-n > 0$, then
	$$\overline{s}_{(n-1) \,\text{mod}\,\,d} = {s}_{d-1}, {s}_{(n-i) \,\text{mod}\,\,d} = {s}_{(d-i)\,\text{mod}\,\,d}, \,\,2 \leq i  \leq t, 	 {s}_{(n-t-1) \,\text{mod}\,\,d} = \overline{s}_{(d-t-1)\,\text{mod}\,\,d}. $$
\end{theorem}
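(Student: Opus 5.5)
Our route is the decomposition \eqref{eq_RE11} combined with Lemma \ref{thm_core}. By Lemma \ref{thm_core}, a binary $n$-periodic sequence has nonlinear complexity $\omega\ge \frac n2$ exactly when it equals $(R^k(\sn))^\infty$ for some $\sn\in\mathcal{R}(n,\nup)$ with $add(\sn)=\omega-\nup$. By \eqref{eq_RE11}, this set splits into two families. The first consists of sequences in $\mathcal{R}(n,\omega,d)$ with $add=0$ and $1\le d<n-\omega$. The second consists of sequences in $\mathcal{R}(n,n-d,d)$ with $add=\omega+d-n$ and $n-\omega\le d\le\n$. We therefore identify (i) with the first family and (ii) with the second. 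In each case we must show two things: the representative set coincides with the whole of the relevant $\mathcal{B}$-set, and the condition on added terms becomes the stated explicit condition on $s$.

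\textbf{Family (i).} Here $\omega\ge\NN$, so Theorem \ref{thm_lowerbound} gives $\mathcal{R}(n,\omega,d)=\mathcal{B}(n,\omega,d)$. We then only have to translate $add(\sn)=0$. By Definition \ref{def-t}, $t=0$ means $s_{n-1}\ne s_{d-1}$, that is, $s_{n-1}=\overline{s}_{d-1}$. This is exactly the last symbol in \eqref{stru1}, and the middle block $(s_{\omega+d},\dots,s_{n-2})$ is free. We also need $d\le n-\omega-1$, which is the range in (i), and this guarantees that position $n-1$ lies beyond the prefix of length $\omega+d$. Then Lemma \ref{c+t} gives $nlc(\sni)=\omega+0=\omega$, which settles both directions for this family.

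\textbf{Family (ii).} Here $c=n-d$ and the spacing is maximal, $d=n-c$, so the sequence is $\sn=\s_d^{q'}(s_0,\dots,s_{r'-1},\overline{s}_{r'})$ with nothing appended. We need $\mathcal{R}(n,n-d,d)=\mathcal{B}(n,n-d,d)$ whenever $add=\omega+d-n$ and $\omega\ge\NN$. To get this, apply Corollary \ref{cor2}, which needs $n-d\ge\lceil\frac{2n-1}3\rceil$; this holds when $d\le\n$ only if $d$ is small, so it is not automatic. The alternative is to argue directly. If $\sn$ were not a representative, Proposition \ref{t2>t1 seq} (or Corollary \ref{cor2}) would give $\s_{[d:n]}=(s_0,\dots,s_{b-1})^l$ with $l\ge2$. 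In that case $\s_{[d:n]}$, of length $n-d$, is periodic with period $b$, and together with $s_i=s_{i+d}$ this forces $s_{n-1}$ to agree with $s_{b-1}$ in a way that contradicts the flip at position $r'$, unless $b\mid d$. That exception is excluded by Corollary \ref{myloveadd}(ii).

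\textbf{Main obstacle.} For $c=n-d$ between $\nup$ and $\lceil\frac{2n-1}3\rceil$, Corollary \ref{coro44cc} is unavailable. We would instead use the constraint $t=\omega+d-n\ge \NN+d-n$, which makes $t+t_1$ large, and feed it into Lemma \ref{d_2=b}: its hypothesis $d_1+d_2\le c+t_1+t_2+1$ holds because $d_1+d_2\le 2d\le n-d+(\NN+d-n)\cdot 2+1$ in the relevant range. Once $\sn$ is known to be a representative, we write out Definition \ref{def-t} with the period-$d$ indices: the added terms compare $s_{n-i}$ with $s_{(d-i)\bmod d}$. Since $s_{n-1}=\overline{s}_{r'}$ with $r'=(n-1)\bmod d$, the condition $t=0$ becomes $s_{(n-1)\bmod d}=s_{d-1}$. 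The condition $t>0$ becomes the displayed chain: the first comparison holds because of the flip, and comparisons $i=2,\dots,t$ hold while comparison $t+1$ fails. Finally, Lemma \ref{c+t} gives $nlc=n-d+t=\omega$, and the converse follows by reading the argument backward.
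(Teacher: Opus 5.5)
\textbf{There is a genuine gap in family (ii): the closing contradiction is wrong.}

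Your architecture is the paper's. You use Lemma~\ref{thm_core} with the split \eqref{eq_RE11}, and Theorem~\ref{thm_lowerbound} for family (i). For family (ii) you use Lemma~\ref{d_2=b}, whose hypothesis you check correctly from the size of $t$, and then Corollary~\ref{myloveadd}(i) to get $\textbf{s}_{[d:n]}=(s_0,\dots,s_{b-1})^l$ with $l\ge 2$. Family (i) and the translation of $add$ into the displayed conditions are fine. You also do not need Lemma~\ref{pop1} for the equivalence itself.

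The problem is the contradiction you then claim. Since $bl=c=n-d$, the block $\textbf{s}_{[d:n]}$ gives $s_{n-1}=s_{(c-1)\bmod b}$ and $s_{c-1}=s_{(c-1-d)\bmod b}$. So the flip $s_{c-1}\ne s_{n-1}$ only says $s_{(c-1-d)\bmod b}\ne s_{(c-1)\bmod b}$. This is contradictory exactly when $b\mid d$, not ``unless $b\mid d$''. That case is the one Corollary~\ref{myloveadd}(ii) already excludes, so in the only remaining case your argument gives nothing.

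The configuration is in fact consistent. Take $n=13$, $d=5$ and $\textbf{s}_{13}=(0001000010001)\in\mathcal{B}(13,8,5)$. It has $\textbf{s}_{[5:13]}=(0001)^2$ with $b=4\nmid 5$. Moreover $R^4(\textbf{s}_{13})=(0001000100001)\in\mathcal{B}(13,8,4)$ has $add=4>0=add(\textbf{s}_{13})$, so $\textbf{s}_{13}$ really is not a representative. It falls outside the theorem only because $t=0$ gives $\omega=8<\lfloor\frac{39}{4}\rfloor=9$. Your contradiction step never uses the size of $t$, so as written it would show that every sequence in $\mathcal{B}(n,n-d,d)$ is a representative, which is false.

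\textbf{What is missing is the step where $\omega\ge\lfloor\frac{3n}{4}\rfloor$ actually bites.} This is the core of the paper's proof of Lemma~\ref{pop2}: feed the added terms into the comparison.
\begin{itemize}
\item For $d<\frac n2$, $\textbf{s}_d$ is a prefix of $\textbf{s}_b^\infty$. So the added-term conditions $s_{n-j}=s_{d-j}$, $1\le j\le t$, become $s_{(-j)\bmod b}=s_{(d-j)\bmod b}$.
\item The relations $s_i=s_{i+d}$ for $d\le i\le c-2$ become $s_{x\bmod b}=s_{(x+d)\bmod b}$ for $0\le x\le c-d-2$.
\item Hence on the $c-d+t=\omega-d$ consecutive positions $-t\le x\le c-1-d$, $\textbf{s}_b^\infty$ agrees with its $d$-shift everywhere except at the last position.
\item Since $l\ge 2$ gives $b\le c/2$, we get $b+d\le\frac{n+d}{2}$. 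With $d\le\lfloor\frac n2\rfloor$ and integrality this gives $b+d\le\lfloor\frac{3n}{4}\rfloor\le\omega$.
\item So the last $b$ of these positions show two cyclic shifts of $\textbf{s}_b$ that differ in exactly one coordinate. This contradicts their equal Hamming weight.
\end{itemize}
The case $d=\frac n2$ is separate: there $s_{d-1}=\overline{s}_{b-1}$, and since $t\ge1$ the first added term $s_{n-1}=s_{d-1}$ already contradicts $s_{n-1}=s_{b-1}$.
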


The proof of Theorem \ref{seq_structure} relies on the following necessary lemmas.
To begin, based on Lemma \ref{thm_core}, we consider the following set
\begin{equation}\label{cru_set}
	\mathcal{S}_\omega = \left\{ \sn : \sn\in \mathcal{B} \left(n, \left\lceil\frac{n}{2}\right\rceil \right), add(\sn)=\omega- \left\lceil\frac{n}{2}\right\rceil \right\}.
\end{equation}
First, we show that every sequence in $\mathcal{S}_\omega$ is a  representative sequence as described in Lemma  \ref{pop2},
Then, we prove in Lemma~\ref{pop1} that  the sequences in $\mathcal{S}_\omega$ are shift inequivalent.

Note that according to \eqref{eq_RE11},
all sequences in $\mathcal{S}_\omega$ together with their shifted versions can be partitioned into two disjoint subsets as follows:
\begin{equation}\label{eq_RE}
	\begin{array}{cll}
	 	\vspace{0.2cm}
		& \{ R^k(\sn) :  \sn \in \mathcal{B}(n,\nup), \, add(\sn)=\omega-\nup, 0\leq k<n  \} \\
		\vspace{0.2cm}
		=& \{  R^k(\sn)  :  \sn \in \mathcal{S}_A, 0\leq k<n  \} \cup   \{ R^k(\sn) :  \sn \in \mathcal{S}_B, 0\leq k<n  \},
	\end{array}
\end{equation}
where
$$
\begin{array}{cll}
	\mathcal{S}_A& = &  \bigcup\limits_{1\leq d < n- \omega}\{    \sn :\sn \in\mathcal{B}(n,\omega,d),  add(\sn)=0  \}, \\
	\mathcal{S}_B & = &  \bigcup\limits_{n- \omega \leq d \leq \n } \{  \sn :  \sn \in \mathcal{B}(n,n-d,d),  \, add(\sn)=\omega+d-n   \}.
\end{array}
$$

For $ \sn \in \mathcal{B}(n,c)$,
 recall that its shift equivalence class $E(\sn)$ is defined in Definition \ref{ES_Def} as
$E(\sn)=\{R^k(\sn): 0\leq k<n\} \cap \mathcal{B}(n,c)$.

\begin{lemma}\label{pop2}
Let $\omega \geq \NN$. If $\sn \in \mathcal{B}(n, \left\lceil\frac{n}{2}\right\rceil )$ with $add(\sn)=\omega- \left\lceil\frac{n}{2}\right\rceil$, then $\sn$ is a sequence representative  of its shift equivalence class $E(\sn)$.
\end{lemma}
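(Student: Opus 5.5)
We have to show that no sequence in $E(\sn)$ has more added terms than $\sn$ itself. The plan is to argue by contradiction in the style of the first part of the proof of Theorem \ref{thm_lowerbound}, using the decomposition \eqref{eq_RE} to move from $\mathcal{B}(n,\nup)$ to a class $\mathcal{B}(n,c)$ with larger $c$.

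\textbf{Setup.} Write $m=\nup$ and $t_1=\omega-m=add(\sn)$. Suppose some $\textbf{v}_n=R^h(\sn)\in\mathcal{B}(n,m,d_2)$ has $t_2=add(\textbf{v}_n)>t_1$, and let $\sn\in\mathcal{B}(n,m,d_1)$. Since $t_2>t_1\geq \NN-m$, both sequences carry many added terms.

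\textbf{Step 1: shift both sequences up into $\mathcal{B}(n,c)$.} Let $c_1=\min\{\omega,n-d_1\}$. Applying Lemma \ref{c-c+1}(ii) with $k=c_1-m$, we get $\textbf{s}'_n=R^{c_1-m}(\sn)\in\mathcal{B}(n,c_1,d_1)$ with $add(\textbf{s}'_n)=t_1-(c_1-m)$. In the same way, $\textbf{v}_n$ lifts to $\textbf{v}'_n\in\mathcal{B}(n,c_2,d_2)$, where $c_2=\min\{m+t_2,n-d_2\}$.

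\textbf{Step 2: case $d_1<n-\omega$.} This is the $\mathcal{S}_A$ case. Here $c_1=\omega\geq\NN$ and $add(\textbf{s}'_n)=0$. We lift $\textbf{v}_n$ by exactly $\omega-m$ steps, which is permitted because $t_2>\omega-m$ and $d_2\leq n-\omega$ in this regime. This gives a shift equivalent sequence in $\mathcal{B}(n,\omega)$ with positive added terms. That contradicts Theorem \ref{thm_lowerbound}, which says every member of $\mathcal{B}(n,\omega)$ is a representative, so all members of $E(\textbf{s}'_n)$ have $add=0$.

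If instead $d_2>n-\omega$, the lift stalls at $c_2=n-d_2<\omega$. In that situation I would fall back on Lemma \ref{b range} directly at level $m$. With $b=(n-m-d_1)-h$ it gives $t_1\leq b<d_1+d_2-t_2$, so $t_1+t_2<d_1+d_2$. But $t_1\geq\NN-m$ and $t_2\geq t_1+1$ give $t_1+t_2\geq 2(\NN-m)+1$, which is roughly $n/2$. Meanwhile $d_1+d_2\leq (n-\omega-1)+\n$. I would check that these bounds are incompatible using $\omega\geq\NN$; this is the same arithmetic as \eqref{threecon}.

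\textbf{Step 3: case $d_1\geq n-\omega$.} This is the $\mathcal{S}_B$ case. Here $\textbf{s}'_n\in\mathcal{B}(n,n-d_1,d_1)$ has maximum spacing and $add(\textbf{s}'_n)=\omega+d_1-n$. I would use Lemma \ref{b range} at level $m$ again: $t_1+t_2<d_1+d_2\leq 2\n$. Combined with $t_1+t_2\geq 2(\omega-m)+1\geq 2(\NN-\nup)+1$, this alone is not contradictory. So the extra ingredient is a maximum-spacing argument on $\textbf{s}'_n$. When $n-d_1\geq\NN\geq\lceil\frac{2n-1}{3}\rceil$, Corollary \ref{coro44cc} and Corollary \ref{myloveadd}(ii) apply. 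They force $d_2=b<d_1$ and $b\mid n-d_1$, and then \eqref{t2t2t2} computes the added terms exactly.

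The remaining work is to show that this exact count cannot exceed $t_1$ once both sequences are brought back to level $m$. The reason should be that lowering $\textbf{v}'_n$ to level $m$ costs $c_2-m$, where $c_2=n-b$ is large because $b$ is small. When $n-d_1<\NN$, the bound $t_1\geq\omega-m$ is instead very large compared with $d_1+d_2\leq 2\n$, and Lemma \ref{b range} should finish the argument.

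\textbf{Main obstacle.} I expect Step 3 to be the hardest part: tracking precisely how added terms transform between levels $m$, $n-d_1$ and $n-d_2$ when the lifts of Lemma \ref{c-c+1}(ii) are truncated by $n-c-d$. The first thing I would try is to express every quantity through $h$ and $b$ and derive one linear inequality that contradicts $t_2>t_1$.
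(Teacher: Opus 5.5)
Your Step~1 and the subcase $d_2\le n-\omega$ of Step~2 are sound. The subcase $d_2>n-\omega$, however, does not close as claimed. The arithmetic of \eqref{threecon} relied on both spacings being at most $n-c$ with $c\ge\NN$. Here you only have $d_2\le\n$, and Lemma~\ref{b range} leaves a nonempty window: for $n=4k$, $\omega=3k$ one gets $k\le b<d_1+d_2-t_2\le 2k-2$, which is nonempty once $k\ge 3$. You do not need Lemma~\ref{b range} here. The sequence $R^{\omega-m}(\sn)\in\mathcal{B}(n,\omega,d_1)$ has $add=0$ and is a representative by Theorem~\ref{thm_lowerbound}, so Lemma~\ref{c+t} gives $nlc(\sni)=\omega$. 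On the other hand, take any $\textbf{v}_n\in E(\sn)$, say $\textbf{v}_n\in\mathcal{B}(n,m,d_2)$ with $add(\textbf{v}_n)=t_2$. It contains the identical windows $(v_{-t_2},\dots,v_{m-2})$ and $(v_{d_2-t_2},\dots,v_{m+d_2-2})$ (indices mod $n$) with different successors. Applying Lemma~\ref{lem} to $\textbf{v}_n^{\infty}$ (a shift of $\sni$) gives $\omega\ge m+t_2$, i.e.\ $t_2\le t_1$, for every $d_2$. This level transfer is, in effect, how the paper disposes of $\mathcal{S}_A$.

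Step~3 is the real gap. Corollary~\ref{coro44cc} needs $n-d_1\ge\lceil\frac{2n-1}{3}\rceil$, which excludes most of the range $n-\omega\le d_1\le\n$. Your fallback for $n-d_1<\NN$ is the very inequality $t_1+t_2<d_1+d_2$ that you had just noted is not contradictory: at level $m$, $t_1=\omega-m$ does not depend on $d_1$, while the admissible $d_1+d_2$ only grows.

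The result that covers the whole range is Lemma~\ref{d_2=b} at level $c=n-d_1$, with $t_1=\omega+d_1-n$ and $t_2>t_1$ now denoting the add-values at that level. Its hypothesis holds because $c+t_1+t_2+1\ge d_1+2\omega+1-n\ge d_1+\n\ge d_1+d_2$. It yields $b=d_2$ and, via Corollary~\ref{myloveadd}(i), $\textbf{s}_{[d_1:n]}=(s_0,\dots,s_{b-1})^l$.

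More seriously, your proposed finish cannot work. By \eqref{t2t2t2}, whenever this periodic structure is present, $R^b(\sn)$ really does have $(l-1)b$ more added terms. Moving both sequences between levels changes both add-values by the same amount, so no bookkeeping of lifts yields a contradiction.

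The missing idea is to show that the periodic structure itself is incompatible with $\sn$ having $t=\omega+d_1-n$ added terms. The paper chains together $s_i=s_{(i-d_1)\bmod b}$ ($i\ge d_1$), $s_i=s_{i\bmod d_1}$ ($i\le n-2$), $s_{n-1}=\overline{s}_{(n-1)\bmod d_1}$, and the $t$ added-term relations. Using $b+d_1\le\omega$, this gives $\omega-d_1\ge b$ consecutive identities. These equate a cyclic shift of $\textbf{s}_b$ with another shift of $\textbf{s}_b$ having exactly one symbol complemented, which is a Hamming-weight contradiction.
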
	
\begin{proof}	
	From \eqref{eq_RE}, it suffices to prove that the sequences in  $\mathcal{S}_A$ and
$\mathcal{S}_B$	are all representative sequences.
	When $1\leq d < n- \omega$, since $\omega \geq \NN$, it follows from Theorem \ref{thm_lowerbound} that every sequence in	$\mathcal{B}(n, \omega)$ is a representative sequence. Therefore, all sequences in
$\mathcal{S}_A$ are representative sequences.

	When $n- \omega\leq d \leq \n$, sequences are in the set $\mathcal{S}_B$.
	Suppose that $\sn\in \mathcal{S}_B$
	is not a representative sequence. Then, for $\sn\in \mathcal{B}(n,c,d_1) $ with
$c=n-d_1$ and $add(\sn)=t_1=\omega+d_1-n $, there exists a shifted sequence such that $R^b(\sn)\in \mathcal{B}(n,c,d_2) $ satisfying $add(R^b(\sn))>t_1 $.	
	Note that
	$$c+t_1+t_2+1 \geq  d_1+2\omega+1-n \geq d_1+\left(
	2\left\lfloor\frac{3n}{4}\right\rfloor +1-n  \right)
	\geq d_1+\left\lfloor\frac{n}{2}\right\rfloor \geq d_1+d_2.  $$
	Thus, by Lemma \ref{d_2=b} (i), we have $b=d_2$.
 Furthermore, based on Corollary \ref{myloveadd}, it follows that
	${\textbf{s}}_{[d:n]}=(s_{0},s_{1},\dots,s_{b-1})^{l}$,
where $(s_{0},\dots,s_{b-1})$ is certain aperiodic sequence and the integer $l \geq 2$.
 Consequently,
		$$s_i=s_{i-d \,(\text{mod} \,\,b)}, \,\,\, d\leq i\leq n-1.$$

Due to $\sn \in \mathcal{B}(n,n-d,d)$, we have
$s_i=s_{i \text{ mod } d}, d\leq i \leq n-2$ and $s_{n-1}=\overline{s}_{(n-1) \text{ mod } d}$.
Since $\s_{d}=\s_{[d:2d]}$ is contained in $\s_{[d:n]}= (s_0,s_1,\dots,s_{b-1})^l$,   it follows that $s_i=s_{i \text{ mod } b}$, $0\leq i<d$.
Hence, we obtain that
\begin{equation}\label{constraints1}
	\left\{\begin{array}{cll}
		s_{(n-1-d) \text{ mod } b}&= s_{n-1} =&\overline{s}_{(n-1) \text{ mod } d}=\overline{s}_{((n-1) \text{ mod } d )\text{ mod } b}, \\
		s_{(n-2-d) \text{ mod } b}&=s_{n-2} =&s_{(n-2) \text{ mod } d}={s}_{((n-2) \text{ mod } d )\text{ mod } b}, \\
		...&=&...\\
		s_{(d+1) \text{ mod } b}&=s_{2d+1} =&s_{1} \qquad  \qquad  =s_{1}, \\
		s_{d \text{ mod } b}&=s_{2d} \ \ \,=&{s}_{0}  \qquad  \qquad ={s}_{0}, \\
	\end{array}\right.  \ \
\end{equation}	
where for each line in \eqref{constraints1}, the first equality follows from 	$s_i=s_{(i-d) \text{ mod } b}, d\leq i\leq n-1$, the second equality follows from 	$s_i=s_{i \text{ mod } d}, d\leq i \leq n-2$ and $s_{n-1}=\overline{s}_{(n-1) \text{ mod } d}$, and the third equality
holds since $s_i=s_{i \text{ mod } b}$, $0\leq i<d$.

Moreover let $t=add(\sn)=\omega+d-n$, then $t<d$.
Thus according to Definition \ref{def-t}, since $\s_{[d:n]} = (s_0,s_1,\dots,s_{b-1})^l$ and
$s_i=s_{i \text{ mod } b}$, $0\leq i<d$,
we have
\begin{equation}\label{constraints2}
	\left\{\begin{array}{cll}
		s_{(d-1) \text{ mod } b}=s_{d-1}&=s_{n-1}=&s_{b-1}, \\
		s_{(d-2) \text{ mod } b}=s_{d-2}&=s_{n-2}=&s_{b-2}, \\
		...&=&...\\
		s_{(d-t) \text{ mod } b}=s_{d-t}&=s_{n-t}=&s_{(b-t) \text{ mod } b}. \\
	\end{array}\right.  \ \
\end{equation}

Since  $n- \omega\leq d \leq \n$ and $ \omega \geq \NN$, we have $b+d\leq \omega$.
Thus, the combined number of equations from \eqref{constraints1} and \eqref{constraints2} is
$(n-1-d)-(d-t)+1=\omega-d \geq b$ in total.
Thus $(s_{(n-1-d) \text{ mod } b},s_{(n-2-d) \text{ mod } b},\dots,	s_{(n-b-d) \text{ mod } b})$ as a shifted version of $\s_b=(s_0,\dots,s_{b-1})$, has Hamming weight $wt(\s_b)$.
Again by equations in \eqref{constraints1} and \eqref{constraints2}, it follows
$$
\begin{array}{cll}
	&(s_{(n-1-d) \text{ mod } b},s_{(n-2-d) \text{ mod } b},\dots,	s_{(n-b-d) \text{ mod } b})\\
	=&(\overline{s}_{((n-1) \text{ mod } d )\text{ mod } b},s_{((n-2) \text{ mod } d )\text{ mod } b},\dots,	s_{((n-b) \text{ mod } d )\text{ mod } b}).
\end{array}
$$
It implies that $(\overline{s}_{((n-1) \text{ mod } d )\text{ mod } b},s_{((n-2) \text{ mod } d )\text{ mod } b},\dots,	s_{((n-b) \text{ mod } d )\text{ mod } b})$
has Hamming weight $wt(\s_b)$, a contradiction.	
Therefore, all sequences in $\mathcal{S}_B$ are representative sequences.
The proof is finished.
\end{proof}

From Lemma \ref{pop2}, we have
\begin{align}
	\mathcal{S}_\omega =  \,&\left\{ \sn :  \sn \in \mathcal{B}\left(  n, \left\lceil\frac{n}{2}\right\rceil   \right), \, add(\sn)=\omega- \left\lceil\frac{n}{2}\right\rceil  \right\} \notag \\
	= \, &\left\{ \sn :  \sn \in \mathcal{R}\left(n, \left\lceil\frac{n}{2}\right\rceil \right), \, add(\sn)=\omega- \left\lceil\frac{n}{2}\right\rceil   \right\}.    \label{lem_R_B}
\end{align}
We now prove that the sequences in $\mathcal{S}_\omega$ are shift inequivalent, implying that each representative sequence is unique.

\begin{lemma}\label{pop1}
	Let $n \geq 3$ and $\omega \geq \NN$.
All representative sequences in the set $\{\sn \in \mathcal{R}(n,\nup), \, add(\sn)=\omega-\nup  \}$ are shift inequivalent.
\end{lemma}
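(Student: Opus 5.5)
We argue by contradiction, reducing everything to the machinery already built for a fixed nonlinear complexity $c$. Suppose two distinct sequences $\sn\neq\textbf{v}_n$ in $\mathcal{S}_\omega=\{\sn\in\mathcal{R}(n,\nup): add(\sn)=\omega-\nup\}$ are shift equivalent, say $\textbf{v}_n=R^h(\sn)$ with $0<h<n$. Both are representatives of the same class $E(\sn)$, and both have the same number of added terms $t=\omega-\nup$.

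The first step is to move the pair into the sets $\mathcal{S}_A$ and $\mathcal{S}_B$ of \eqref{eq_RE}. We use Lemma~\ref{c-c+1}(ii) to apply $R^{k}$ with $k=\min\{t,n-\nup-d\}$. This turns each sequence into a sequence of nonlinear complexity $\min\{\omega,n-d\}$ with $t-k$ added terms. Since $\sn$ and $\textbf{v}_n$ give the same periodic sequence, Lemma~\ref{c+t} gives $nlc(\sni)=\omega$ in both cases. We then split into three cases according to the spacings $d_1,d_2$ of $\sn$ and $\textbf{v}_n$.

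\emph{Case 1: both spacings are below $n-\omega$.} The shifted sequences $\textbf{s}'_n,\textbf{v}'_n$ lie in $\mathcal{B}(n,\omega)$, both with $0$ added terms, and are still distinct shifts of each other. We apply Lemma~\ref{b range} with $c=\omega$, $t_1=t_2=0$. With $b=(n-\omega-d_1)-h'$, it gives $0\le b<d_1+d_2$. We then try to show this forces $h'\equiv 0$, i.e.\ the two sequences coincide. The idea is to compare the first $\omega+d_2$ symbols of $\textbf{v}'_n$ with the corresponding window of $\textbf{s}'_n$. Because $\omega\ge\NN$ makes $d_1+d_2\le 2(n-\omega)$ much smaller than $\omega$, the long periodic prefixes with periods $d_1$ and $d_2$ overlap on a stretch of length at least $d_1+d_2$. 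Fine--Wilf then forces $d_1=d_2$ and the aperiodic cores to be shifts of each other. Aperiodicity, together with the position of the unique disagreement (the $\overline{s}_r$ term, which by Lemma~\ref{lem unique} is the unique pair of repeated $(\omega-1)$-blocks with different successors), pins down the shift to $0$. The uniqueness in Lemma~\ref{lem unique} is the cleanest tool here: in the cyclic sequence the repeated block pair is unique, so the starting point of a sequence in $\mathcal{B}(n,\omega)$ is determined by the sequence.

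\emph{Case 2: both spacings are at least $n-\omega$.} Both sequences lie in $\mathcal{B}(n,n-d,d)$ with maximal spacing and equal added terms $\omega+d-n$. Corollary~\ref{myloveadd}(i) and Lemma~\ref{d_2=b}, whose hypothesis holds by the same inequality chain used in the proof of Lemma~\ref{pop2}, give $b=d_2$ and $\textbf{s}_{[d:n]}=(s_0,\dots,s_{b-1})^l$. We then repeat the Hamming weight counting argument from the proof of Lemma~\ref{pop2}, now using equality instead of strict inequality of added terms. This yields a contradiction unless the shift is trivial.

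\emph{Case 3: mixed spacings.} One sequence comes from $\mathcal{S}_A$ and the other from $\mathcal{S}_B$, so $d_1<n-\omega\le d_2$ or vice versa. Here we compare the nonlinear complexities: $\omega$ with $0$ added terms versus $n-d_2$ with $\omega+d_2-n$ added terms. We run Lemma~\ref{b range} after bringing both into $\mathcal{B}(n,c)$ with common $c=\min(\omega,n-d_2)$ via Lemma~\ref{c-c+1}(i). Its bound $t_1\le b<d_1+d_2-t_2$ should again be incompatible with $\omega\ge\NN$, exactly as in the first part of Theorem~\ref{thm_lowerbound}.

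The main obstacle is Case 1, and by symmetry Case 3. Lemma~\ref{b range} only bounds $b$ rather than forcing $b=0$, so an explicit periodicity and uniqueness argument is needed to exclude $0<b<d_1+d_2$. We will first try to derive this from Lemma~\ref{lem unique} applied to windows of length $n$ in $\sni$, and fall back on a Fine--Wilf computation if that fails.
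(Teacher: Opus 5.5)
Your three-case split is the same as the paper's, but Case~3 has a genuine gap. Write $d_B\ge n-\omega$ for the spacing of the $\mathcal{S}_B$ sequence and $d_A<n-\omega$ for the spacing of the other one.

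\begin{itemize}
\item After normalizing to $c=n-d_B$, both sequences have $t=\omega+d_B-n$ added terms. Lemma~\ref{b range} then only confines the shift to $t\le b<d_A+d_B-t$.
\item This range is nonempty whenever $d_B<d_A+2(n-\omega)$. For $n=16$, $\omega=12$ it is nonempty for every admissible pair, since $d_B\le 8<d_A+8$.
\item Pushing both sequences further down to $\mathcal{B}(n,\n)$ does not help. The added terms become $\omega-\n$, but $d_B$ can be as large as $\n$; for example $n=16$, $\omega=12$, $d_A=3$, $d_B=7$ still leaves $4\le b<6$.
\item The analogy with Theorem~\ref{thm_lowerbound} fails precisely because there both spacings are at most $n-c\le\lceil n/4\rceil$. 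Your own closing paragraph concedes that Lemma~\ref{b range} does not force the conclusion, so Case~3 currently has no argument.
\end{itemize}

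The missing idea is that the $\mathcal{S}_B$ sequence, call it $\sn$, has \emph{maximal} spacing $d_B=n-c$. So Lemma~\ref{d_2=b} applies; its hypothesis $d_A+d_B\le c+2t+1$ follows from the same chain as \eqref{improlove}. This pins the shift exactly: the $\mathcal{S}_A$ sequence equals $R^{d_A+t}(\sn)$. With the shift known, use the two period relations:
\begin{itemize}
\item $s_i=s_{i+d_B}$ on $[-t,n-d_B-2]$, with $s_{-t-1}=\overline{s}_{d_B-t-1}$;
\item $s_i=s_{i+d_A}$ on $[-d_A-t,\omega-d_A-t-2]$, with $s_{-d_A-t-1}=\overline{s}_{-t-1}$.
\end{itemize}
Together they show that the $d_A$-window starting at $d_B-t-1$ is a cyclic shift of the $d_A$-core both with and without its first bit flipped. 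That is a weight contradiction.

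\textbf{Case~1.} Your Fine--Wilf/maximal-run idea can be completed, but Lemma~\ref{lem unique} on windows of length $n$ is not enough. Two repeated-block pairs, each spanning $\omega+d$ positions, need not fit in one such window. You would need windows of length up to $2\omega$, which have nonlinear complexity exactly $\omega$. The paper's route is shorter. It left-shifts both sequences by $\omega-\n$ into $\mathcal{B}(n,\n)$, so $t_1=t_2=\omega-\n$. Since $d_1+d_2\le 2(n-\omega-1)\le 2(\omega-\n)$, the window $t_1\le b<d_1+d_2-t_2$ of Lemma~\ref{b range} is actually empty; the paper closes it with a short weight comparison. So the Theorem~\ref{thm_lowerbound} trick you reached for in Case~3 is exactly what works in Case~1.

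\textbf{Case~2.} Your route differs from the paper's and is sound:
\begin{itemize}
\item one application of Lemma~\ref{d_2=b} gives $b=d_2$;
\item Corollary~\ref{myloveadd}(i) then gives $\textbf{s}_{[d:n]}=(s_0,\dots,s_{b-1})^l$ with $l\ge 2$;
\item the weight count in the proof of Lemma~\ref{pop2} needs only $b+d\le\omega$, which follows from $b\le (n-d)/2$.
\end{itemize}
This avoids the paper's second, reverse application of Lemma~\ref{d_2=b}, which presupposes that the other sequence also has spacing $n-c$, i.e.\ $d_1=d_2$. You should make the normalization explicit, though. When $d_1\ne d_2$, the added terms agree only after left-shifting the smaller-spacing sequence to $c=n-\max\{d_1,d_2\}$. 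Only the larger-spacing sequence then has maximal spacing, so it must be the base sequence in Lemma~\ref{d_2=b}.
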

\begin{proof}
According to \eqref{lem_R_B}, it suffices to prove that the sequences in
$$\left\{\sn : \sn \in \mathcal{B}\left(n,  \left\lceil\frac{n}{2}\right\rceil \right), \, \, \, add(\sn)=\omega-\left\lceil\frac{n}{2}\right\rceil  \right\}$$
  are shift inequivalent.
By \eqref{eq_RE}, it is equivalent to proving that
$$
\begin{cases}
(1) \text{ sequences in the set }  \mathcal{S}_A   \text{ are shift inequivalent};\\
(2) \text{ sequences in the  set } \mathcal{S}_B   \text{ are shift inequivalent}; \\
(3) \text{ sequences in } \mathcal{S}_A  \text{ are shift inequivalent to those in }  \mathcal{S}_B.		
\end{cases}
$$

	{\bf Case (1)}: Suppose that there exist shift equivalent sequences in $ \mathcal{S}_A$, namely
	$\sn\in\mathcal{B}(n, \omega, d_1)$ and $R^h(\sn)=\textbf{v}_n\in \mathcal{B}(n, \omega, d_2)$,
then we have $add(\sn)=add(\textbf{v}_n) =0$, $1 \leq d_1,d_2 \leq n-\omega -1 \leq \lceil \frac{n}{4} \rceil$.
Let $\textbf{s}'_n=L^{\omega-\n}(\sn)$ and $\textbf{v}'_n=L^{\omega-\n}(\textbf{v}_n)$. Thus we can see that
\begin{equation}\label{praran}
	\left\{\begin{array}{cll}
			&\textbf{s}'_n\in \mathcal{B}(n, \n,d_1), & add(\textbf{s}'_n)=t_1=\omega-\n, \\
			R^h(\textbf{s}'_n)=&\textbf{v}'_n\in \mathcal{B}(n, \n,d_2), & add(\textbf{v}'_n)=t_2=\omega-\n,
		\end{array}\right.
	\end{equation}
	where $ 1 \leq d_1,d_2 \leq n -\omega -1 $.

	According to Lemma \ref{b range}, we have
	\begin{equation}\label{hhh_ran}
		\left(n-    \left\lfloor\frac{n}{2}\right\rfloor   -d_1 \right)+t_{1} \leq h < \left(n-   \left\lfloor\frac{n}{2}\right\rfloor   -d_1  \right)+(d_{1}+d_{2}-t_{2}).
	\end{equation}
Thus $n-h   \leq \n+d_1-t_1 <\n+d_1$.
It implies that
	$\textbf{v}'_{[h-t_1:n]}=\s'_{[-t_1:n-h]}$
	is contained in $\s'_{[-t_1:\n+d_1-1]}$.
	From the structure of $\textbf{s}'_n$, it follows that
	each $d_1$-length sequence in $\textbf{v}'_{[h-t_1:n]}=(v'_{h-t_1},v'_{h-t_1+1},\dots,v'_{n-1})$				
	is a shifted version of $\s'_{d_1}$.				
	Moreover from  \eqref{praran} and \eqref{hhh_ran}, we have
	 $$h-t_1 \leq n- \left\lfloor \frac{n}{2} \right\rfloor +d_2-t_2-1-t_1
	\leq n-\left\lfloor \frac{n}{2} \right\rfloor +(n-\omega -1)-1- \left(2\omega-2 \left\lfloor \frac{n}{2} \right\rfloor  \right)
	\leq \left\lfloor \frac{n}{2} \right\rfloor -d_1.$$
	thus $(v'_{\n-d_1},v'_{\n-d_1+1},\dots,v'_{\n-1})$ and $(v'_{\n+d_2-d_1},v'_{\n+d_2-d_1+1},\dots,v'_{\n+d_2-1})$
	are contained in $\textbf{v}'_{[h-t_1:n]}$,
	implying that they both are shifted sequences of $\s'_{d_1}$.			
	While it follows from the structure of $\textbf{v}'_n$ in \eqref{structure of finite} that $$(v'_{\n-d_1},v'_{\n-d_1+1},\dots,v'_{\n-2})=(v'_{\n+d_2-d_1},v'_{\n+d_2-d_1+1},\dots,v'_{\n+d_2-2})$$
	and $v'_{\n-1}\neq v'_{\n+d_2-1}$, a contradiction.		
	Therefore,	all sequences in $ \mathcal{S}_A$ are shift inequivalent.

{\bf Case (2)}:  Suppose that there are shift equivalent sequences in $\mathcal{S}_B$, then
\begin{equation*}
	\left\{\begin{array}{rll}
		\textbf{s}_{n}\in \mathcal{B}(n,n-d_1,d_1),   & n-\omega \leq d_1\leq \n,     &  add(\textbf{s}_{n})=\omega+d_1-n , \\
		\textbf{v}'_{n}\in \mathcal{B}(n,n-d_2,d_2), &    n-\omega \leq d_2 \leq d_1  \leq \n,     & add(\textbf{v}'_{n})=\omega+d_2-n.
	\end{array}
	\right.
\end{equation*}
Let $c=n-d_1$, $\textbf{v}_{n}=L^{n-d_2-c}(\textbf{v}'_{n})$, then
\begin{equation*}
	\left\{\begin{array}{rll}
		\textbf{s}_{n}\in \mathcal{B}(n,c,d_1),   & n-\omega \leq d_1\leq \n,     &  add(\textbf{s}_{n})=t_1=\omega+d_1-n , \\
		\textbf{v}_{n}\in \mathcal{B}(n,c,d_2), &    n-\omega \leq d_2 \leq d_1  \leq \n,     & add(\textbf{v}_{n})=t_2=\omega+d_1-n.
	\end{array}
	\right.
\end{equation*}
Because of $\textbf{s}_{n}$ and $\textbf{v}_{n}$ are shift equivalent,   we let $\textbf{v}_{n}=R^b(\s_n)$ with $0<b<n$, then $\textbf{s}_{n}=R^{n-b}(\textbf{v}_{n})$.
Note that
\begin{equation}\label{improlove}
	c+t_1+t_2+1 =  d_1+2\omega+1-n \geq d_1+\left(2  \left\lfloor\frac{3n}{4}\right\rfloor    +1-n  \right) \geq d_1+\left\lfloor\frac{n}{2}\right\rfloor \geq d_1+d_2.
\end{equation}
Hence, on the one hand, for the pair of sequences $(\textbf{s}_{n}, R^b(\s_n))$,
by Lemma \ref{d_2=b}  and \eqref{improlove}, we obtain $b=d_2$;
on the other hand, for $(  \textbf{v}_{n}, R^{n-b}(\textbf{v}_{n}))$,
again by  Lemma \ref{d_2=b}  and \eqref{improlove}, we get $n-b=d_1$.
Thus, $b=d_2=n-d_1$.
From $d_1,d_2 \leq \n$, we derive $b=d_1=d_2=\frac{n}{2}$.
This leads to $t_1=t_2=0$,
contradicting
$$t_1=t_2=\omega+d_1-n \geq \left\lfloor\frac{3n}{4}\right\rfloor  - \frac{n}{2} \geq 1, \,\, (n\geq 3).$$
	Therefore, sequences in $\mathcal{S}_B$ are shift inequivalent.

{\bf Case (3)}:  Suppose that certain sequences in $\mathcal{S}_A$ are shift equivalent to those in $\mathcal{S}_B$:
\begin{equation*}
	\left\{\begin{array}{rll}
		\textbf{s}_{n}\in \mathcal{B}(n,n-d_1,d_1),   & n-\omega \leq d_1\leq \n, &       add(\textbf{s}_{n})=t=\omega+d_1-n , \\
		R^{h}(\s_n)= \textbf{v}'_{n} \in\mathcal{B}(n,\omega,d_2), & 1 \leq d_2< n-\omega,  &     add(\textbf{v}'_{n})=0.
	\end{array}\right.
\end{equation*}
Let $c=n-d_1$, and $\textbf{v}_{n}=L^{t}(\textbf{v}'_{n}) $, then
\begin{equation*}
	\left\{\begin{array}{rll}
		\textbf{s}_{n}\in \mathcal{B}(n,c,d_1),   & n-\omega \leq d_1\leq \n, &       add(\textbf{s}_{n})=t_1=\omega+d_1-n , \\
		R^{h-t}(\s_n)=\textbf{v}_{n} \in\mathcal{B}(n,c,d_2), & 1 \leq d_2< n-\omega ,  &     add(\textbf{v}_{n})=t_2=\omega+d_1-n.
	\end{array}
	\right.
\end{equation*}
Since
$$ c+t_1+t_2+1 =  d_1+2\omega+1-n \geq d_1+ \left(2  \left\lfloor\frac{3n}{4}\right\rfloor    +1-n \right) \geq d_1+\left\lfloor\frac{n}{2}\right\rfloor \geq d_1+d_2,$$
it follows from Lemma \ref{d_2=b} that $h-t=d_2$.

Based on $\textbf{s}_{n}\in\mathcal{B}(n,n-d_1,d_1)$ and $add(\textbf{s}_{n})=t$, we have
\begin{equation}\label{idd_11}	
	s_{-t-1}=\overline{s}_{d_1-t-1},	\,\,s_{i}={s}_{i+d_1},    \,\,   \,\, i \in [-t, n-d_1-2]
	\,\, \mbox{ and  } \,\,
	s_{n-d_1-1}=\overline{s}_{n-1}.
\end{equation}
Since $ R^{d_2+t}(\s_n)= \textbf{v}_{n}\in\mathcal{B}(n,\omega,d_2)$ and $add( R^{d_2+t}(\s_n) )=0$, we derive
\begin{equation}\label{idd_22}
	\begin{array}{rll}
		& s_{-d_2-t-1}=\overline{s}_{-t-1},	\,\,	s_{i}={s}_{i+d_2},    \,\,  \,\, i \in [-d_2-t, \omega-d_2-t-2], \\
		&	 \mbox{ and } \,\, s_{\omega-d_2-t-1}=\overline{s}_{\omega-t-1}=\overline{s}_{n-d_1-1},
	\end{array}
\end{equation}
where all subscripts in the above equations are  taken modulo $n$.

From \eqref{idd_22}, we know that arbitrary consecutive $d_2$ terms in  $$(\overline{s}_{-d_2-t-1},{s}_{-d_2-t},s_{-d_2-t+1},\dots,{s}_{\omega-t-2},\overline{s}_{\omega-t-1})$$
consist of a shifted sequence of $\textbf{v}_{d_{2}}$.
Then $(\overline{s}_{-d_2-t-1},{s}_{-d_2-t},s_{-d_2-t+1},\dots,{s}_{-t-2})$
is a shifted sequence of  $\textbf{v}_{d_{2}}$.
By \eqref{idd_11} and \eqref{idd_22}, it follows that
\begin{align*}
	& (\overline{s}_{-d_2-t-1},{s}_{-d_2-t},s_{-d_2-t+1},\dots,{s}_{-t-2}) \\
	=\, & ({s}_{-t-1},{s}_{-t},s_{-t+1},\dots,{s}_{d_2-t-2})  \\
	=\, & (\overline{s}_{d_1-t-1},{s}_{d_1-t},s_{d_1-t+1},\dots,{s}_{d_1+d_2-t-2}).
\end{align*}
Thus,  $(\overline{s}_{d_1-t-1},{s}_{d_1-t},s_{d_1-t+1},\dots,{s}_{d_1+d_2-t-2})$
is a shifted version of  $\textbf{v}_{d_{2}}$.
Consider the subsequence $({s}_{d_1-t-1},{s}_{d_1-t},s_{d_1-t+1},\dots,{s}_{d_1+d_2-t-2})$, since its subscripts satisfy
$$-d_2-t \leq   d_1-t-1, \, d_1-t, \dots, d_1+d_2-t-2 \leq \omega-t-2,$$
it is also a shifted version of $\textbf{v}_{d_{2}}$, a contradiction.
Therefore, the sequences in $\mathcal{S}_A$ are not shift equivalent to the sequences in  $\mathcal{S}_B$.

Combining the above three subcases, there are no shift equivalent sequences in
$$\left\{\sn : \sn \in \mathcal{B} \left(n,  \left\lceil\frac{n}{2}\right\rceil  \right), \, \, \, add(\sn)=\omega-\left\lceil\frac{n}{2}\right\rceil  \right\}.$$
Thus from  \eqref{lem_R_B}, each sequence $\sn\in\mathcal{R}(n,\nup)$  with $add(\sn) = \omega - \nup$ is a unique representative sequence.
\end{proof}

Combining Lemmas \ref{pop2} and \ref{pop1}, we immediately obtain the following proposition.

\begin{prop}\label{prop_new}
	Each sequence $\s_n\in \mathcal{B}(n,\nup)$ with $add(\sn)\geq  \NN-\nup$ is the  unique  representative sequence in $E(\sn)$.
\end{prop}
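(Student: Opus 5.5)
The plan is to set $\omega=\nup+add(\sn)$, so that the hypothesis $add(\sn)\geq \NN-\nup$ becomes exactly $\omega\geq\NN$. With this choice, $\sn$ is an element of the set $\mathcal{S}_\omega$ defined in \eqref{cru_set}, and the proposition reduces to two claims about this set. The first is that $\sn$ is a representative of $E(\sn)$. The second is that no other sequence of $E(\sn)$ is a representative.

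For the first claim, we would apply Lemma~\ref{pop2} with this $\omega$. It directly gives that $\sn$ is a sequence representative of $E(\sn)$. Equivalently, by \eqref{lem_R_B}, $\sn\in\mathcal{R}(n,\nup)$ with $add(\sn)=\omega-\nup$.

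For uniqueness, suppose $\widetilde{\s}_n\in E(\sn)$ is another representative. By Definition~\ref{ES_Def}, $add(\widetilde{\s}_n)\geq add(\sn)$, and since $\sn$ is itself a representative we also have $add(\sn)\geq add(\widetilde{\s}_n)$. Hence $add(\widetilde{\s}_n)=\omega-\nup$. Moreover, $\widetilde{\s}_n\in\mathcal{B}(n,\nup)$ because $E(\sn)\subseteq\mathcal{B}(n,\nup)$. So $\widetilde{\s}_n$ also lies in $\{\textbf{a}_n\in\mathcal{R}(n,\nup):add(\textbf{a}_n)=\omega-\nup\}$. Lemma~\ref{pop1} says the elements of this set are pairwise shift inequivalent. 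But $\widetilde{\s}_n$ is a cyclic shift of $\sn$, so we must have $\widetilde{\s}_n=\sn$.

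The only point that needs care is the identification of $\omega$ and the check that the hypotheses of Lemmas~\ref{pop2} and~\ref{pop1} hold, including $n\geq3$. For very small $n$ this can be handled directly, or is vacuous. Beyond that, no real obstacle is expected, since the heavy work is already contained in the two lemmas.
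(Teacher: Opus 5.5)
Your proposal is correct and is the paper's own argument: the paper obtains this proposition by combining Lemma~\ref{pop2} (each sequence in $\mathcal{S}_\omega$ is a representative) with Lemma~\ref{pop1} (these representatives are pairwise shift inequivalent), and your observation that any second representative of $E(\sn)$ must have the same number of added terms, and therefore lies in the same set, is the step the paper leaves implicit. One small correction: the case $n=2$ is neither vacuous nor handleable, because $E((0,1))=\{(0,1),(1,0)\}\subseteq\mathcal{B}(2,1)$ with both sequences having $0$ added terms while $\NN-\nup=0$, so uniqueness actually fails there and the hypothesis $n\geq 3$ from Lemma~\ref{pop1} must be kept (the paper's statement omits it too).
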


Based on Proposition \ref{prop_new}, combined with Lemma \ref{c-c+1} and Theorem \ref{thm_lowerbound}, we can conclude from the nonlinear complexity of finite-length sequences that $\s_n$ is the unique representative sequence.

\begin{corollary}\label{cor_new}
	Let   integers  $k \geq  0$ and $c \geq  \NN$. For $\s_n \in \mathcal{B}(n,c)$, we have
	
	\noindent{\rm (i)} $\s_n$ is the unique representative sequence.

	\noindent{\rm (ii)}
	 when $n = 4k + 2$,
	the inequality $c \geq \NN$ gives a tight lower bound such that every sequence $\s_n \in \mathcal{B}(n, c)$ is a unique representative sequence.
\end{corollary}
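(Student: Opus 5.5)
For part (i), I will reduce to Proposition \ref{prop_new} by shifting $\s_n$ so that its nonlinear complexity becomes $\nup$. Let $\s_n\in\mathcal{B}(n,c,d)$ with $c\geq\NN$ and $add(\s_n)=t$. By Theorem \ref{thm_lowerbound}, $\s_n\in\mathcal{R}(n,c)$, so it is a representative of $E(\s_n)$. To get uniqueness, set $k=c-\nup<c$ and $\s'_n=L^{k}(\s_n)$. By Lemma \ref{c-c+1} (i), $\s'_n\in\mathcal{B}(n,\nup,d)$. Since $\s_n=R^{k}(\s'_n)$ and $k\le n-\nup-d$ (because $d\le n-c$), Lemma \ref{c-c+1} (ii) gives $add(\s'_n)=t+k$. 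Hence
\[ add(\s'_n)\geq c-\nup\geq \NN-\nup. \]
Proposition \ref{prop_new} then says $\s'_n$ is the unique representative of $E(\s'_n)$ in $\mathcal{B}(n,\nup)$. This is at the level $\nup$, whereas $\s_n$ lives at level $c$, so I still have to transfer uniqueness back to level $c$.

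The transfer goes by contradiction. Suppose $\textbf{a}_n\neq\s_n$ is another element of $E(\s_n)\subseteq\mathcal{B}(n,c)$ with $add(\textbf{a}_n)=add(\s_n)=t$; maximality of $\s_n$ already rules out anything larger. Put $\textbf{a}'_n=L^{k}(\textbf{a}_n)$. The same argument gives $\textbf{a}'_n\in\mathcal{B}(n,\nup)$ with $add(\textbf{a}'_n)=t+k$, and $\textbf{a}'_n$ is a shift of $\s'_n$. It is distinct from $\s'_n$, since $L^k$ is a bijection and $\s_n$ is aperiodic. So $\textbf{a}'_n$ and $\s'_n$ are two shift equivalent sequences in $\{\textbf{x}\in\mathcal{B}(n,\nup): add(\textbf{x})=\omega-\nup\}$ with $\omega=t+c\geq\NN$, which contradicts Lemma \ref{pop1}. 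The only delicate point is the inequality $k\le\min\{t',n-\nup-d\}$ needed in Lemma \ref{c-c+1} (ii), where $t'=add(\s'_n)$. This needs $\nup$ rather than $\n$ as the base level, and it holds because $k\le t'$ trivially and $d\le n-c$.

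For part (ii), let $n=4k+2$, so that $\NN=3k+1$. Tightness only requires a sequence in $\mathcal{B}(n,3k)$ that is not a representative. If $k$ is even, write $k=2j$, so $n=8j+2$. Case (i) in the proof of Theorem \ref{thm_lowerbound} already provides $\mathbf{u}_n\in\mathcal{B}(8j+2,6j)$ and $\mathbf{v}_n=R^{4j+1}(\mathbf{u}_n)\in\mathcal{B}(8j+2,6j)$ with $add(\mathbf{v}_n)>add(\mathbf{u}_n)$. Here $6j=\NN-1$, so $\mathbf{u}_n$ is not a representative at all. If $k$ is odd, so $n=8j+6$, case (ii) of that proof gives the analogous counterexample obtained by replacing $2j$ with $2j+1$. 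In both cases the bound cannot be lowered to $\NN-1$.

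The main obstacle is the $8j+6$ example. The paper only asserts it by substitution, so I would verify its membership in $\mathcal{B}$ and its $add$ values directly: check the spacing $2j+1$ block structure and count the trailing agreements.
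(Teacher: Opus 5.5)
Your proposal is correct and follows essentially the same route as the paper. For (i), you shift left by $c-\nup$ into $\mathcal{B}(n,\nup)$ with $add\geq \NN-\nup$, apply Proposition~\ref{prop_new} (equivalently Lemma~\ref{pop1}) there, and pull uniqueness back to level $c$; for (ii), you reuse the $n=8k+2$ and $n=8k+6$ counterexamples at $c=\NN-1$ from the proof of Theorem~\ref{thm_lowerbound}. Your explicit contradiction argument for the transfer back to level $c$ spells out the step the paper compresses into ``consequently''. One small point: check $add(L^{k}(\s_n))=add(\s_n)+k$ directly (the last $k$ entries of $L^{k}(\s_n)$ are $s_0,\dots,s_{k-1}$, which continue the $d$-periodic pattern) rather than via Lemma~\ref{c-c+1}~(ii), whose hypothesis $k\le t'$ is what you are trying to establish.
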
	
\begin{proof}
	(i)
	For any sequence $\s_n \in \bigcup\limits_{1 \leq d \leq n - c} \mathcal{B}(n, c, d)$, it follows from Lemma \ref{c-c+1} (i) that $L^{c - \nup}(\s_n) \in \mathcal{B}(n, \nup, d)$,
	and $add(L^{c - \nup}(\s_n)) = add(\s_n) + (c - \nup) \geq \NN - \nup$.
	Therefore,
	{\small
\begin{align*}
	&\bigg\{ \sni : \s_n \in \bigcup\limits_{1 \leq d \leq n-c} \mathcal{B}(n,c,d),\ c \geq \left\lfloor \frac{3n}{4} \right\rfloor \bigg\} \\
	\mathrel{\scalebox{1.5}{$\subset$}}   &\bigg\{ \sni : \s_n \in \bigcup\limits_{1 \leq d \leq \n} \mathcal{B}\left(n,\left\lceil \frac{n}{2} \right\rceil,d\right),\ add(\s_n) \geq \left\lfloor \frac{3n}{4} \right\rfloor - \left\lceil \frac{n}{2} \right\rceil \bigg\}.
\end{align*}}Combining this with Proposition~\ref{prop_new}, we conclude that when $c \geq \NN$,
$L^{c - \nup}(\s_n)$ is the unique representative sequence of its shift equivalence class
$$E(L^{c - \nup}(\s_n)) = \{ R^k(\s_n) : 0 \leq k < n \} \cap \mathcal{B}(n, \lceil n/2\rceil )$$
 within $\mathcal{B}\left(n, \left\lceil \frac{n}{2} \right\rceil \right)$.
Consequently, $\s_n$ is the unique representative sequence of its shift equivalence class
$E(\s_n) = \{ R^k(\s_n) : 0 \leq k < n \} \cap \mathcal{B}(n, c)$ within $\mathcal{B}(n, c)$.
That is, $\s_n$ is the unique representative sequence.
Therefore, the first statement follows.
	
	(ii) 	Based on Theorem~\ref{thm_lowerbound}, when $n = 4k + 2$ for any integer $k$,
	$c \geq \NN$ represents a tight lower bound that ensures $\s_n$ is a representative sequence.
	Therefore, $c \geq \NN$ also serves as a tight lower bound ensuring that $\s_n$ is the unique representative sequence.
\end{proof}

\begin{remark}
	Experimental data indicate that when $n = 4k + 2$ for any integer $k$,
	$add(\s_n) = t \geq \NN - \nup$ serves as a tight lower bound ensuring that $\s_n \in \mathcal{B}(n, \nup)$ is the unique representative sequence.
	For other values of $n$, when $n \in \{7, 11, 13, 15, \dots\}$,
	the bound $add(\s_n) \geq \NN - \nup$ in Proposition~\ref{prop_new} also represents a tight lower bound.
\end{remark}

With the preparations above,
we are now ready to prove the structure of $n$-periodic sequences with nonlinear complexity
$\geq \NN$ in Theorem \ref{seq_structure}.
 Define ${\mathcal{P}}(n,\omega)$
as the set of binary shift inequivalent sequences with period $n$ and nonlinear complexity $\omega$.

\noindent\textbf{Proof of Theorem \ref{seq_structure}.}
	For $\NN \leq  \omega\leq  n-1$,
	according to Lemma~\ref{thm_core} and Proposition~\ref{prop_new},  we have	
	$$
	\begin{array}{cll}
		{\mathcal{P}}(n,\omega)
		\vspace{1mm}
		&=&\{ \sni :  \sn \in \mathcal{B}(n,\nup), \, add(\sn)=\omega-\nup  \}.
	\end{array}			
	$$		
	Furthermore, from the equation in \eqref{eq_RE},  we get
	\begin{equation*}
		\begin{array}{cll}
			{\mathcal{P}}(n,\omega) 		
			=&\sum\limits_{1 \leq d < n- \omega } \{\sni :   \sn \in\mathcal{B}(n,\omega,d), \, add(\sn)=0 \}	\\
			& +\sum\limits_{n- \omega \leq d \leq \n}\{ \sni :  \sn \in  \mathcal{B}(n,n-d,d), \, add(\sn)=\omega+d-n  \}	.
		\end{array}	
	\end{equation*}

(i) When  $1 \leq d \leq n-\omega-1$,
based on the structure of $\sn \in\mathcal{B}(n,\omega,d)$ given in \eqref{structure of finite} and the definition of $add(\sn)$ in Definition \ref{def-t},
we have $s_{n-1} = \overline{s}_{(n-1) \,\text{mod}\,\,d}$ and
$\overline{s}_{n-1} = {s}_{d-1}$,
then ${s}_{(n-1) \,\text{mod}\,\,d}= {s}_{d-1}$, thus
$\s_n$ has the form as in \eqref{stru1}.

(ii) When $n-\omega \leq d \leq \n$, $\sn \in  \mathcal{B}(n,n-d,d)$
has the form of
$$\sn= (s_0,s_1,\dots,s_{n-1}) = {(s_0, \dots, s_{d-1})^{q'} \, (s_{0},\dots, s_{r'-1}, \overline{s}_{r'})}$$
where $q' = \lfloor \frac{n-1}{d} \rfloor$,  $0 \leq r'=(n-1)-q'd<d$.
Moreover,	if  $add(\sn)=0$, then $\sn$ satisfies
${s}_{n-1} = \overline{s}_{d-1}$;
if $add(\sn)=t = \omega+d-n > 0$, then according to the structure of $\sn$ and Definition \ref{def-t},  $\sn$ satisfies
\begin{equation*}
	\overline{s}_{(n-1) \,\text{mod}\,\,d} = {s}_{d-1}, {s}_{(n-i) \,\text{mod}\,\,d} = {s}_{(d-i)\,\text{mod}\,\,d}, \,\,2 \leq i  \leq t,	 {s}_{(n-t-1) \,\text{mod}\,\,d} = \overline{s}_{(d-t-1)\,\text{mod}\,\,d}.
\end{equation*}	
The desired statements (i) and (ii) thus follow.
\hfill $\square$

Theorem \ref{seq_structure} in this paper presents the structure of $n$-periodic sequences with nonlinear complexity
	$\geq \NN$.
 This further improves the result in \cite{yuan}, which provided a method to generate the set $	\widetilde{\mathcal{P}}(n,\omega)$, but didn't offer an insight into the internal structure of the periodic sequences in
$	\widetilde{\mathcal{P}}(n,\omega)$.

\section{Enumeration of periodic sequences with high nonlinear complexity}\label{Sec5_2}

This section is devoted to determining the exact value  of $|{\mathcal{P}}(n,\omega)|$ for $\omega \geq \NN$,  where	$|{\mathcal{P}}(n,\omega)|$  denotes the cardinality of ${\mathcal{P}}(n,\omega)$.

Let $A(d)$ be the set consisting of all aperiodic $d$-length sequences $\s_d $.
Since any sequence of length $n$
can be regarded as a periodic sequence with period
dividing $n$, we obtain
$\sum\limits_{d | n} |A(d)|	 =2^n$.
According to M$\rm{\ddot{o}}$bius Inversion Formula \cite{Lidl1997}, we have
$|A(d)|	= \sum\limits_{e | d} \mu(e) 2^{d/e}
$, where the M$\rm{\ddot{o}}$bius function defined by
\begin{equation}\label{MoMo}
	\mu(n) = \begin{cases}
	1 & \text{ if } n=1,\\
	(-1)^k & \text{ if } n \text{ is the product of } k \text{ distinct primes},\\
	0 & \text{ if } n \text{ is divisible by the square of a prime}.\\
	\end{cases}
\end{equation}
	According to the structure of sequences in $\mathcal{P}(n,\omega)$ give in  Theorem \ref{seq_structure},
		when $1 \leq d<n-\omega$,
	if $\s_d$ has been selected, then $(s_0,s_1,\dots,s_{d+\omega-1})$ and $s_{n-1}$ are determined,  only $(s_{d+\omega},s_{d+\omega+1},\dots,s_{n-2})$ can be chosen arbitrarily in $\mathbb{Z}_{2^{n-\omega-1-d}}$.
	Thus the number of sequences is
	$$	2^{n-\omega-1-d}|A(d)|	= \sum\limits_{e | d} \mu(e)  2^{ n-\omega-1-d+d/e}. $$
When $n-\omega \leq d \leq \n$,  if $t = \omega+d-n=0$, then $d = n-\omega$ and
denote
$${N}(n,d,0)	=
\{\s_{d} \in A(d) : s_{n-1 \,(\text{mod}\,\,d)}= s_{d-1} \},$$
where $A(d)$ is the set of all aperiodic $d$-length sequences $\s_d $.
 If $t = \omega+d-n>0$, then we denote
\begin{equation}\label{condiN}
	\begin{array}{cll}
		&{N}(n,d,t)\\
		=& \{\s_d \in A(d) :	\overline{s}_{(n-1) \,\text{mod}\,\,d} = {s}_{d-1}, {s}_{(n-i) \,\text{mod}\,\,d} = {s}_{(d-i)\,\text{mod}\,\,d}, \,\,2 \leq i  \leq t, \\
		&	 {s}_{(n-t-1) \,\text{mod}\,\,d} = \overline{s}_{(d-t-1)\,\text{mod}\,\,d} \}.
	\end{array}	
\end{equation}	
Thus
$$|{\mathcal{P}}(n,\omega)|	 = \sum\limits_{d =1}^{n- \omega-1}\sum\limits_{e | d} \mu(e) 2^{ n-\omega-1-d +d/e} + |	{N}(n,n-\omega,0)|	
 +\sum\limits_{n- \omega < d \leq \n}|	 {N}(n,d,\omega+d-n)|		.$$

	In what follows, we shall determine the values of $N(n, d, 0)$ and 	
	$N(n, d, t)$ with $t> 0$, respectively.

	\begin{lemma}\label{lemma fi}
		Let ${N}(n,d,0)  =\{ \s_d \in A(d): s_{(n-1) \,\text{mod}\,\,d} = s_{d-1} \}	$.
		For composite integers $d$, we have	
		\begin{equation*}
			|	{N}(n,d,0)|	=
			\begin{cases}
				|A(d)|	=\sum\limits_{e | d} \mu(e) 2^{d/e},  &\text{ if }  d \,|\,n,  \\
				2^{d-1}-2-	\sum\limits_{e|d, 1<e<d}|{N}(n,e,0)|, &   \text{ if }  d \nmid n.      \\
			\end{cases}		
		\end{equation*}
		Moreover, for prime $d$ we have
		$$	
	|{N}(n,d,0)|	=
		\begin{cases}
			|A(d)|	=2^{d}-2,  	&\text{ if }  d \,|\,n,  \\
			2^{d-1}-2,    	&\text{ if }  d \nmid n.       \\
		\end{cases}	
		$$
	\end{lemma}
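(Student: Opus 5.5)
The plan is to split according to whether $d\mid n$. In each case we reduce the condition $s_{(n-1)\,\text{mod}\,d}=s_{d-1}$ to a statement about indices. For $d\nmid n$ we then count all $d$-length sequences (aperiodic or not) satisfying the condition, and sort them by their least period. This is the same Möbius-type decomposition that gives $\sum_{d\mid n}|A(d)|=2^n$ earlier in this section.

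\emph{Case $d\mid n$.} Here $(n-1)\,\text{mod}\,d=d-1$, so the defining condition reads $s_{d-1}=s_{d-1}$ and is vacuous. Hence $N(n,d,0)=A(d)$, and the Möbius formula stated above gives $|A(d)|=\sum_{e\mid d}\mu(e)2^{d/e}$. For prime $d$ this is $2^d-2$.

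\emph{Case $d\nmid n$.} Put $j=(n-1)\,\text{mod}\,d$; then $j\neq d-1$. Let $T_d$ be the set of all $\s_d\in\mathbb{Z}_2^d$ with $s_j=s_{d-1}$. Since $j$ and $d-1$ are distinct positions, one coordinate is determined by the other, so $|T_d|=2^{d-1}$.

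Every $\s_d\in T_d$ can be written uniquely as $\s_d=\textbf{u}_e^{d/e}$, where $e\mid d$ and $\textbf{u}_e=(u_0,\dots,u_{e-1})\in A(e)$. For such a sequence, $s_i=u_{i\,\text{mod}\,e}$. Because $e\mid d$, we have $j\,\text{mod}\,e=(n-1)\,\text{mod}\,e$ and $(d-1)\,\text{mod}\,e=e-1$. So $\s_d\in T_d$ holds exactly when $u_{(n-1)\,\text{mod}\,e}=u_{e-1}$, that is, when $\textbf{u}_e\in N(n,e,0)$.

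This gives a bijection between $T_d$ and $\bigsqcup_{e\mid d}N(n,e,0)$, and therefore
\[
2^{d-1}=\sum_{e\mid d}|N(n,e,0)|.
\]
The term $e=1$ contributes $|A(1)|=2$, since the condition is trivial for length $1$. The term $e=d$ is the quantity we want. Solving for it gives
\[
|N(n,d,0)|=2^{d-1}-2-\sum_{e\mid d,\,1<e<d}|N(n,e,0)|.
\]
The intermediate terms are left as $|N(n,e,0)|$ and evaluated recursively by the same lemma: by the first case when $e\mid n$, and by this formula when $e\nmid n$. For prime $d$ the middle sum is empty, which gives $2^{d-1}-2$.

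The main point to check carefully is the index reduction $((n-1)\,\text{mod}\,d)\,\text{mod}\,e=(n-1)\,\text{mod}\,e$. This is what makes the condition for $\s_d$ transfer exactly to the same-type condition for its primitive root $\textbf{u}_e$, independently of $d$. It also shows that the decomposition is consistent for divisors $e$ of $d$ that divide $n$ and for those that do not. The remaining steps are routine bookkeeping.
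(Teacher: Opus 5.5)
Your proposal is correct and follows essentially the same route as the paper. The case $d\mid n$ is handled by noting that the condition is vacuous. For $d\nmid n$, the paper likewise starts from the $2^{d-1}$ sequences with $s_{d-1}=s_{(n-1)\bmod d}$ and subtracts the periodic ones $(\mathbf{s}_e)^{d/e}$: two for $e=1$, and $|N(n,e,0)|$ for $1<e<d$ via $s_{(n-1)\bmod d}=s_{(n-1)\bmod e}$. Your explicit bijection $T_d\leftrightarrow\bigsqcup_{e\mid d}N(n,e,0)$ is a slightly cleaner packaging of that same exclusion count.
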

	\begin{proof}
		If $ d \,|\,n$, then 		$|\{ \s_d \in A(d): s_{(n-1) \,\text{mod}\,\,d} = s_{d-1} \}|	=	|\{ \s_d \in A(d) \}|	=	|A(d)|	$.
		If $d \nmid n$, then ${N}(n,d,0)=\{\s_d \in A(d): s_{(n-1) \,\text{mod}\,\,d} = s_{d-1}  \}$. 	
		 Thus, to determine the value of ${N}(n,d,0)$,
		 in  $2^{d-1}$ vectors $(s_0,\dots,s_{d-2}) \in \mathbb{Z}_{2}^{d-1}$,
		 we only need to  exclude those sequences
		 that result in
		 $\s_{[0:d]}=(s_0,s_1,\dots,s_{d-2},s_{(n-1) \,\text{mod}\,\,d})$
		  not being  an  aperiodic finite-length sequence. 		
	Below, we will now study the case where
	$\s_{[0:d]}=(s_0,s_1,\dots,s_{d-1})=(s_0,s_1,\dots,s_{d-2},s_{n-1 \,(\text{mod}\,\,d)})$
 is   a periodic finite-length sequence.

		If $(s_0,s_1,\dots,s_{d-1})$ is a periodic finite-length sequence, then it equals $(\s_e)^{d/e}$ where $\s_e$ is aperiodic with $e|d, 1\leq e<d$.
		Thus it suffices to characterize $\s_e$ such that $(s_0,s_1,\dots,s_{d-1})$ is periodic.
		When $e=1$, $s_e$ has two choices, (0) or (1), which yields $\s_{[0:d]}=(0^d)$ or $(1^d)$.
		When $e>1$, since $(s_0,s_1,\dots,s_{d-1})=(\s_e)^{d/e}$, we have $s_{d-1} = s_{e-1}$ and $s_{(n-1) \,\text{mod}\,\,d}=s_{(n-1) \,\text{mod}\,\,e}$.
		Moreover by $ s_{(n-1) \,\text{mod}\,\,d}=s_{d-1} $,
		it implies
		$\s_e$ satisfies $s_{(n-1) \,\text{mod}\,\,e}=s_{e-1} $.
		Then the number of 	$\s_e$  is $|\{ \s_e \in A(e): s_{(n-1)\,\text{mod}\,\,e} = s_{e-1} \}|=|N(n,e,0)|$,
		where $A(e)$		is the set consisting of all aperiodic finite-length sequences $\s_e$
		of length $e$.
		Thus it follows $|	{N}(n,d,0)|	=2^{d-1}-2-	\sum\limits_{e|d, 1<e<d}|	{N}(n,e,0)|	$.
		
		Hence the calculation of $|	N(n,d,0)|	$ can be reduced to $|	N(n,e,0)|	$ with prime $e$.
		For prime $d$,
		$|	N(n,d,0)|	=|A(d)|$ is obvious when $d \,|\,n$.
		When  $d \nmid n$,
		since prime $d$ only has proper divisor 1, we have
		$|N(n,d,0)|=2^{d-1}-2$. The proof is thus complete.
	\end{proof}

	Note that the value of ${N}(n,d,0)$ is
	determined by $d$ and $(n \,\,\text{mod}\,\,d)$.
When $t>0$, 	
 the equation
$$\overline{s}_{(n-1) \,\text{mod}\,\,d} = {s}_{d-1}, {s}_{(n-i) \,\text{mod}\,\,d} = {s}_{(d-i)\,\text{mod}\,\,d}, \,\,2 \leq i  \leq t, 	 {s}_{(n-t-1) \,\text{mod}\,\,d} = \overline{s}_{(d-t-1)\,\text{mod}\,\,d} $$
 can be viewed equivalently as a system of binary linear equations
	${A}\cdot \mathbf{x}^T=\mathbf{b}^T$ where $\mathbf{x}=(s_0,s_1,\dots,s_{d-1})$, $\mathbf{b}=(1,\overbrace{0, \dots 0}^{t-1},1)$ and the matrix $A=(a_{i,j})_{(t+1)\times d}$ satisfies that $a_{i,((n-i)\,\,\text{mod}\,\,d)+1}=1$, $a_{i,(d-i)+1}=1$, $i,j \geq 1$ and other elements are 0.
	Thus ${N}(n,d,t)$ defined in \eqref{condiN} can be expressed alternatively as
	$$	{N}(n,d,t)= \{\mathbf{x}=(s_0,s_1,\dots,s_{d-1}) \in A(d) :\,\,	A\cdot \mathbf{x}^T=\mathbf{b}^T \}.$$
			Note that each line of $A$ has two 1's and others 0, and
		the spacing between the subscripts of ${s}_{(n-i) \,\text{mod}\,\,d}$ and ${s}_{(d-i)\,\text{mod}\,\,d}$ with $1\leq i\leq t+1$ is the same,
		equal to $(n-d)\,\text{mod}\,\,d$.
		Thus the $i$-th line of $A$ shall be got by taking left circular shift $i$ bits on the first line of $A$.
		So we can define the gap of $A$ as the least number of 0's between two 1's under circle in the first line of $A$, denoted by $gap(A)$.
		In fact, $gap(A)$ can completely determine the matrix $A$.
		Let $l \equiv n \,(\text{mod}\,\,d)$, then we  can  see that if $l\leq \lceil\frac{d-1}{2}\rceil$, then $gap(A)=l-1$, otherwise $gap(A)=d-1-l$.

	With the above discussions, we shall
	determine $	{N}(n,d,t)$ by the following recursive relation.
	Let $B=(A,b^T)$ denote as the augmented matrix and $rank(A)$ denote as the rank of a matrix $A$.
	In particular, for prime $d$, $rank(A)$ and $rank(B)$ can be obtained directly.
	\begin{lemma}\label{lemma se}
		%
		Let ${N}(n,d,t)$ with $t>0$ be defined in \eqref{condiN}.
		
		\noindent{\rm (i)} For composite integers $d$, let
		$\max\limits_{e|d, 1<e<d}\{e\}=\tilde{e}_d$.
		When $t < \tilde{e}_d$, we have		
		$$\begin{array}{cll}	
			|{N}(n,d,t)|=	
			\begin{cases}
				0,  &\text{ if }  d \,|\,n,  \\
				2^{d-t-1}-	\sum\limits_{e|d, 1<e<d} |{N}(n,e,t)|, &   \text{ if }  d \nmid n.     \\
			\end{cases}		
		\end{array}	$$
		When $t \geq \tilde{e}_d$, we get	
		$$\begin{array}{cll}
			|{N}(n,d,t)|=	
			\begin{cases}
				0,  &\text{ if }  d \,|\,n \text{ or if } d \nmid n \text{ and } rank(A)\neq rank(B),  \\
				2^{d-rank(A)}, &   \text{ if }  d \nmid n \text{ and } rank(A)= rank(B).     \\
			\end{cases}		
		\end{array}	$$	
		
		\noindent{\rm (ii)}	
		For prime $d$,
		it follows
		$$\begin{array}{cll}
			|{N}(n,d,t)|=
			\begin{cases}
				0, &   \text{ if }  d \,|\,n  \text{ or if } d \nmid n \text{ and }   t\geq d,\\
				2^{d-t-1},  &\text{ if } d \nmid n \text{ and } 1\leq t\leq  d-2,  \\
				2, &   \text{ if } d \nmid n \text{ and }   t=d-1.     \\
			\end{cases}		
		\end{array}	$$			
	\end{lemma}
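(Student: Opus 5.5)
The plan is to treat the defining conditions of $N(n,d,t)$ in \eqref{condiN} as the linear system $A\mathbf{x}^T=\mathbf{b}^T$ over $\mathbb{Z}_2$. I will first count all solutions in $\mathbb{Z}_2^d$, and then remove the periodic ones.

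\textbf{Setting up the graph.} Put $l=n \bmod d$. The $i$-th equation ($1\le i\le t+1$) reads $x_{(d-i+l)\bmod d}+x_{d-i}=b_i$, where $b_1=b_{t+1}=1$ and $b_i=0$ otherwise. I view equation $i$ as an edge $\{j,j+l\}$ in the circulant graph on $\mathbb{Z}_d$ with step $l$, where $j=d-i$ runs over the $t+1$ consecutive vertices $d-1,\dots,d-t-1$.

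\textbf{Case $d\mid n$.} Then $l=0$ and the first equation becomes $0=1$. Hence $|N(n,d,t)|=0$, which settles the first branch in both (i) and (ii).

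\textbf{Case $d\nmid n$: the forest criterion.} Let $g=\gcd(l,d)$, a proper divisor of $d$. The graph is a union of $g$ cycles, each of length $d/g$, and the vertices of each cycle form one residue class mod $g$. A set of consecutive starting vertices carries a whole cycle only if it contains an entire residue class mod $g$. This requires $t+1\ge d-g+1$, i.e.\ $t\ge d-g$. Otherwise the $t+1$ edges form a forest. The equations are then independent and consistent, so there are exactly $2^{d-t-1}$ solutions.

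\textbf{Prime $d$.} Here $g=1$.
\begin{itemize}
\item For $t\le d-2$ we get $2^{d-t-1}$ solutions. The only periodic words are the constant ones, and these violate the first equation (its right side is $1$). So all solutions are aperiodic.
\item For $t=d-1$ the $d$ edges form exactly the full cycle. Summing them gives $0=b_1+b_d=0$, so the system is consistent with rank $d-1$. This yields $2$ solutions, both non-constant.
\item For $t\ge d$, the first $d$ equations already cover the cycle with right-hand sum $b_1=1$, while the left sides add to $0$. So the system is inconsistent.
\end{itemize}
This proves (ii).

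\textbf{Composite $d$, $t<\tilde e_d$.} Since $g\le\tilde e_d\le d/2$, we have $d-g\ge\tilde e_d>t$. So we are in the forest case, with $2^{d-t-1}$ solutions. A periodic solution has the form $(\mathbf{s}_e)^{d/e}$ with $\mathbf{s}_e$ aperiodic and $e\mid d$, $e<d$. Reducing indices mod $e$, such a word solves the system exactly when $\mathbf{s}_e\in N(n,e,t)$.
\begin{itemize}
\item For $e=1$ (constant words) the first equation fails.
\item For $e\mid n$, $N(n,e,t)$ is empty anyway, consistent with the formula.
\end{itemize}
Subtracting these gives the recursion $2^{d-t-1}-\sum_{e\mid d,\,1<e<d}|N(n,e,t)|$.

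\textbf{Composite $d$, $t\ge\tilde e_d$.} The number of solutions is $0$ or $2^{d-\mathrm{rank}(A)}$, according to whether $\mathrm{rank}(A)\ne\mathrm{rank}(B)$ or not. It remains to show that no solution is periodic. This is the step I expect to need the most care.

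Suppose $(\mathbf{s}_e)^{d/e}$ were a solution with $e\le\tilde e_d\le t$. Modulo $e$, equations $i$ and $i+e$ have identical left sides.
\begin{itemize}
\item If $t\ge e+1$, then $b_1=1$ and $b_{e+1}=0$ conflict.
\item If $t=e$, equations $1,\dots,e$ use every edge of the step-$(l\bmod e)$ circulant graph on $\mathbb{Z}_e$ exactly once. Each vertex then appears twice, so the left sides sum to $0$, while the right sides sum to $b_1=1$. The degenerate case $l\equiv0 \pmod e$ also gives $0=1$.
\end{itemize}
Either way no periodic solution exists, which completes (i).
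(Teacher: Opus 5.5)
Your proof is correct. Its skeleton matches the paper's. You kill $d\mid n$ with the first equation. For small $t$ you count all solutions of $A\mathbf{x}^T=\mathbf{b}^T$ and subtract the periodic ones, which are exactly the words $(\mathbf{s}_e)^{d/e}$ with $\mathbf{s}_e\in N(n,e,t)$. For $t\ge\tilde e_d$ you exclude periodic solutions by clashing equations $1$ and $e+1$, or, when $t=e$, by going once around the cycle modulo $e$; your parity sum is the same argument as the paper's chain $s_{(kn-1)\bmod e}=s_{((k-1)n-1)\bmod e}$.

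Where you genuinely differ is the counting and rank step. The paper gets $2^{d-t-1}$ by determining $s_{d-t-1},\dots,s_{d-1}$ one at a time from earlier coordinates. As written, that presupposes $n\bmod d\ge t+1$; otherwise some $s_{(l-i)\bmod d}$ lies inside the block being determined. For prime $d$, the paper obtains $rank(A)$ from a minimal dependency among the cyclically shifted rows, arguing that it forces $kl=d$.

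You instead read the equations as the edges $\{j,j+l\}$, $d-t-1\le j\le d-1$, of the circulant graph of step $l=n\bmod d$. You then prove the sharp criterion that these rows are independent iff $t<d-\gcd(n,d)$. This one criterion works for every residue $l$ and handles the prime case $t\le d-2$ and the composite case $t<\tilde e_d$ at once, via $\gcd(n,d)\le\tilde e_d\le d/2$. It also turns the prime cases $t=d-1$ and $t\ge d$ into a parity count around the Hamiltonian cycle.

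Your route buys uniformity and rigor, plus slightly more than the statement. The recursion $2^{d-t-1}-\sum_{e\mid d,\,1<e<d}|N(n,e,t)|$ in fact holds whenever $t<d-\gcd(n,d)$. For $t<d$, your picture also gives $rank(A)$ and consistency explicitly, which the composite case $t\ge\tilde e_d$ leaves implicit: each residue class modulo $\gcd(n,d)$ lying entirely among the starting vertices is one full cycle, contributing one dependency and one parity condition on $\mathbf{b}$. The paper's route, in exchange, stays within the matrix language in which the statement is phrased.
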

	\begin{proof}
		(i)	If $d \,|\,n $, then the condition in the second equation of \eqref{condiN} can be rewritten as
	$\overline{s}_{n-1 \,(\text{mod}\,\,d)} =\overline{s}_{d-1}= {s}_{d-1}$, which is a contradiction.
		Thus $|{N}(n,d,t)|=	0$.

	When $t < \tilde{e}_d$, let $l = n \,(\text{mod}\,\,d)$, then $ 0\leq l<d $.
	 If $ d \nmid n $, then by \eqref{condiN} we have
		\begin{equation}\label{s_d_ccc}
		\s_{[0:d]}=	(s_0,s_1,\dots,s_{d-1}) =(s_0,\dots,s_{d-t-2})(\overline{s}_{l-t-1},{s}_{l-t},\dots,{s}_{l-2},\overline{s}_{l-1}),
		\end{equation}
	where 	$\overline{s}_{l-t-1}$ can be determined by $(s_0,\dots,s_{d-t-2})$,
	${s}_{l-t}$ can also be  determined by $(s_0,\dots,$\\
	$ s_{d-t-2},\overline{s}_{l-t-1})$.
	 This process continues, which enables us to derive that  $\s_d$ is entirely determined by $(s_0,\dots,s_{d-t-2})$.
Thus, 	among $ {2}^{d-t-1}$ sequences $(s_0,\dots,s_{d-t-2}) $, it suffices to subtract those  sequences $(s_0,\dots,s_{d-t-2})$ which do not satisfy $\s_d$ in \eqref{s_d_ccc} is aperiodic. That is,	
		$$
		\begin{array}{cll}
		&|{N}(n,d,t)|\\
	= & 2^{d-t-1}- |\{(s_0,\dots,s_{d-t-2}):  \text{the sequence in  \eqref{s_d_ccc} is a periodic finite-length sequence} \}|.
		\end{array}
	$$
		In the following, we investigate those cases which lead to the sequence  in \eqref{s_d_ccc}  is a periodic finite-length sequence, that is to say, $\s_{[0:d]}=(\s_e)^{d/e}$ where $\s_e$ is aperiodic with $e|d, 1\leq e<d$.
		When $e=1$, we have $\s_{[0:d]}=(0^d)$ or $(1^d)$, a contradiction that $\overline{s}_{(n-1) \,\text{mod}\,\,d} = {s}_{d-1}$.
		When $e>1$,
		it follows from $\s_{[0:d]}=(\s_e)^{d/e}$ that
		$$s_{(d-j)\,\text{mod}\,\,d}=s_{(d-j)\,\text{mod}\,\,e}=s_{(e-j)\,\text{mod}\,\,e}
		\text{ \, and \, } s_{(n-j)\,\text{mod}\,\,d}=s_{(n-j) \,\text{mod}\,\,e}$$
		for $1 \leq j \leq t+1$.
		Since the condition $\overline{s}_{(n-1) \,\text{mod}\,\,d} = {s}_{d-1}, {s}_{(n-i) \,\text{mod}\,\,d} = {s}_{(d-i)\,\text{mod}\,\,d}$ with $2\leq i \leq t$, and ${s}_{(n-t-1) \,\text{mod}\,\,d}= \overline{s}_{(d-t-1) \,\text{mod}\,\,d}$,
		by $\s_{[0:d]}=(\s_e)^{d/e}$ we can see that $\s_e \in A(e)$ satisfies
		$$\overline{s}_{(n-1)\,\text{mod}\,\,e} = s_{e-1},  {s}_{(n-i) \,\text{mod}\,\,e} = {s}_{(e-i)\,\text{mod}\,\,e} \text{ \, and \, }  {s}_{(n-t-1) \,\text{mod}\,\,e} = \overline{s}_{(e-t-1)\,\text{mod}\,\,e},$$
		where $A(e)$ denotes the set of all aperiodic finite-length sequences  $\s_e$
		of length $e$.
		Hence,
		\begin{equation*}
		\begin{array}{cll}
				\vspace{2mm}
			&|\{(s_0,\dots,s_{d-t-2}):  \text{the sequence in \eqref{s_d_ccc} is periodic} \}|\\
			=&\sum\limits_{e|d, 1<e<d}|\{ \s_e \in A(e): \overline{s}_{(n-1)\,\text{mod}\,\,e} = s_{e-1},
			\vspace{2mm}
			{s}_{(n-i) \,\text{mod}\,\,e} = {s}_{(e-i)\,\text{mod}\,\,e}, \,\, 2 \leq i \leq t, 	\\
				\vspace{2mm}
			& \text{ and }{s}_{(n-t-1) \,\text{mod}\,\,e}= \overline{s}_{(e-t-1) \,\text{mod}\,\,e} \}|	\\
			=& \sum\limits_{e|d, 1<e<d} |N(n,e,t)|.
		\end{array}	
	\end{equation*}	
	Therefore,
	$	|N(n,d,t)|=2^{d-t-1}-	\sum\limits_{e|d, 1<e<d} |N(n,e,t)| .$

		When $t > \tilde{e}_d$,	suppose that $\s_{[0:d]}$ is a periodic finite-length sequence, then $\s_{[0:d]}=(\s_e)^{d/e}$ where $\s_e$ is aperiodic with $e|d, 1\leq e<d$.
		Since $add(\sn)=t$, we have $s_{(d-j)\,\text{mod}\,\,d}=s_{(d-j)\,\text{mod}\,\,e}=s_{(e-j)\,\text{mod}\,\,e}$
		and $s_{(n-j)\,\text{mod}\,\,d}=s_{(n-j) \,\text{mod}\,\,e}$ for $1 \leq j \leq t+1$.
		It follows from \eqref{condiN}
		that $\overline{s}_{(n-1) \,\text{mod}\,\,e}=\overline{s}_{(n-1) \,\text{mod}\,\,d} = {s}_{d-1}= {s}_{e-1}$ and
		$${s}_{(n-1) \,\text{mod}\,\,e}={s}_{(n-e-1) \,\text{mod}\,\,e} ={s}_{(n-e-1) \,\text{mod}\,\,d} = {s}_{d-e-1}= {s}_{e-e-1}={s}_{e-1},$$
		a contradiction.
		Thus,  $\s_d$ is aperiodic.
		
		When $t = \tilde{e}_d$,
		according to the relationship between the factors of
		$d$ and $\tilde{e}_d$, we divide the discussion into two cases.
		For $e_1|d$ with $e_1<\tilde{e}_d$,
		we can get a similar contradiction as the case of $t > \tilde{e}_d$, implying that $\s_d \neq (\s_{e_1})^{d/e_1}$.
		For $e_1=\tilde{e}_d$,	since $\tilde{e}_d|d$, suppose $\s_d = (\s_{\tilde{e}_d})^{d/\tilde{e}_d}$,
		then the conditions are reduced to
		$	\overline{s}_{(n-1) \,\text{mod}\,\,\tilde{e}_d} = {s}_{\tilde{e}_d-1}$ and ${s}_{(n-i) \,\text{mod}\,\,\tilde{e}_d}=
		{s}_{(\tilde{e}_d-i)\,\text{mod}\,\,\tilde{e}_d}, 2\leq i\leq \tilde{e}_d$.
	Thus,
	$$
	\begin{array}{cll}
	& {s}_{(2n-1) \,\text{mod}\,\,\tilde{e}_d} = {s}_{(n-(\tilde{e}_d-n+1)) \,\text{mod}\,\,\tilde{e}_d}  = {s}_{(\tilde{e}_d-(\tilde{e}_d-n+1)) \,\text{mod}\,\,\tilde{e}_d}      = {s}_{(n-1) \,\text{mod}\,\,\tilde{e}_d},\\
	&  {s}_{(3n-1) \,\text{mod}\,\,\tilde{e}_d} = {s}_{(2n-(\tilde{e}_d-n+1)) \,\text{mod}\,\,\tilde{e}_d}= {s}_{(2n-1) \,\text{mod}\,\,\tilde{e}_d}, \\
	&  \dots \quad     \dots  , \\
	&  {s}_{(\tilde{e}_dn-1) \,\text{mod}\,\,\tilde{e}_d} = {s}_{((\tilde{e}_d-1)n-1) \,\text{mod}\,\,\tilde{e}_d}
	\end{array}
$$
		It implies that $${s}_{\tilde{e}_d-1}= {s}_{(\tilde{e}_dn-1) \,\text{mod}\,\,\tilde{e}_d}={s}_{((\tilde{e}_d-1)n-1) \,\text{mod}\,\,\tilde{e}_d}=\dots={s}_{(2n-1) \,\text{mod}\,\,\tilde{e}_d}={s}_{(n-1) \,\text{mod}\,\,\tilde{e}_d},$$
		which contradicts the condition $\overline{s}_{(n-1) \,\text{mod}\,\,\tilde{e}_d} = {s}_{\tilde{e}_d-1}$.
		Therefore, the assumption does not hold.
		That is, $\s_{[0:d]}$ is always aperiodic and can be denoted by $\s_d$.
		
		Therefore, when $t \geq \tilde{e}_d$ and $d \nmid n$, we have
		$$
				\begin{array}{cll}
		&|{N}(n,d,t)| \\
		=&| \{ \s_d :	\overline{s}_{(n-1) \,\text{mod}\,\,d} = {s}_{d-1}, {s}_{(n-i) \,\text{mod}\,\,d} = {s}_{(d-i)\,\text{mod}\,\,d}, {s}_{(n-t-1) \,\text{mod}\,\,d} = \overline{s}_{(d-t-1)\,\text{mod}\,\,d} \}|,
				\end{array}	
		$$
		which is equal to the number of solutions to the equation in $A\cdot \mathbf{x}^T=\mathbf{b}^T$ with $A,\mathbf{x}^T,\mathbf{b}^T$ defined above.
		Therefore, for $B=(A,\mathbf{b}^T)$, if $rank(A)= rank(B)$, then  $|{N}(n,d,t)|=2^{d-rank(A)}$; if
		$rank(A)\neq rank(B)$, then  $|{N}(n,d,t)|=0$.

		(ii)
		If 	$d$ is a prime number, when $d \,|\,n$, similar to (i), it can be shown that
		$|N(n,d,t)|=0.$
			When $d \nmid n$,
		since $A=(a_{i,j})_{(t+1)\times d}$ and $B=(A,b^T)$,  we know that
			$rank(A)\leq rank(B)\leq 1+t$.	
			Suppose $rank(A)< 1+t$.
			Then there exist $i_1,i_2,\dots,i_k$ with a minimal positive integer $2 \leq k\leq  1+t$
			such that
			\begin{equation}\label{kkkkk}
				L^{i_1}(\s)+L^{i_2}(\s)+\dots+L^{i_{k-1}}(\s)=L^{i_k}(\s),
			\end{equation}
			where 	$\s$ is the 	$d$-length vector corresponding to the first row of the matrix
			$A$.
			Thus,
			$L^{i_1}(\s)+L^{i_2}(\s)+\dots+L^{i_{k-1}}(\s)+L^{i_k}(\s)=0$.
		According to the definition of matrix $A$,
				without loss of generality,
			let
			$$L^{i_1}(\s)=(\overbrace{ 0\dots0}^{l-1}1 \,0\dots \dots 0\, \overbrace{ 0\dots0}^{l-1}1) =( 0^{l-1}1 \,0\dots \dots 0\, 0^{l-1}1 ),$$
			 where there are two 1's and others are 0's in $L^{i_1}(\s)$.
			 Hence, $gap(A)=l-1$ with $l\geq 1$.
			We denote $L^{i_1}(\s)$ by $(\delta \,0\dots\dots 0\, \delta)$ where $\delta=(0^{l-1}1)$ and others are 0's.
			Recall that $L^{i_j}(\s)$, $2 \leq j \leq k$, are circularly shifted sequences of $L^{i_1}(\s)$.
		Then since 	$L^{i_1}(\s)+L^{i_2}(\s)+\dots+L^{i_{k-1}}(\s)+L^{i_k}(\s)=0$, it follows that	
	$$\left(
	\begin{array}{c}
		L^{i_1}(\s)\\
		L^{i_2}(\s)    \\
		L^{i_{3}}(\s)\\
		\dots\\
		L^{i_k}(\s)\\
	\end{array}
	\right)
	=\left(
	\begin{array}{cll}
		&	\delta0 \cdots\cdots\cdots 0\delta \\
		&	\delta\delta0 \cdots\cdots\cdots 0   \\
		&	\cdot 	0\delta\delta0 \cdots\cdots \cdot  \cdot 0 \\
		&	\cdots	\cdots\cdots \cdots \\
		&	0\cdots\cdots\cdots	0\delta\delta\\
	\end{array}
	\right),
	$$
	where the order of $i_j, 2 \leq j \leq k$ to be chosen  appropriately. Combining this with the fact that the length of the vector 	$\s$ is 	$d$, we obtain 	$kl=d.$

			When $t  = 1,2,\dots,d-2,$
			since $kl=d$ and $k\leq 1+t \leq d-1$,
			we have $l>1$,  which leads to a contradiction with
			$d$ being prime.
			Hence, $rank(A)=1+t$,
			thus $1+t=rank(A)\leq rank(B)\leq 1+t $,
			implying
			$rank(A)=rank(B)=1+t$.
			Therefore,
			$|{N}(n,d,t)|=2^{d-rank(A)}=2^{d-t-1}$.

			When $t  = d-1,$ from $kl=d$ with $2 \leq k\leq 1+t=d$, it follows $l=1$ and $k=d$. Then
			$\s+L(\s)+L^{2}(\s)+\dots+L^{d-1}(\s)=0$ by \eqref{kkkkk}.	
			It can be seen that $rank(A)=d-1=t$.
			Since the vector $\mathbf{b}=(b_0,b_1,\dots,b_{t})=(1,\overbrace{0, \dots 0}^{t-1},1)$
			satisfies $b_0 \oplus b_1 \oplus \dots \oplus b_{t} = 0$, where  $\oplus$ represents the XOR operation, we have $rank(B)=t$.
			Thus $rank(A)=rank(B)=t$,
			implying that
			$|{N}(n,d,t)|=2^{d-rank(A)}=2$.

	When $t\geq d$,
		if $t=d$, then similarly as the proof of $t=d-1$, we obtain that $rank(A)=d-1=t-1$ and $rank(B)=t$, implying that $rank(A)\neq rank(B)$.
	Thus 	${N}(n,d,t)=0$.
	If $t> d$, then it follows from $add(\sn)=t$ that
		$$\overline{s}_{(n-1) \,\text{mod}\,\,d} = {s}_{d-1}, \,\, {s}_{(n-i) \,\text{mod}\,\,d} = {s}_{(d-i)\,\text{mod}\,\,d}, \,\,2 \leq i  \leq t,
		\,\,{s}_{(n-t-1) \,\text{mod}\,\,d} = \overline{s}_{(d-t-1)\,\text{mod}\,\,d},$$
		 we see that $\overline{s}_{(n-1) \,\text{mod}\,\,d} = {s}_{d-1}$ and
		${s}_{(n-(d+1)) \,\text{mod}\,\,d} = {s}_{(d-(d+1))\,\text{mod}\,\,d}, $
	 a contradiction. Thus $|{N}(n,d,t)|=0$.
	\end{proof}

\begin{remark}	
	From Lemma \ref{lemma se}, for prime $d$, the value of $	N(n,d,t)$ is determined by
	$rank(A)$, and $rank(A)$ is completely determined by $gap(A)$.
	When $n  \,(\text{mod}\,\,d) =l>\lceil\frac{d-1}{2}\rceil$, it follows $gap(A)=(d- l)-1 $.
	Then
	$|{N}(n,d,t)|=|{N}(n',d,t)|$ with $n' \,(\text{mod}\,\,d) =l' =d- l \leq \lceil\frac{d-1}{2}\rceil$.
	Thus, we only need to calculate $	N(n,d,t)$ with $n  \,(\text{mod}\,\,d) =l  \leq \lceil\frac{d-1}{2}\rceil$.
\end{remark}

	According to the above analysis, the value of $|{\mathcal{P}}(n,\omega)|$ can be obtained immediately below.

		\begin{theorem}\label{enum}
		For positive integers $n \geq 3$ and $\omega\geq  \NN$,
	let	${\mathcal{P}}(n,\omega)$
		be the set of binary shift inequivalent sequences with period $n$ and nonlinear complexity $\omega$.
	Then	we have
		$$\begin{array}{cll}
			|{\mathcal{P}}(n,\omega)|		
			=&\sum\limits_{d =1}^{n- \omega-1}\sum\limits_{e | d} \mu(e) 2^{ n-\omega-1-d +d/e}  +\sum\limits_{d=n- \omega}^{\n}	|N(n,d,\omega+d-n)|,\\
		\end{array}
		$$
		where $\mu(e)$ is the M${\ddot{o}}$bius function in \eqref{MoMo}, $|N(n,n- \omega,0)|$ and $|N(n,d,\omega+d-n)|$ with $d>n- \omega$ can be determined by Lemmas \ref{lemma fi} and \ref{lemma se}, respectively.
	\end{theorem}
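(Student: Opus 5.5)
The count follows from a bijection plus a split by spacing. Proposition~\ref{prop_new} and the proof of Theorem~\ref{seq_structure} give, for $\omega\geq\NN$, a bijection between $\mathcal{P}(n,\omega)$ and the set $\mathcal{S}_\omega$ in \eqref{cru_set}. Lemma~\ref{thm_core} shows that every $n$-periodic sequence of complexity $\omega$ is a shift of $\sni$ for some $\sn\in\mathcal{S}_\omega$. Lemma~\ref{pop2} says every element of $\mathcal{S}_\omega$ is a representative, and Lemma~\ref{pop1} says distinct elements are shift inequivalent. Hence $|\mathcal{P}(n,\omega)|=|\mathcal{S}_\omega|$. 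Using \eqref{eq_RE}, I write $\mathcal{S}_\omega$ as the disjoint union of $\mathcal{S}_A$ and $\mathcal{S}_B$; the pieces are indexed by the spacing $d$ and are disjoint because $d$ is determined by the sequence. So it suffices to count each piece for fixed $d$.

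\textbf{Case $1\leq d\leq n-\omega-1$ (part of $\mathcal{S}_A$).} By Theorem~\ref{seq_structure}(i), a sequence in $\mathcal{B}(n,\omega,d)$ with $add(\sn)=0$ is fixed by three choices:
\begin{itemize}
\item an aperiodic $\s_d\in A(d)$, which forces the first $\omega+d$ terms;
\item the last term, which is forced to be $s_{n-1}=\overline{s}_{d-1}$;
\item the free middle block $(s_{\omega+d},\dots,s_{n-2})$, with $2^{n-\omega-d-1}$ choices.
\end{itemize}
Every such choice gives an aperiodic sequence with the right spacing, $nlc$ and added terms. By the M\"obius formula $|A(d)|=\sum_{e|d}\mu(e)2^{d/e}$, each $d$ contributes $\sum_{e|d}\mu(e)2^{n-\omega-1-d+d/e}$. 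Summing over $d$ gives the first double sum.

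\textbf{Case $n-\omega\leq d\leq\n$ ($\mathcal{S}_B$).} Here $\sn\in\mathcal{B}(n,n-d,d)$ is entirely determined by $\s_d$. The only constraint is $add(\sn)=t=\omega+d-n$, which by Definition~\ref{def-t} is exactly the defining condition of $N(n,d,t)$. For $t=0$ this condition is the single requirement $s_{(n-1)\bmod d}=s_{d-1}$; for $t>0$ it is \eqref{condiN}. So this block contributes $|N(n,d,\omega+d-n)|$. The term $d=n-\omega$ is $|N(n,n-\omega,0)|$, handled by Lemma~\ref{lemma fi}. The terms $d>n-\omega$ are handled by Lemma~\ref{lemma se}. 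Adding all contributions gives the displayed formula.

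The main obstacle is checking that the correspondence $\s_d\mapsto\sn$ is a bijection onto the relevant subset of $\mathcal{B}$ with the prescribed $add$, with no overcounting. Two points need care. First, aperiodicity of $\sn$ must be automatic; this is noted after Definition~\ref{BB}, since $c\geq\n$. Second, the constraint on $add$ must be exactly the one stated, including the boundary case $t=0$, where the condition $s_{n-1}\neq s_{d-1}$ translates to $s_{(n-1)\bmod d}=s_{d-1}$ because $s_{n-1}=\overline{s}_{(n-1)\bmod d}$. Disjointness of the pieces for different $d$ is clear because the spacing is intrinsic to the sequence: by Lemma~\ref{lem unique} the unique pair of identical subsequences with different successors fixes $c$, and hence $d$.
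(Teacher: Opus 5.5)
Your proposal is correct and follows essentially the same route as the paper: reduce $|\mathcal{P}(n,\omega)|$ to counting $\mathcal{S}_A$ and $\mathcal{S}_B$ via Lemma~\ref{thm_core}, Lemmas~\ref{pop2} and~\ref{pop1} and \eqref{eq_RE}, then use the structure in Theorem~\ref{seq_structure} to get $2^{n-\omega-1-d}|A(d)|$ sequences for each $d<n-\omega$ and $|N(n,d,\omega+d-n)|$ for each $d\geq n-\omega$. One wording point only: $\mathcal{S}_A$ and $\mathcal{S}_B$ are not literally subsets of $\mathcal{S}_\omega$ but the images of its spacing-$d$ pieces under $R^{\omega-\lceil n/2\rceil}$ and $R^{\lfloor n/2\rfloor-d}$ respectively (Lemma~\ref{c-c+1}(ii)); these shifts are bijections for fixed $d$, so the cardinalities agree as you need.
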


	\begin{example}
	Take an example for $n=16$ and $\omega=12\geq \NN$,
$$\begin{array}{cll}
	& |{\mathcal{P}}(n,\omega)|\\
	=&\sum\limits_{d =1}^{n- \omega-1}\sum\limits_{e | d} \mu(e) 2^{ n-\omega-1-d +d/e} +\sum\limits_{d=n- \omega}^{\n}	|N(n,d,\omega+d-n)|\\
	=&\sum\limits_{d =1}^{3}\sum\limits_{e | d} \mu(e) 2^{ 3-d +d/e} + |N(16,4,0)|+ |N(16,5,1)|\\
&	+ |N(16,6,2)|+ |N(16,7,3)|+|N(16,8,4)|.\\
\end{array}
$$
When $d=4$, it follows from Lemma \ref{lemma fi} and $4\,|\,16$ that $|N(16,4,0)|=|A(4)|=12$.
When $d\in \{5,6,7,8\}$, according to Lemma \ref{lemma se}, for the prime number $d$ and $d\nmid n$, we have $|N(16,5,1)|=2^{5-1-1}=8$ and $|N(16,7,3)|=2^{7-3-1}=8$; for the composite number $d$, we see
$$|N(16,6,2)|=2^{6-2-1}-|N(16,2,2)|-|N(16,3,2)|=8-0-2=6, $$
and $|N(16,8,4)|=0$ by $8\,|\,16$.
Thus it yields $|{\mathcal{P}}(n,\omega)|	=18+12+8+6+8+0=52$.
The result is consistent with the exhaustive
search presented in \cite[Table~3.2]{JansenPhD}.
\end{example}

\begin{remark}
From Theorem \ref{enum} of this paper, we know that by using the formulas in
Lemmas \ref{lemma fi} and \ref{lemma se}, the exact value of
$|{\mathcal{P}}(n,\omega)|	$, $\omega\geq \NN$, can be calculated directly.
Sun et al. \cite{SZLH} and Xiao et al. \cite{Xiao2018} presented $|{\mathcal{P}}(n,n-1)|	$ and $|{\mathcal{P}}(n,n-2)|$,
which also can be calculated by our method.
 Furthermore,  the exact probability that a random $n$-periodic binary
sequence achieves the nonlinear complexity $\omega\geq  \NN$ can be determined by Theorem \ref{enum} as
 $$\frac{n|{\mathcal{P}}(n,\omega)|	}{|A(n)|}=\frac{n|{\mathcal{P}}(n,\omega)|}{\sum\limits_{e | n} \mu(e) 2^{n/e}}$$
Erdmann and Murphy
in \cite{Erdmann} derived an approximate probability distribution of the
nonlinear complexity, the accuracy of the approximate probability can be evaluated by
comparison with our result.

It is to be noted that the method proposed in this paper does not apply to the case $\nup \leq \omega< \NN$. The main reason is that some finite-length sequences used to generate periodic sequences may not be representative, making it intractable to identify the representative sequences. Furthermore, some representative sequences are shift-equivalent and thus not unique. These factors together hinder an accurate characterization of the structure and enumeration of the corresponding periodic sequences.
\end{remark}

\section{Conclusion}\label{Sec6}	
In this paper,
 we proved that
	  the finite-length sequences that generate shift inequivalent $n$-periodic sequences with nonlinear complexity larger than or equal to $ \NN $ are the unique  representative sequences.
	  This enabled us to determine both the structure  and the exact enumeration of such periodic sequences.
In the future work,	more randomness properties of those periodic sequences,
	such as Golomb's three randomness postulates (balance, run, autocorrelation properties), $k$-error nonlinear complexity,
	correlation measure of order $k$, will be  investigated.

\appendix
\section*{Appendix A:  Proof of Lemma \ref{b range} }\label{appendix:lemmab}	
\begin{proof}
	Let $a=n-c-d_1$, then $h=a+b$.
	Thus $\textbf{v}_n=R^{h}(\textbf{s}_{n})=R^{a+b}(\textbf{s}_{n})$.
  From the first paragraph proof of Lemma 8 in \cite{yuan}, 	it follows $0< b\leq c+d_{2}-1$ when $n$ is even.
  When $c= \n$ with odd $n$, we claim that $0\leq b <c+d_2$.

  Let $n=2k+1$, then $c=k$.
  It follows from $\textbf{s}_{2k+1}\in \mathcal{B}(2k+1, k, d_1)$ that	
  $(s_0,s_1,\dots,s_{k-2})$ and $(s_{d_1},s_{d_1+1},\dots,s_{d_1+k-2})$ is a pair of  identical subsequences of length $k -1$ with different successors.	
  Let $\textbf{v}_{n}=R^{h}(\textbf{s}_{n})$, then
  suppose that $h\leq n-c-d_1-1=k-d_1$.
  Due to $\textbf{v}_{2k+1}\in \mathcal{B}(2k+1,k, d_2)$ with $d_2\leq k$,
  we have $\textbf{v}_{2k}\in \mathcal{B}(2k,k, d_2)$.
  Thus  $\textbf{v}_{2k}\in \mathcal{B}(2k,k, d_2)$ has
  a pair of identical sequences
  $(v_0,v_1,\dots,v_{k-2})$ and $(v_{d_2},v_{d_2+1},\dots,v_{d_2+k-2})$
  of length $k -1$ with different successors
  in the beginning, and has another pair
  $$
  \left\{\begin{array}{cll}
  	(s_0,s_1,\dots,s_{k-2})   & =(v_h,v_{h+1},\dots,v_{h+k-2}), \\
  	(s_{d_1},s_{d_1+1},\dots,s_{d_1+k-2}) & =(v_{h+d_1},v_{h+d_1+1},\dots,v_{h+d_1+k-2}).
  \end{array}\right.
  $$
  with different successors behind,
  where  $h+d_1+k-2\leq (k-d_1)+d_1+k-2=2k-2$.
  It contradicts the fact in Lemma \ref{lem unique}. Thus we have $h\geq n-c-d_1$.
  For $\textbf{v}_{n}=R^{h}(\textbf{s}_{n})\in \mathcal{B}(n,c,d_2)$ and $R^{h'}(\textbf{v}_{n})\in \mathcal{B}(n,c,d_1)$, we see $h'=n-h \geq n-c-d_2$, i.e., $h \leq c+d_2$.
  Hence we obtain $n-c-d_1 \leq h=(n-c-d_1)+b \leq c+d_2$.
  From $n-c-d_1\geq 1$, it
  implies $0\leq b< c+d_{2}$. Hence, the claimed statement follows.

  Thus,  we have	 $0< b\leq c+d_{2}-1$ when $n$ is even, and $0\leq b\leq c+d_{2}-1$  when $n$ is odd.
	Then it suffices to investigate $\textbf{v}_n=R^{a+b}(\textbf{s}_{n})$ with $0\leq b\leq c+d_{2}-1$,
	where when $b=0$, $c=\n$ with odd $n$.
	Note that $	\textbf{v}_{n}=R^{a+b}(\textbf{s}_{n})$  has the following form,
	\begin{equation}\label{shizi-1}
		\begin{array}{cll}
			\textbf{v}_{n}
			&=(\,\overbrace{({v_{0}, \ \dots, \ v_{c+d_{2}-1}\,})}^{c+d_{2}}
			\ ({v_{c+d_{2}},  \ \dots, \ v_{n-t_{2}-1}})\, \overbrace{({v_{n-t_{2}}, \ \dots\,, \   v_{n-1}}\,)}^{t_{2}}\,)\\
			=R^{a+b}(\textbf{s}_{n})&=(\,\underbrace{ (v_{0}, \ \dots, \ v_{b-2}, \ v_{b-1})}_{b}\, (v_{b}, \ \dots, \ v_{b+a-t_{1}-1}\,)
			\,\underbrace{(v_{b+a-t_{1}},\ \dots, \ v_{n-1}\,)}_{t_{1}+c+d_{1}-b}\,).
		\end{array}
	\end{equation}
	Since
	$\textbf{v}_{n}\in \mathcal{B}(n,c,d_{2})$  with $add(\textbf{v}_{n})=t_{2}$, one can get
	\begin{equation}\label{i_2}
		v_{i}={v}_{i+d_{2}}    \,\, \mbox{for}  \,\, i \in [-t_{2}, c-2]
		\,\, \mbox{and} \,\, v_{c-1}=\overline{v}_{c+d_{2}-1}.
	\end{equation}
	Moreover,
	since $\textbf{s}_{n}\in\mathcal{B}(n,c,d_{1})$ with $add(\textbf{s}_{n})=t_{1}$,
	one has $$ s_{i}={s}_{i+d_{1}}    \,\, \mbox{for}  \,\, i \in [-t_{1}, c-2]
	\,\, \mbox{and} \,\,
	s_{c-1}=\overline{s}_{c+d_{1}-1}.$$
	Due to ${v}_{i+a+b}={s}_{i}$, we have
	\begin{equation}\label{i_1}
		v_{i}={v}_{i+d_{1}}    \,\, \mbox{for}  \,\, i \in [b+a-t_{1}, b+n-d_{1}-2]
		\,\, \mbox{and} \,\,
		v_{b+n-d_{1}-1}=\overline{v}_{b+n-1},
	\end{equation}
	where the subscripts in above equations are taken modulo $n$.	
	Thus a subsequence consisting of arbitrary consecutive $d_1$ terms in $(v_{b+a-t_{1}},\dots,v_{n-1},v_{n},\dots,v_{b+n-2},\overline{v}_{b+n-1})$ is a shifted version of $\textbf{s}_{d_{1}}$.
	According to the form of $\textbf{v}_n=R^{a+b}(\textbf{s}_{n})$ in \eqref{shizi-1}, we shall utilize (\ref{i_2}) and (\ref{i_1}) to show that  the subsequence $\textbf{s}_{d_{1}}$ of $\textbf{s}_{n}$ has two representations with different Hamming weights, which is a clear contradiction.
	In what follows, we divide the discussion into two cases for different values of $b$ with $ 0\leq b\leq c+d_{2}-1$.

	\begin{figure}[!h]
		\centering
		\begin{tikzpicture}	
			\node (rect) at (0,0) {
				$\begin{array}{cll}
					\textbf{v}_{n}
					& = (\,\overbrace{\, v_{0},\, v_{1}, \, \ldots, \,   v_i, \,  v_{i+1},  \,  \ldots,\,  v_{c+d_{2}-2},\, v_{c+d_{2}-1} \,}^{c+d_{2}},
					\ldots,     v_{n - t_{2}},\, \ldots,  v_{n-1}) \\	
					= R^{a+b}(\textbf{s}_{n})
					& = (\,\underbrace{\, v_{0}, \, \ldots,\, v_{b-1}, \, v_{b},\ \ldots, \, \, v_{b+a - t_{1}}, \,  \ldots,\ v_{b+a - t_{1} + d_1 + d_2 - 1} \,}_{b + (a - t_{1}) + d_{1} + d_{2}}, \ldots, \ v_{n-1} ).
				\end{array}$
			};
			
			\draw[<->, thick] (1.2,-1.3) -- (3.2,-1.3) node[above] at (2,-2) {$d_1$-length};				
			\draw[<-, very thick] (0,-1.5) -- (0.8,-1.5) node[above] at (0.3,-2.2) {$-d_{2}$};
			
			\draw[dashed] (3.2,-1) -- (3.2,0);
			\draw[dashed] (1.2,-1) -- (1.2,0);
		\end{tikzpicture}
		\caption{The visualized description of Case (1) }\label{fig.0}
	\end{figure}

	{\bf Case (1)}:
	If $0\leq b< t_{1}$,
	then $b+a-t_{1}\leq c-d_{1}$.
	Consider the $d_{1}$-tuple subsequence $\textbf{v}_{[c+d_{2}-d_{1}:c+d_{2}]}$, from
	(\ref{i_2})
	we have
	$$\textbf{v}_{[c+d_{2}-d_{1}:c+d_{2}]}=(v_{c+d_{2}-d_{1}},\dots,v_{c+d_{2}-2},v_{c+d_{2}-1})=(v_{c-d_{1}},\dots,v_{c-2},\overline{v}_{c-1}).$$
	Recall that arbitrary consecutive $d_1$ terms in $(v_{b+a-t_{1}},\dots,v_{n-1},v_{n},\dots,v_{b+n-2},\overline{v}_{b+n-1})$ compose of a shifted version of $\textbf{s}_{d_{1}}$.
	Since $b+a-t_{1}\leq c-d_{1}$ and $c+d_2 \leq n+b-1$, it follows $\textbf{v}_{[c-d_{1}:c]}$ and $\textbf{v}_{[c+d_{2}-d_{1}:c+d_{2}]}$ are shifted versions of $\textbf{s}_{d_{1}}$.
	Combining these, their Hamming weights satisfy
	\begin{equation*}\label{wt-1}
		wt(\textbf{s}_{d_{1}})=wt(\textbf{v}_{[c-d_{1}:c]})=wt(\textbf{v}_{[c+d_{2}-d_{1}:c+d_{2}]})=wt((v_{c-d_{1}},\dots,v_{c-2},\overline{v}_{c-1})),
	\end{equation*}
	a contradiction.

	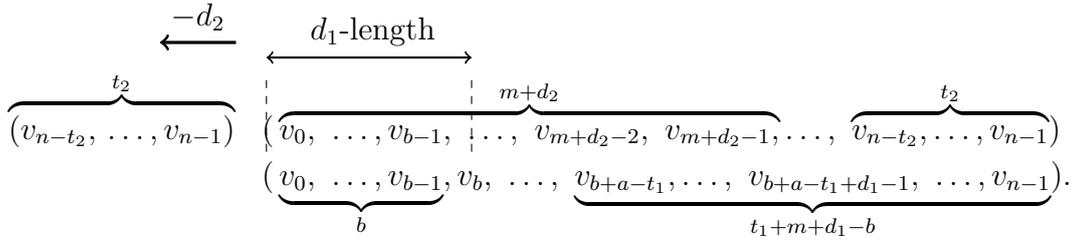
\begin{figure}[!h]
		\centering
		\begin{tikzpicture}	
			\node (rect) at (0,0)
			{$\begin{array}{cll}	
					\overbrace{(v_{n-t_{2}}, \, \dots, v_{n-1})}^{t_{2}}&(\,\overbrace{{v_{0}, \ \dots, v_{b-1}, \ \dots, \ v_{m+d_{2}-2}}, \ v_{m+d_{2}-1},}^{m+d_{2}}
					\dots, \ \overbrace{v_{n-t_{2}},  \dots,  v_{n-1}}^{t_{2}})\\	
					&(\,\underbrace{v_{0}, \   \dots,  v_{b-1} }_{b},v_{b}, \ \dots, \ \underbrace{v_{b+a-t_{1}}, \dots,\ v_{b+a-t_{1}+d_1-1}, \
						\dots,   v_{n-1}}_{t_{1}+m+d_{1}-b}).
				\end{array} $};
			\draw[<->,thick](-3.6,1.3)--(-0.9,1.3) node[above] at (-2.2,1.3)  {$d_1$-length};
			\draw[<-, very thick] (-5,1.5)--(-4,1.5) node[above] at (-4.5,1.5) {$-d_{2}$};
			\draw[dashed] (-3.6,1)--(-3.6,0);
			\draw[dashed] (-0.9,1)--(-0.9,0);			
		\end{tikzpicture}
		\caption{The visualized description of Case (2) }\label{fig.1}
	\end{figure}

	{\bf Case (2)}:
	If $d_{1}+d_{2} \leq b+t_{2}$ and $b< c+d_{2}$, then $d_{1}+d_{2}-t_{2}\leq b <c+d_{2}$. Consider the $d_{1}$-tuple subsequence $\textbf{v}_{[b-d_{1}-d_{2}:b-d_{2}]}$, since $-t_2 \leq b-d_{1}- d_2$ and $ b -d_{2}<c$, from (\ref{i_2}) we have
	$$\textbf{v}_{[b-d_{1}-d_{2}:b-d_{2}]}=(v_{b-d_{1}-d_{2}},\dots,v_{b-d_{2}-2},v_{b-d_{2}-1})=(v_{b-d_{1}},\dots,v_{b-2},{v}_{b-1})=\textbf{v}_{[b-d_{1}:b]}.$$
	Recall that arbitrary consecutive $d_1$ terms in $(v_{b+a-t_{1}-n},\dots,v_{-1},v_{0},\dots,v_{b-2},\overline{v}_{b-1})$ compose of a shifted version of $\textbf{s}_{d_{1}}$.
	Due to $b-d_{1}-d_{2} \geq b+a-t_{1}-n$ and $b-d_{2} < b$,
	we obtain that
	$\textbf{v}_{[b-d_{1}-d_{2}:b-d_{2}]}$
	and $(v_{b-d_{1}},\dots,v_{b-2},\overline{v}_{b-1})$
	are shifted versions of $\textbf{s}_{d_{1}}$.
	Combine the above equality, we have
	\begin{equation*}
		wt(\textbf{s}_{d_{1}})=wt(v_{b-d_{1}},\dots,v_{b-2},\overline{v}_{b-1})=wt(\textbf{v}_{[b-d_{1}-d_{2}:b-d_{2}]})=wt(\textbf{v}_{[b-d_{1}:b]}),
	\end{equation*}
	which is a contradiction.

	Therefore, it follows that only $ t_{1} \leq b< d_{1}+d_{2}-t_{2}$ can be possible.	
\end{proof}

\section*{Appendix B:  Proof of Lemma \ref{d_2=b}}
\begin{proof}
	From Lemma \ref{b range}, for	$R^b(\sn)\in \mathcal{B}(n,c,d_2)$, it suffices to consider $ t_{1} \leq b< d_{1}+d_{2}-t_{2}$.  	
	Let $R^b(\sn)=\textbf{v}_n$. Then $v_{i+b}=s_i$, where the subscripts are taken modulo $n$.
	Suppose $b \neq d_2$, then we will derive contradictions that two shifted versions of a same subsequence $\textbf{v}_{d_2}$ (or $\textbf{s}_{d_1}$) have different Hamming weights.
	Since
	$\textbf{v}_{n}\in \mathcal{B}(n,c,d_{2})$  with $add(\textbf{v}_{n})=t_{2}$, one can get
	\begin{equation}\label{fd_2}
		v_{-t_{2}-1}=\overline{v}_{d_{2}-t_{2}-1},	\,\,  	v_{i}={v}_{i+d_{2}}    \,\, \mbox{for}  \,\, i \in [-t_{2}, c-2]
		\,\, \mbox{and} \,\, v_{c-1}=\overline{v}_{c+d_{2}-1}.
	\end{equation}
	Moreover,
	since $\textbf{s}_{n}\in\mathcal{B}(n,c,d_{1})$ with $add(\textbf{s}_{n})=t_{1}$,
	one has
	$$ s_{-t_{1}-1}=\overline{s}_{d_{1}-t_{1}-1},	\,\,  s_{i}={s}_{i+d_{1}}    \,\, \mbox{for}  \,\, i \in [-t_{1}, c-2]
	\,\, \mbox{and} \,\,
	s_{c-1}=\overline{s}_{c+d_{1}-1}.$$
	From $s_i=v_{i+b}$, we have
	\begin{equation}\label{d_1}
		v_{b-t_{1}-1}=\overline{v}_{b+d_{1}-t_{1}-1},	\,\,  	v_{i}={v}_{i+d_{1}}    \,\, \mbox{for}  \,\, i \in [b-t_{1}, b+c-2]
		\,\, \mbox{and} \,\,
		v_{b+c-1}=\overline{v}_{b+c+d_1-1}.
	\end{equation}

	From  \eqref{fd_2},
	a subsequence consisting of arbitrary consecutive $d_{2}$ terms in  $(\overline{v}_{-t_{2}-1},{v}_{-t_2},$\\
	$v_{-t_2+1},\dots,{v}_{c+d_{2}-2},\overline{v}_{c+d_{2}-1})$
	is a shifted version of $\textbf{v}_{d_{2}}=(v_0,v_{1},\dots,{v}_{d_{2}-1})$.
	Similarly, by \eqref{d_1},
	a subsequence consisting of	arbitrary consecutive $d_{1}$ terms in  $(\overline{v}_{b-t_{1}-1},{v}_{b-t_1},v_{b-t_1+1},
	\dots,$\\
	${v}_{b+c+d_{1}-2},\overline{v}_{b+c+d_{1}-1})$
	is a shifted version of $\textbf{s}_{d_{1}}=(s_0,s_{1},\dots,{s}_{d_{1}-1})$.
	In the following, we shall investigate different representations of shifted sequences of $\textbf{v}_{d_{2}}$ and $\textbf{s}_{d_{1}}$.
	Note that $d_2 \leq d_1 \leq c$ and $t_2 \geq t_1 \geq 0$. According to the
	value of $b$, we divide the discussion into four cases.

	{\bf Case (1)}: $d_2+1\leq b\leq c+t_1$.
	It is easily checked that $(v_c,v_{c+1},\dots,\overline{v}_{c+d_{2}-1})$ is a shifted version of $\textbf{v}_{d_{2}}$.
	Since  $d_2+1\leq b\leq c+t_1$, the above subscripts satisfy $b-t_1\leq c, c+1, \dots, c+d_2-1 \leq b+c-2$.
	From \eqref{d_1}, it follows from $c+d_1=n$ that
	$$ (v_c,v_{c+1},\dots,\overline{v}_{c+d_{2}-1})= (v_{c+d_1},v_{c+d_1+1},\dots,\overline{v}_{c+d_1+d_{2}-1}) =(v_{0},v_{1},\dots,\overline{v}_{d_{2}-1}).$$
	Thus $ (v_{0},v_{1},\dots,\overline{v}_{d_{2}-1})$ is a shifted sequence of $\textbf{v}_{d_{2}}= (v_{0},v_{1},\dots,{v}_{d_{2}-1})$,
	a contradiction.

	{\bf Case (2)}:  $c+t_1+1\leq b< d_1 + d_2 -t_2$.
	Since $d_1+d_2 < c+t_2+t_1$, the case is impossible.

	{\bf Case (3)}:
	$d_2-t_2\leq b<d_2$.
	Since $-t_2\leq c+b-d_2,c+b-d_2+1,\dots, c+b-1\leq c+d_2-2$, it is clear that
	$({v}_{c+b-d_2},v_{c+b-d_2+1},\dots,{v}_{c+b-1})$ is a shifted sequence of $\textbf{v}_{d_{2}}$ from \eqref{fd_2}.
	Based on $d_2 \leq c$ and \eqref{d_1}, the subscripts satisfy
	$b-t_1\leq c+b-d_2, c+b-d_2+1, \dots, c+b-2 \leq c+b-2$,
	and ${v}_{c+b-1}=\overline{v}_{c+b-1+d_1}$.
	Then we have
	\begin{equation*}
		\begin{array}{cll}
			& ({v}_{c+b-d_2},v_{c+b-d_2+1},\dots,{v}_{c+b-2},{v}_{c+b-1})\\
			=&( {v}_{c+b+d_1-d_2},v_{c+b+d_1-d_2+1},\dots,{v}_{c+b+d_1-2},\overline{v}_{c+b+d_1-1})\\
			=& ( v_{b-d_2},v_{b-d_2+1},\dots,{v}_{b-2},\overline{v}_{b-1}),\\
		\end{array}
	\end{equation*}
	which implies $(v_{b-d_2},v_{b-d_2+1},\dots,{v}_{b-2},\overline{v}_{b-1})$ is a shifted sequence of $\textbf{v}_{d_{2}}$.
	As a matter of fact, based on $-t_2\leq b-d_2, b-d_2+1, \dots, b-1 \leq c+d_2-2$ and \eqref{fd_2},
	we have
	$\textbf{v}_{[b-d_2:d]}=(v_{b-d_2},v_{b-d_2+1},\dots,{v}_{b-2},{v}_{b-1})$ is a shifted sequence of $\textbf{v}_{d_{2}}$, a contradiction.

	{\bf Case (4)}:
	$t_1 \leq b\leq d_2-t_2-1$.
	Consider the sequence $ (v_{c+d_2-d_1},v_{c+d_2-d_1+1}, \dots, v_{c+d_2-1} )$.
	According to \eqref{d_1}, since $ b-t_1 \leq c+d_2-d_1,c+d_2-d_1+1, \dots, c+d_2-1 \leq b+n-2$,
	we can see that $ (v_{c+d_2-d_1},v_{c+d_2-d_1+1}, \dots, v_{c+d_2-1}) $ is a shifted sequence of $\textbf{s}_{d_{1}}$.
	From \eqref{fd_2}, it follows that
	$$ (v_{c+d_2-d_1},v_{c+d_2-d_1+1}, \dots, v_{c+d_2-1}) = ( v_{c-d_1},v_{c-d_1+1}, \dots, v_{c-2}, \overline{v}_{c-1}).$$
	Thus, $( v_{c-d_1},v_{c-d_1+1}, \dots, v_{c-2}, \overline{v}_{c-1})$ is a shifted sequence of $\textbf{s}_{d_{1}}$.
	Moreover, due to $d_1+d_2 < c+t_2+t_1$, we have $ b-t_1 \leq c-d_1$.
	Thus from \eqref{d_1}, it is clear that $( v_{c-d_1},v_{c-d_1+1}, \dots, v_{c-2}, \overline{v}_{c-1})$
	is a shifted sequence of $\textbf{s}_{d_{1}}$, 	a contradiction.

	Finally,  	combining the four cases above and applying the condition $ t_{1} \leq b< d_{1}+d_{2}-t_{2}$, we conclude that
	$b=d_2$.
\end{proof}

\section*{Appendix C:  Proof of the tight bounds in Theorem \ref{thm_lowerbound}} \label{Appendix-lemm}	
\noindent\textbf{Proof of the tight bounds in the cases of (iii) $n=8k+4$ and (iv) $n=8k$:}

			{\bf	(iii)}	When $n=8k+4$,
let $c \geq c_0=6k+1$ and $m=\n=4k+2$.
Suppose that there is  a pair of shift equivalent sequences $(\textbf{s}'_n,\textbf{v}'_n)$ satisfying
$0 \leq add(\textbf{s}'_n)<add(\textbf{v}'_n)$, then  we have
$\sn=L^{c-m}(\textbf{s}'_n)\in \mathcal{B}(n, m,d_1)$ and $\textbf{v}_n=L^{c-m}(\textbf{v}'_n)\in \mathcal{B}(n, m,d_2)$ from Lemma \ref{c-c+1}.
Moreover, by \eqref{threecon}, some parameters satisfy
\begin{equation}\label{range_para}
	\begin{cases}
		& d_1 +d_2 < 2(n-c) \leq 4k+6, \ \ \	k \geq 2 \\
		& add(\sn) =t_1\geq c-m\geq 2k-1,  \ \ \ add(\textbf{v}_n)= t_2 \geq t_1+1\geq 2k.
	\end{cases}
\end{equation}
Let $\textbf{v}_n=R^{a+b}(\sn)$, $a=n-m-d_1$.
Thus from Lemma \ref{b range}, it suffices to investigate the case of $t_{1}<b<d_{1}+d_{2}-t_{2}$, that is $2k \leq b \leq 2k+4$.

Due to
$b<d_{1}+d_{2}-t_{2}$, we have $n-t_2-1 > b+a-t_1$.
And it follows from $d_1 \leq t_1+t_2$ that $b<d_{1}+d_{2}-t_{2}\leq d_2+t_1$, implying $b+(a-t_1)+d_1<m+d_2$.
Hence $\textbf{v}_{n}=R^{a+b}(\textbf{s}_{n})$ has the following form:
\begin{equation}\label{case_3_v_n}
	\begin{array}{cll}
		\textbf{v}_{n}
		&=(\,\overbrace{{v_{0}, \, v_{1}, \  \dots,   v_{i}, \,   v_{i+1},   \dots, \, v_{m+d_{2}-2}}, \, v_{m+d_{2}-1},}^{m+d_{2}}
		\dots, \overbrace{{v_{n-t_{2}},\dots,v_{n-1}}}^{t_{2}}\,)\\
		=R^{a+b}(\textbf{s}_{n})&=(\,\underbrace{v_{0}, \ \dots, \ v_{b-1} }_{b},  \ v_{b}, \ \dots,\, \   \underbrace{v_{b+a-t_{1}}, \ \dots,\ v_{b+a-t_{1}+d_1-1}
			,\ \dots,\ v_{n-1}}_{t_{1}+m+d_{1}-b}\,).
	\end{array}
\end{equation}
Recall that we have
\begin{equation}\label{i_2-fulv}
	v_{-t_2-1}=\overline{v}_{d_{2}-t_2-1},    \,\,	v_{i}={v}_{i+d_{2}}    \,\, \mbox{for}  \,\, i \in [-t_{2}, m-2]
	\,\, \mbox{and} \,\, v_{m-1}=\overline{v}_{m+d_{2}-1}.
\end{equation}
and
\begin{equation}\label{i_1-fulv}
	v_{b+a-t_{1}-1}=\overline{v}_{d_{1}+b+a-t_{1}-1},    \,\,	v_{i}={v}_{i+d_{1}}    \,\, \mbox{for}  \,\, i \in [b+a-t_{1}, b+n-d_{1}-2]
	\,\, \mbox{and} \,\,
	v_{b+n-d_{1}-1}=\overline{v}_{b+n-1}.
\end{equation}
According to \eqref{case_3_v_n} and \eqref{i_1-fulv},
each sequence in the sequences groups $\mathcal{A}$ and $\mathcal{B}$ is a shift equivalent sequences of $\textbf{s}_{d_{1}}$.
\begin{equation}\label{system_A}
	\qquad	\qquad	\qquad	\mathcal{A}=\left\{\begin{array}{cll}
		&(\overline{v}_{b+a-t_1-1},v_{b+a-t_1},\dots,{v}_{b+a-t_1+d_{1}-2}),\\
		&(v_{b+a-t_1},\dots,{v}_{b+a-t_1+d_{1}-1}),\\
		&\quad  \quad \dots \quad  \quad  \dots\\
		&(v_{m+d_{2}-d_{1}},\dots,{v}_{m+d_{2}-1}), \\	
	\end{array}\right.  \ \
\end{equation}
\begin{equation}\label{system_B}
	\mathcal{B}=\left\{\begin{array}{cll}
		&(v_{-t_{2}-1},\dots,{v}_{-t_{2}+d_{1}-2}), \\
		&(v_{-t_{2}},\dots,{v}_{-t_{2}+d_{1}-1}), \\
		&\quad  \quad \dots \quad  \quad  \dots\\
		&(v_{b-d_{1}-1},\dots,{v}_{b-2}), \\
		&(v_{b-d_1},\dots,{v}_{b-2},\overline{v}_{b-1}),\\
	\end{array}\right.  \ \
\end{equation}	
Note that there will always exist a positive integer $i$ such that $m-t_{1}\leq d_{2}i< m+d_{2}-t_{1}$.
Below,		according to
the relationship of sequences between the sequences groups $\mathcal{A}$ and $\mathcal{B}$, we
will give  two different representations of $\textbf{s}_{d_{1}}$, and then prove that
two representations have different Hamming weights, which leads to a contradiction.
We divide the discussion into three cases according to the relation
of $d_{2}i$ and $m+d_{2}-b$.

{\bf Case (1)}: If $m-t_{1}\leq d_{2}i< m+d_{2}-b$,
then consider
$(v_{b+d_{2}i-d_1},\dots,v_{b+d_{2}i-1})$, where $b+a-t_1\leq b+d_{2}i-d_1 < m+d_2-d_1$.
Thus from the sequences group $\mathcal{A}$ in \eqref{system_A}, $(v_{b+d_{2}i-d_1},\dots,v_{b+d_{2}i-1})$ is a shifted version of $\textbf{s}_{d_{1}}$.
It follows from $d_1 \leq t_1+t_2$ and $b>t_1$ that $b-d_1 \geq -t_2$.
Moreover $-t_2<b+a-t_1\leq b+d_{2}i-d_1 < m+d_2-d_1$.
Hence the relation in \eqref{i_2-fulv} implies that
$$(v_{b+d_{2}i-d_1},\dots,v_{b+d_{2}i-1})=(v_{b-d_1},\dots,v_{b-1})$$
is also a shift equivalent sequence of
$\textbf{s}_{d_{1}}$,
which
contradicts that
$(v_{b-d_1},\dots,{v}_{b-2},\overline{v}_{b-1})$ is a shifted version of $\textbf{s}_{d_{1}}$,
from the sequences group $\mathcal{B}$ in \eqref{system_B}.

{\bf Case (2)}: If $m+d_{2}-b< d_{2}i< m+d_{2}-t_1$,
then consider the subsequence $(v_{m+d_2-d_{2}i-d_1},\dots,v_{m+d_2-d_{2}i-1})$.
From $d_1 \leq t_1+t_2$, we have
$$-t_2 \leq t_1-d_1  < m+d_2-d_{2}i-d_1 < b-d_1 < m+d_2-d_1.$$
Thus from the sequences group $\mathcal{B}$ in \eqref{system_B},  $(v_{m+d_2-d_{2}i-d_1},\dots,v_{m+d_2-d_{2}i-1})$ is a shift equivalent sequence of $\textbf{s}_{d_{1}}$.
Recall that  $-t_2  < m+d_2-d_{2}i-d_1 < m+d_2-d_1$,
then it follows from \eqref{i_2-fulv} that
$$(v_{m+d_2-d_{2}i-d_1},\dots,v_{m+d_2-d_{2}i-1})=(v_{m+d_2-d_1},\dots,\overline{v}_{m+d_2-1}),$$
is also a shifted version of $\textbf{s}_{d_{1}}$,
which
contradicts that $(v_{m+d_2-d_1},\dots,v_{m+d_2-1})$ is a shifted version of $\textbf{s}_{d_{1}}$
by the sequences group $\mathcal{A}$ in \eqref{system_A}.

{\bf Case (3)}: If $d_{2}i= m+d_{2}-b$, then let $(-t_{2}-1)+(m+d_{2}-b)=g$.
It follows from $t_{1}<b<d_{1}+d_{2}-t_{2}$ that $g$ and $b+a-t_{1}-1$ belong to the same interval $(m-d_1-1, m+d_2-t_1-t_2-1)$. Since we can not determine which one is bigger, there are three possible cases needed to consider. 

\begin{figure}[!h]
	\centering
	\begin{tikzpicture}	
		\node (rect) at (0,0)
		{
			$				\begin{array}{cll}
				
				\overbrace{(v_{n-t_{2}-1},\dots,v_{n-1})}^{t_{2}+1}&(\,\overbrace{{v_{0},  v_{1}, \dots, v_i,   \dots, v_{m+d_{2}-2}}, v_{m+d_{2}-1},}^{m+d_{2}}			\dots, \overbrace{v_{n-t_{2}-1},  \dots, v_{n-1}}^{t_{2}+1})\\	
				&(\,\underbrace{v_{0},\dots, v_{b-1} }_{b},v_{b},\dots,\underbrace{  v_{b+a-t_{1}},  \dots,v_{b+a-t_{1}+d_1-1},
					\dots,  v_{n-1}}_{t_{1}+m+d_{1}-b}).
			\end{array} $
		};
		\draw[->, very thick] (-3.5,1.5)--(-1.2,1.5) node[above] at (-2.5,1.5) {$+d_{2}i$};
		\draw[<->,thick](-7,1.3)--(-3.5,1.3) node[above] at (-5,1.3) {$d_1$-length};
		\draw[dashed] (-7,1)--(-7,0);
		\draw[dashed] (-3.5,1)--(-3.5,0);			
	\end{tikzpicture}
	\caption{The visualized description of Subcase (3.1) }\label{fig.2}
\end{figure}
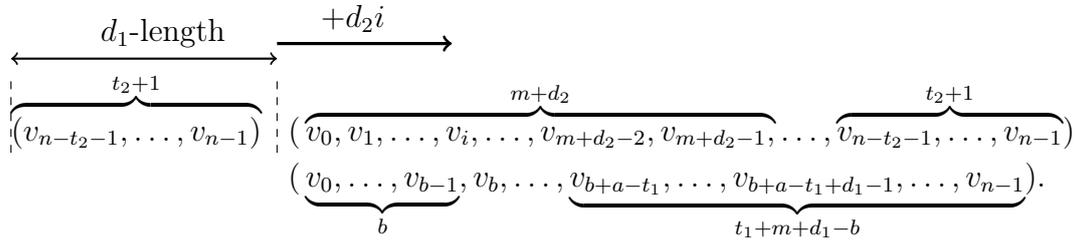
{\bf Subcase (3.1)}:	$b+a-t_{1}-1  < g.$ Since $b>t_1>d_1-t_2-1$ by $d_1 \leq t_1+t_2$, one can get that $g<m+d_2-d_1$.
Thus $b+a-t_{1}-1  < g<m+d_2-d_1$, implying that
$({v}_{g},\dots,{v}_{g+d_{1}-1})$ is a shifted version of $\s_{d_1}$ according to $\mathcal{A}$ in \eqref{system_A}.
For $(v_{-t_{2}-1},\dots,{v}_{-t_{2}+d_{1}-2})$ as a shift equivalent sequence of $\textbf{s}_{d_{1}}$, since $g=(-t_{2}-1)+d_{2}i <m+d_2-d_1$, by \eqref{i_2-fulv} we have
$$(v_{-t_{2}-1},\dots,{v}_{-t_{2}+d_{1}-2})=(\overline{v}_{g},\dots,{v}_{g+d_{1}-1}),$$
which contradicts that $({v}_{g},\dots,{v}_{g+d_{1}-1})$ is a shifted version of $\s_{d_1}$.

\begin{figure}[!h]
	\centering
	\begin{tikzpicture}	
		\node (rect) at (0,0)
		{
			$				\begin{array}{cll}
				\textbf{v}_{n}&=(\overbrace{{v_{0},  v_{1}, \dots,   v_i,   v_{i+1},  \dots,  v_{m+d_{2}-2}},  v_{m+d_{2}-1},}^{m+d_{2}}
				\dots, \overbrace{v_{n-t_{2}}, \dots, v_{n-1}}^{t_{2}})\\	
				=R^{a+b}(\textbf{s}_{n})&=	(\underbrace{v_{0}, \dots,  v_{b-1} }_{b}, v_{b},  \dots,  \underbrace{v_{b+a-t_{1}},  \dots, v_{b+a-t_{1}+d_1-1}, 				\dots, v_{n-1}}_{t_{1}+m+d_{1}-b}\,).
			\end{array} $
		};
		\draw[->, very thick] (3,-1.5) --(5,-1.5)node[above] at (4,-2.2) {$+d_{1}i$}node[above] at (4,-2.7) {subcase (3.3)};
		
		\draw[<-, very thick] (-2,-1.5)--(0,-1.5) node[above] at (-1,-2.2) {$-d_{2}i$}node[above] at (-0.8,-2.7) {subcase (3.2)};
		\draw[<->,thick](0,-1.3)--(3,-1.3) node[above] at (1.5,-2) {$d_1$-length};
		\draw[dashed] (0,-1)--(0,0);
		\draw[dashed] (3,-1)--(3,0);	
		
	\end{tikzpicture}
	\caption{The visualized description of Subcase (3.2) and Subcase (3.3) }\label{fig.3}
\end{figure}

{\bf Subcase (3.2)}: $b+a-t_{1}-1 > g.$ Then $2b+t_2> d_1+d_2+t_1$.
Consider the subsequence $(\overline{v}_{b+a-t_1-1},v_{b+a-t_1},\dots,{v}_{b+a-t_1+d_{1}-2})$ which is a shifted version of $\textbf{s}_{d_{1}}$. Let $h=b+a-t_1-1-d_{2}i$.
Due to	$b+a-t_1-1-(m+d_2-b)=2b-d_1-d_2-t_1-1$ and $	2b+t_2> d_1+d_2+t_1$,
one can get $h \geq-t_2$.
Recall that $b+a-t_1-1<m+d_2-d_1$,
based on \eqref{i_2-fulv} we have
\begin{equation}\label{sub_3-2}
	(\overline{v}_{b+a-t_1-1},v_{b+a-t_1},\dots,{v}_{b+a-t_1+d_{1}-2})=(\overline{v}_{h},\dots,{v}_{h+d_1-1}).
\end{equation}
Moreover, it is clear that $h<b-d_1$ by $b<d_1+d_2-t_2 \leq d_2+t_1$. Hence $-t_2\leq h<b-d_1$ implies that  $({v}_{h},\dots,{v}_{h+d_1-1})$ is a shift equivalent sequence of $\textbf{s}_{d_{1}}$ in $\mathcal{B}$ defined in \eqref{system_B}, a contradiction to \eqref{sub_3-2}.

{\bf Subcase (3.3)}: $b+a-t_{1}-1 = g$, which means $2b=d_1+d_2+t_1-t_2$. Thus we know
\begin{equation}\label{parameters_equal}
	({v}_{b+a-t_1},v_{b+a-t_1+1},\dots,{v}_{b+a-t_1+d_{1}-1})=(v_{-t_{2}},\dots,{v}_{-t_{2}+d_{1}-1}).	
\end{equation}
The equation in \eqref{parameters_equal} holds if and only if $d_1\,|\, (m+b-d_2)$ since the two subsequences in \eqref{parameters_equal} are aperiodic, where $m+b-d_2=(n-t_2)-(b+a-t_1)$.
Now the parameters satisfy
\begin{equation}\label{constraints}
	\left\{\begin{array}{cll}
		&t_{1}<b<d_{1}+d_{2}-t_{2}, \text{ i.e. }2k \leq b \leq 2k+4, \\
		&d_{2}\,|\, m+d_{2}-b, \\
		&2b=d_1+d_2+t_1-t_2, \\
		&d_1\,|\, m+b-d_2. \\
	\end{array}\right.  \ \
\end{equation}
In the following, we shall derive contradictions by \eqref{constraints} and the structure of sequences.
Since $m+b \geq (4k+2)+2k >4k+5 \geq d_1+d_2$ by \eqref{range_para}, we have $m+b-d_2>d_1$.
Moreover $d_1\,|\, m+b-d_2$, it follows $2d_1 \leq m+b-d_2$.
Thus by $b<d_{1}+d_{2}-t_{2}$,  we have $2d_1 \leq m+b-d_2 <m+d_{1}-t_{2}$, implying that $d_1 \leq m-t_2-1$.
Similarly, since $d_{2}\,|\, m+d_{2}-b$ and $m-b \geq (4k+2)-(2k+4)=2k-2>0$ by $k \geq 2$, we know that
$d_2 \leq m-b \leq m-t_1-1$ from \eqref{constraints}.

Hence $2b=d_1+d_2+t_1-t_2 \leq  (m-t_2-1)+(m-t_1-1)+t_1-t_2=2(m-t_2-1)$, that is, $t_1+1 \leq b \leq m-t_2-1 \leq m-t_1-2$, implying that $t_1 \leq \frac{m-3}{2}=2k-\frac{1}{2}$.
By $t_1\geq 2k-1$ in \eqref{range_para}, we can see that $t_1=2k-1$, which follows $2k \leq b \leq 2k+1$.
All in all, we have obtained that $t_1=2k-1$, $t_2\geq 2k$, $2k \leq b \leq 2k+1$, $d_1 \leq m-t_2-1\leq 2k+1$ and $d_2 \leq m-t_1-1\leq 2k+2$.
Finally, according to the value of $b$, we partition into two cases.

$b=2k+1$.
From $2b=d_1+d_2+t_1-t_2$ and $t_1=2k-1$, it follows $d_1+d_2-t_2=2k+3$.
Since $d_2 \leq 2k+2$, we have $d_1-t_2\geq 1$. Moreover  $d_1-t_2 \leq  (2k+1)-2k=1$, thus $d_1-t_2=1$,
implying $d_2 = 2k+2$. It contradicts that $d_2 \,|\, m-b$ with $m-b=(4k+2)-(2k+1)=2k+1$. 	

$b=2k$. From $2b=d_1+d_2+t_1-t_2$ and $t_1=2k-1$, it follows $d_1+d_2-t_2=2k+1$.
Due to $d_1\leq  2k+1$ and $d_2 \leq 2k+2$, we have $d_2-t_2 \geq 0$ and $d_1-t_2 \geq -1$, which indicates
$2k \leq t_2 \leq d_2  \leq 2k+2$ and $2k-1 \leq t_2-1 \leq d_1  \leq 2k+1$.
Because of $d_2 \,|\, m-b$ with $m-b=2k+2$, it is clear that $d_2=2k+2$.
Thus $m+b-d_2=2k$.
It follows from $d_1\,|\, m+b-d_2$ that $d_1=2k$.
Consequently, we have $a+b=(n-m-d_1)+b=4k+2$ and		
\begin{equation*}
	\left\{\begin{array}{cll}
		\sn	& \in  \mathcal{B}(8k+4, 4k+2,2k) \text{ with } add(\sn)=t_1=2k-1 \\
		R^{4k+2}(\sn)=\textbf{v}_n	& \in  \mathcal{B}(8k+4, 4k+2,2k+2) \text{ with } add(\sn)=t_2=2k+1 \\
	\end{array}\right.  \ \
\end{equation*}	
According to the structure of $\sn$, we know that
$s_{-2k}=\overline{s}_0$,  	$s_{i}=s_{i \, \text{ mod } 2k}$, $ -(2k-1)=-t_1\leq i\leq m+d_1-2=6k$ and 	$s_{6k+1}=\overline{s}_1$.
Thus
$$\textbf{v}_{d_2}=(s_{4k+2},\dots,s_{6k-1},s_{6k},s_{6k+1},s_{6k+2},s_{6k+3})=(s_{2},\dots,s_{2k-1},s_{0},\overline{s}_{1},s_{6k+2},s_{6k+3}).$$
Furthermore, we have
\begin{align*}
	\textbf{v}_{d_2}=&(s_{6k+4},\dots,s_{8k+3},s_{0},s_{1}) \\
	=&(s_{-2k},s_{-2k+1},\dots,s_{-2},s_{-1},s_{0},s_{1})=(\overline{s}_0,s_{1},\dots,s_{2k-3},s_{2k-2},s_{2k-1},s_{0},s_{1}).
\end{align*}		
Combine the two representations of $\textbf{v}_{d_2}$, it follows that
$\overline{s}_0=s_2=s_4=\dots=s_{2k-2}=s_0$	and ${s}_1=s_3=s_5=\dots=s_{2k-1}=\overline{s}_1$, which are contradictions.

Therefore, when $n=8k+4$ and $c\geq 6k+1$, there does not exist a pair of shift equivalent sequences $(\textbf{s}'_n, \textbf{v}'_n)$  in $\mathcal{B}(n, c)$ satisfying $add(\textbf{s}'_n)<add(\textbf{v}'_n)$.

On the other hand,
suppose a sequence has the following form
$$\sn=(s_0, \dots, s_{8k+3}) = (( \underline{\alpha\beta \alpha^{2k-2}} \,  \alpha\beta\alpha^{2k-2} \, \alpha\beta\alpha^{2k-2}\, \alpha^{2}) \alpha \beta\alpha^{2k}),$$
where $\beta = \overline{\alpha}$ and $\beta^l$ is the sequence given by $l$ repetitions of $\beta$. It is clear that
$\sn$ belongs to $\mathcal{B}(8k+4,4k+2, 2k)$  with $add(\sn)=2k-2$, where $\mathbf{s}_d=\mathbf{s}_{2k}$ is underlined.
Consider the sequences
$$\begin{array}{l}
	\mathbf{u}_n=R^{2k-2}(\sn) = ((\underline{\alpha ^{2k-1}\beta}\, \alpha^{2k-1}\beta \,  \alpha^{2k-1}\beta \,  \alpha^{2k}) \alpha\beta\alpha^{2}), \\
	\mathbf{v}_n=R^{6k}(\sn) = (( \underline{\alpha^{2k-3}\beta\alpha^4} \, \alpha^{2k-3}\beta\alpha^4\, \alpha^{2k-3}\beta\alpha^4 \,  \alpha^{2k-5}\beta)\alpha^2).
\end{array}
$$
It is clear that $\mathbf{u}_n \in \mathcal{B}(8k+4, 6k, 2k)$ with $add(\mathbf{u}_n) = 0$ and that
$\mathbf{v}_n \in \mathcal{B}(8k+4, 6k, 2k+2)$ with $add(\mathbf{v}_n) = 2$.
That is to say, for the sequence $\mathbf{u}_n$ in $\mathcal{B}(8k+4, 6k)$, its cyclic shift sequence $\mathbf{v}_n=R^{4k+2}(\mathbf{u}_n)$ belongs to $\mathcal{B}(8k+4, 6k)$ with larger $add(\mathbf{v}_n)$.
It indicates that the lower bound $c_0=6k+1$ is a tight bound such that any sequence $\sn\in \mathcal{B}(n, c)$ with $c\geq c_0$ is a sequence representative.


{\bf (iv)}	When $n=8k$, it remains to deal with the case of $t_{1}<b<d_{1}+d_{2}-t_{2}$, that is $2k-1<b < (m-t_1)+(m-t_1-1)-(t_1+1)<2k+1$, implying $b=2k$.
The proof of this case of $n=8k$ is similar as the proof of  $n=8k+4$.
Specially, {\bf Subcase (3.3)} can be reduced as ``$4k=2b=d_1+d_2+t_1-t_2$ implies that $d_1=2k+1$, $d_2=2k$ and  $t_2=2k$, which contradicts that $d_1 \, | \, (m+b-d_2)$.''
Therefore, when $n=8k$ and $c \geq 6k-1$, there does not exist a pair of shift equivalent sequences $(\textbf{s}'_n, \textbf{v}'_n)$  in $\mathcal{B}(n, c)$ satisfying $add(\textbf{s}'_n)<add(\textbf{v}'_n)$.

On the other hand,
suppose a sequence has the following form
$$\sn=(s_0, \dots, s_{8k-1}) = ((\underline{\alpha(\alpha\beta)^{k-1} }\, \alpha(\alpha\beta)^{k-1} \,\alpha(\alpha\beta)^{k-1} \,\alpha\beta) \,
\alpha(\alpha\beta)^{k}),$$
where $\beta = \overline{\alpha}$ and $(\alpha\beta)^l$ is the sequence given by $l$ repetitions of $\alpha\beta$, that is $\alpha\beta\alpha\beta\alpha\beta \dots\alpha\beta$. It is clear that
$\sn$ belongs to $\mathcal{B}(8k,4k, 2k-1)$ with $add(\sn)=2k-2$, where $\mathbf{s}_d=\mathbf{s}_{2k-1}$ is underlined.
Consider the sequences
$$\begin{array}{l}
	\mathbf{u}_n=R^{2k-2}(\sn) = ((\underline{(\alpha\beta)^{k-1} \alpha}\, (\alpha\beta)^{k-1} \alpha \,(\alpha\beta)^{k-1} \alpha \,  (\alpha\beta)^{k-1} \alpha \,  \beta) \alpha\alpha\beta), \\
	\mathbf{v}_n=R^{6k-2}(\sn) = (( \underline{(\alpha\beta)^{k-2} \alpha (\alpha\beta)^{2}} \, (\alpha\beta)^{k-2} \alpha (\alpha\beta)^{2} \, (\alpha\beta)^{k-2} \alpha (\alpha\beta)^{2} \, (\alpha\beta)^{k-3} \alpha \alpha)\beta).
\end{array}
$$
It is clear that $\mathbf{u}_n \in \mathcal{B}(8k, 6k-2, 2k-1)$ with $add(\mathbf{u}_n) = 0$ and that
$\mathbf{v}_n \in \mathcal{B}(8k, 6k-2, 2k+1)$ with $add(\mathbf{v}_n) = 2$.
That is to say, for the sequence $\mathbf{u}_n$ in $\mathcal{B}(8k, 6k-2)$, its cyclic shift sequence $\mathbf{v}_n=R^{4k}(\mathbf{u}_n)$ belongs to $\mathcal{B}(8k, 6k-2)$ with larger $add(\mathbf{v}_n)$.
It indicates that the lower bound $c_0=6k-1$ is a tight bound such that any sequence $\sn\in \mathcal{B}(n, c)$ with $c\geq c_0$ is a sequence representative.	
\hfill $\square$


\begin{thebibliography}{10}
	\bibitem{Trivium}
	C.~Canni$\rm{\grave{e}}$re and B.~Preneel.
	\newblock Trivium.
	\newblock {\em in New Stream Cipher Designs: The eSTREAM Finalists (Lecture
		Notes in Computer Science),} 4986: 244--266.
	Springer Berlin Heidelberg, 2008.
	
	
	\bibitem{Castellanos2022}
	A.~S. Castellanos, L.~Quoos, and G.~Tizziotti.
	\newblock Construction of sequences with high nonlinear complexity from a
	generalization of the hermitian function field.
	\newblock {\em 	J. Algebra Appl.}, 23(2): 2450037, Feb. 2024.
	

	
	\bibitem{ChenZChen}	
	Z. Chen, Z. Chen, J. Obrovsky  and A. Winterhof.
	\newblock	Maximum-order complexity and 2-adic complexity.
	{\em {IEEE} Trans. Inf. Theory}, 70(8): 6060--6067, Aug. 2024.
	
	
	
	\bibitem{Chenzhix}
	Z.~Chen, A.~I. G{\'{o}}mez, D.~G{\'{o}}mez-P{\'{e}}rez, and A.~Tirkel.
	\newblock Correlation measure, linear complexity and maximum order complexity for
	families of binary sequences.
	{\em Finite Fields Appl.},   78(101977): 1--11, Feb. 2022.
	
	
	

\bibitem{Ding}
C. Ding.
\newblock
Linear complexity of some generalized cyclotomic sequences.
{\em Int. J. Algebra Comput.}, 8(4): 431--442,  Aug. 1998.



\bibitem{DingLegendre}
C. Ding, T. Hesseseth, and W. Shan.
\newblock
On the linear complexity of Legendre sequences.
{\em {IEEE} Trans. Inf. Theory}, 44(3): 1276--1278, May 1998.



	
	\bibitem{Erdmann}
	D. Erdmann and S. Murphy.
	\newblock An approximate distribution for the maximum order complexity.
	\newblock {\em Des. Codes Cryptogr.}, 10(3): 325--339, Mar. 1997.
	
	
	\bibitem{Golomb2017}
	S.~W. Golomb.
	\newblock {\em Shift Register Sequences: Secure and Limited-Access Code
		Generators, Efficiency Code Generators, Prescribed Property Generators,
		Mathematical Models, 3rd ed.}
	\newblock World Scientific Singapore, 2017.
	
	
	\bibitem{Hell2008}
	M. Hell, T. Johansson, A. Maximov, and W. Meier.
	\newblock { The Grain family of stream ciphers}.
	\newblock { \em in New Stream Cipher Designs: The eSTREAM Finalists
		(Lecture Notes in Computer Science),}
	4986: 179--190.
	\newblock Springer Berlin Heidelberg, 2008.
	
	
	\bibitem{Helleseth}
	T. Helleseth.
	\newblock { Nonlinear shift registers - A survey and challenges}.
	in  
	\newblock {\em 	Algebraic Curves and Finite Fields: Cryptography and Other Applications,} 121--144.
	De Gruyter  Berlin, 2014.
	
	
\bibitem{LWin}
L.~I{\c{s}}{\i}k and A.~Winterhof.
\newblock	 Maximum-order complexity and correlation
measures. {\em Cryptography}, 1(1): 1--7, May 2017.

	
	
	
	
	\bibitem{JansenPhD}
	J.~A. Jansen.
	\newblock {\em Investigations on nonlinear streamcipher systems: Construction
		and evaluation methods}.
	\newblock { Ph.D. dissertation}, Delft Univ. Technol., Delft, The Netherlands, 1989.
	
	\bibitem{Jansen}
	J.~A. Jansen and D. Boekee.
	\newblock The shortest feedback shift register that can generate a given
	sequence.
	\newblock in {\em Advances in Cryptology - {CRYPTO} 1989},
	90--99. Springer New York, 1989.
	
	

	


\bibitem{JinL}	
L. Jin, L. Ma, C. Xing and R. Zhu.
\newblock A new family of binary sequences with low correlation via elliptic curves.
{\em {IEEE} Trans. Inf. Theory}, DOI: 10.1109/TIT.2025.3573103.



\bibitem{PKe2018}
P. Ke, Y. Jiang, and Z. Chen.
\newblock On the linear complexities of two classes of quaternary sequences of even length with optimal autocorrelation. {\em Adv. Math. Commun.}, 12(3): 525--539, Jul. 2018.


\bibitem{PKe2013}
P. Ke, J. Zhang, and S. Zhang.
\newblock On the linear complexity and the autocorrelation of generalized cyclotomic binary sequences of length $2p^m$.
{\em Des. Codes Cryptogr.},  67(3): 325--339, Jun. 2013.	


\bibitem{Nian}
N. Li and X. Tang.
\newblock On the linear complexity of binary sequences of period $4N$ with optimal autocorrelation value/magnitude. {\em IEEE Trans. Inf. Theory}, 57(11): 7597--7604,  Nov. 2011.


	
	\bibitem{liang}
	S. Liang, X. Zeng, Z. Xiao, and Z. Sun.
	\newblock Binary sequences with length $n$ and nonlinear complexity not less than
	$n/2$.
	\newblock {\em IEEE Trans. Inf. Theory}, 69(12): 8116--8125,
	Dec. 2023.
	
	
	\bibitem{Lidl1997}
	R.~Lidl and H.~Niederreiter.
	\newblock {\em Finite fields.
		\newblock in Encyclopedia of Mathematics and its
		Applications, 2nd~ed.} Cambridge University Press U.K., 1996.
	
	\bibitem{LKK2}
	K. Limniotis, N. Kolokotronis, and N. Kalouptsidis.
	\newblock On the nonlinear complexity and {Lempel-Ziv} complexity of finite
	length sequences.
	\newblock {\em {IEEE} Trans. Inf. Theory}, 53(11): 4293--4302,
	Nov. 2007.
	
	
	\bibitem{LXY}
	Y. Luo, C. Xing, and L. You.
	\newblock Construction of sequences with high nonlinear complexity from
	function fields.
	\newblock {\em {IEEE} Trans. Inf. Theory}, 63(12): 7646--7650,
	Dec. 2017.
	
	\bibitem{Massey1969}
	J.~Massey.
	\newblock Shift-register synthesis and {BCH} decoding.
	\newblock {\em {IEEE} Trans. Inf. Theory}, 15(1): 122--127, Jan.
	1969.
	
	
	\bibitem{MasseyShirlei1996_Crypto}
	J. L. Massey and S. Serconek.
	\newblock Linear complexity of periodic sequences: A general theory.
	\newblock in 
	{\em Advances in Cryptology - CRYPTO 1996},
	358--371. Springer Berlin Heidelberg. 1996.
	
	
	\bibitem{Niederreiter1999}
	H. Niederreiter.
	\newblock Some computable complexity measures for binary sequences.
	\newblock in {\em Sequences and their Applications 1999}, 67--78. Springer
	London, 1999.
	
	
	\bibitem{Niederreiter2003}
	H. Niederreiter.
	\newblock Linear complexity and related complexity measures for sequences.
	\newblock in {\em Progress in Cryptology - {INDOCRYPT} 2003}, 1--17.
	Springer Berlin Heidelberg, 2003.
	
	\bibitem{HX}
	H. Niederreiter and C. Xing.
	\newblock Sequences with high nonlinear complexity.
	\newblock {\em {IEEE} Trans. Inf. Theory}, 60(10): 6696--6701,
	Oct. 2014.
	
	
	\bibitem{LWC}
	Computer Securty Resoource~Center (NIST).
	\newblock Lightweight cryptography.
	\newblock {\em https://csrc.nist.gov/projects/lightweight-cryptography.html},
	2020.
	
	
	
	
	
	\bibitem{PM2006} G. Petrides and J. Mykkeltveit.
	\newblock On the classification of periodic binary sequences into nonlinear complexity classes. in
		\newblock {\em Sequences and Their Applications - SETA 2006.}  209--222.
	\newblock Springer Berlin, 2006.
	
	
	\bibitem{Petrides2008}
	G. Petrides and J. Mykkeltveit.
	\newblock Composition of recursions and nonlinear complexity of periodic binary
	sequences.
	\newblock {\em Des. Codes Cryptogr.}, 49(1-3): 251--264, Mar. 2008.
	
	\bibitem{PR}
	P.~Rizomiliotis. Constructing periodic binary sequences with maximum
	nonlinear span.  {\em {IEEE} Trans. Inf. Theory}, 52(9): 4257--4261, Sep. 2006.
	
	
	\bibitem{Rizomiliotis2005}
	P.~Rizomiliotis and N.~Kalouptsidis.
	\newblock Results on the nonlinear span of binary sequences.
	\newblock {\em {IEEE} Trans. Inf. Theory}, 51(4): 1555--1563,
	Apr. 2005.
	
	
	
	\bibitem{Rizo}
	P.~Rizomiliotis, N.~Kolokotronis, and N.~Kalouptsidis.
	\newblock On the quadratic span of binary sequences.
	\newblock {\em {IEEE} Trans. Inf. Theory}, 51(5): 1840--1848,
	May 2005.
	
	
	
	
	\bibitem{Rueppel1986}
	R. A.~Rueppel. {\em Analysis and Design of Stream Ciphers}.
	Springer Berlin, 1986.
	
	
	\bibitem{SZLH}
	Z. Sun, X. Zeng, C. Li, and T. Helleseth.
	\newblock Investigations on periodic sequences with maximum nonlinear
	complexity.
	\newblock {\em {IEEE} Trans. Inf. Theory}, 63(10): 6188--6198,
	Oct. 2017.
	
	
	
	\bibitem{Sunex}
	Z. Sun, X. Zeng, C. Li, Y. Zhang, and L. Yi.
	\newblock The expansion complexity
	of ultimately periodic sequences over finite fields.
	{\em {IEEE} Trans. Inf. Theory}, 67(11): 7550--7560, Nov. 2021.
	
	
	
	
	\bibitem{Tang2005}
	X. Tang, P. Udaya and P. Fan.
	\newblock A new family of nonbinary sequences with three-level correlation property and large linear span. {\em IEEE Trans. Inf. Theory}, 51(8): 2906--2914,  Aug. 2005.
	
	
	\bibitem{Xiao2018}
	Z. Xiao, X. Zeng, C. Li, and Y. Jiang.
	\newblock Binary sequences with period $N$ and nonlinear complexity $N -2$.
	\newblock {\em Cryptogr. Commun.}, 11(4): 735--757, Jul. 2019.
	
		
	
	\bibitem{XingC2003}
	C. Xing, P. V. Kumar and C. Ding.
	\newblock Low-correlation, large linear span sequences from function fields.
	{\em IEEE Trans. Inf. Theory}, 49(6): 1439--1446,  Jun. 2003.
	
	
	
	
	\bibitem{XingLam}
	C. Xing and K. Y. Lam.
	\newblock Sequences with almost perfect linear complexity profiles and curves over finite fields.
	{\em IEEE Trans. Inf. Theory}, 45(5): 1267--1270,  May 1999.
	
	
	
	\bibitem{YanKe}
	F. Yan, P. Ke, and Z. Chang.
	\newblock The symmetric 2-adic complexity of Tang-Gong interleaved sequences from Legendre sequence pair.
	{\em Cryptogr. Commun.},  17(1): 167--179, Feb. 2025.
	
	
	
	
	\bibitem{Yilin}
	L. Yi, X. Zeng, and Z. Sun.
	\newblock On finite length nonbinary sequences
	with large nonlinear complexity over the residue ring $\mathbb{Z}_m$.
	{\em Adv. Math.	Commun.}, 15(4): 701--720, Nov. 2021.
		
	
	
	
	\bibitem{yuan}
	Q.~Yuan, C.~Li, X.~Zeng, T.~Helleseth, and D.~He.
	\newblock Further investigations on nonlinear complexity of periodic binary
	sequences. {\em {IEEE} Trans. Inf. Theory}, 70(7): 5376--5391, Jul. 2024.
	
	


	
	









	
	
	
	
	
	
	%
	%
	%
	%
	
	
	
	%
	
	
\end{thebibliography}
\end{document}